\documentclass[11pt,a4paper]{article}
\usepackage[latin1]{inputenc}
\usepackage{pdflscape}

\usepackage{graphicx,booktabs}
\usepackage{multicol,wrapfig}

\usepackage{amsfonts}
\usepackage{amsmath}
\usepackage{amssymb}
\usepackage{theorem}
\usepackage{upgreek}
\usepackage{xcolor}
\usepackage{float}
\usepackage{rotating}

\theoremstyle{change}
\theorembodyfont{\slshape}
\newtheorem{satz}{Theorem}[section]

\newtheorem{prop}[satz]{Proposition}
\newtheorem{kor}[satz]{Corollary}

\theorembodyfont{\upshape\small}
\newtheorem{bsp}[satz]{Example}
\newtheorem{bem}[satz]{Remark}

\theorembodyfont{\ttfamily}

\newenvironment{proof}{\list{}{\itemindent-\leftmargin}%
                \item\textbf{Proof: }\small}{\hbox{}\hfill\#\newline\endlist\vspace{-2mm}}

\newcommand{\ba}{\begin{equation}}
\newcommand{\ea}{\end{equation}}

\usepackage{dsfont}
\newcommand{\indfkt}{\mathds{1}}

\newcommand{\0}{\mbox{\boldmath $0$}}

\newcommand{\fp}{\mbox{\boldmath $p$}}

\newcommand{\fx}{\mbox{\boldmath $x$}}

\newcommand{\fZ}{\mbox{\boldmath $Z$}}

\newcommand{\mP}{\mbox{\textup{\textbf{P}}}}
\newcommand{\mT}{\mbox{\textup{\textbf{T}}}}

\newcommand{\fSigma}{\mbox{\boldmath $\Sigma$}}

\newcommand{\norm}{\textup{N}}

\newcommand{\bbn}{\mathbb{N}}

\newcommand{\cat}{m}

\newcommand{\iid}{i.\,i.\,d.}
\newcommand{\ie}{i.\,e., }
\newcommand{\eg}{e.\,g., }

\usepackage{natbib}

\usepackage{hyperref}

\begin{document}



\parindent 0cm

\title{Transcripts and Algebraic Distances in Time Series: Stochastic Properties and Nonparametric Dependence Tests}

\author{
Christian H.\ Wei\ss{}\thanks{Department of Mathematics and Statistics, Helmut Schmidt University, 22043 Hamburg, Germany}\ \thanks{Corresponding author. E-Mail: \href{mailto:weissc@hsu-hh.de}{\nolinkurl{weissc@hsu-hh.de}}. ORCID: \href{https://orcid.org/0000-0001-8739-6631}{\nolinkurl{0000-0001-8739-6631}}.}
\and 
Jos\'e M.\ Amig\'o\thanks{Centro de Investigaci\'on Operativa, Universidad Miguel Hern\'andez, 03202 Elche, Spain}\ \thanks{E-Mail: \href{mailto:jm.amigo@umh.es}{\nolinkurl{jm.amigo@umh.es}}. ORCID: \href{https://orcid.org/0000-0002-1642-1171}{\nolinkurl{0000-0002-1642-1171}}.}
}

\maketitle

\begin{abstract}
The use of ordinal patterns (OPs) for analyzing the dependence structure of univariate and continuously distributed processes has gained popularity in recent years. This research goes one step further and considers the transcripts being computed from successive OPs in the time series. Transcripts constitute a kind of ``difference'' between successive OPs and thus naturally relate to two algebraic distances between OPs, the Cayley and Kendall edit distances. The original time series is transformed into a sequence
of transcripts or distances, respectively, and important stochastic properties thereof are derived. It is shown that these properties differ substantially among different types of original processes. This motivates the development of various statistics based on transcripts and edit distances in order to investigate the dependence structure of the original process. In particular, the asymptotic distribution of these statistics under the null hypothesis of serial independence is derived, which is then used to implement nonparametric tests for serial dependence. A simulation study shows that these novel dependence tests have appealing power properties, often outperforming former OP-based dependence tests. A concluding real-world data example illustrates the application and interpretation of the proposed approaches in practice.

\medskip
\noindent
\textsc{Key words:}
edit distances; nonparametric tests; ordinal patterns; serial dependence; transcripts; univariate time series.
\end{abstract}

\section{Introduction}
\label{Introduction}

Among the symbolic representations of a univariate
real-valued time series, the symbolization scheme that ranks the entries in a sliding window according to magnitude has been very well received in the data analysis community since its inception in 2002 \citep{bandt02}. The rank vectors obtained in this way are called ordinal patterns (OPs), their length being defined as the size of the window used for symbolization, and the resulting symbolic time series is called an ordinal representation of the original time series. Reasons for the popularity of ordinal representations include conceptual simplicity, fast computation, relative robustness to additive noise, and the possibility of computation in real time (since it is not necessary to know the range of values). More important for our purposes, OPs of a given length can be viewed as permutations of a set whose cardinality is the length of the OPs. Therefore, unlike what happens with other symbolization or discretization schemes, the symbols of an ordinal representation belong to a group, namely the set of all such permutations
endowed with function composition.

\smallskip
The algebraic structure of the OPs provides additional leverage for the analysis of time series in ordinal representations. To the best of our knowledge, the first proposal to harness the algebraic nature of OPs, made by \citet{monetti09}, was the concept of transcript between two group elements. As we will see below, transcripts can be interpreted as a generalization of the concept of difference (or subtraction) in additive groups (such as the real numbers). Recently, a relationship between transcripts and certain edit distances has drawn the attention of data analysts \citep{amigo25}. Indeed, the Cayley and Kendall edit distances between two permutations (viewed as symbolic words) can be formulated as norms of the transcript of those permutations. This connection makes transcripts even more interesting in time series analysis.

\smallskip
Traditional transcript-based tools in time series analysis include entropies, mutual information, divergences, statistical complexities, coupling complexity coefficients, and more \citep{monetti09,amigo12,monetti13}. Typical applications include synchronization detection and information directionality in coupled dynamics, classification and discrimination of random and deterministic processes, and characterization of interactions \citep{amigo23}. These well-known tools and applications have to be complemented by the new arrivals: the Cayley and Kendall edit distances. These two metrics have emerged as promising tools in practical applications due to their satisfactory results in generalized synchronization detection \citep{amigo25}.

\bigskip
In this paper, we study the performance of transcripts, Cayley distance and Kendall distance in the detection of serial dependence in univariate time series. More specifically, we derive their stochastic properties and, consequently, propose transcript-based (including distance-based) statistics for nonparametric serial dependence tests. These tests are carried out using
both numerical simulations and real-world data. As it turns out, the novel statistics generally outperform previous OP-based statistics \citep{weiss22}, which confirms the transcripts, Cayley and Kendall distances as useful tools for data analysts.

\smallskip
This paper is structured as follows. In Section~\ref{Preliminaries}, we introduce the conceptual background: ordinal patterns, permutations, symmetric groups, transcripts, and the Cayley and Kendall distances. Specifically, Section~\ref{The symmetric group of degree 3} focuses on the symmetric group of degree 3, because OPs of length 3 are particularly relevant for applications and are, thus, used in the rest of this paper. The study of the stochastic properties of transcripts and, hence, of distances begins in Section~\ref{Distributional Properties of Transcripts}, first with the
marginal distributions (Section~\ref{Marginal Distribution of Transcripts}), followed by the bivariate
distributions (Section~\ref{Bivariate Distributions of Transcripts}). With serial-dependence testing in mind, Section~\ref{Asymptotics of Transcript Statistics} is devoted to the asymptotics of transcript statistics under the assumption that the original process is independent and identically distributed (\iid). In addition to the mean Cayley and Kendall distances, we use another two transcript-based statistics: the deviance (or Kullback--Leibler divergence) and the Pearson statistic for goodness-of-fit. The empirical sizes and power values of the corresponding tests are measured using numerical simulations in Section~\ref{Simulation-based Performance Analyses} and real-world data in Section~\ref{Real-World Data Application}. A summary, the main conclusions, and an outlook are the contents of the final Section~\ref{Conclusions}.

\numberwithin{satz}{subsection}


\section{Preliminaries}
\label{Preliminaries}


\subsection{Permutations, transcripts, and distances}
\label{Permutations, transcripts, and distances}

Let $(X_{t})=(X_{t})_{t\in \mathbb{Z}=\{\ldots ,-1,0,1,\ldots \}}$ be a 
real-valued and continuously distributed process, and let $m\in \mathbb{N}=\{1,2,\ldots \}$ with $m\geq 2$
be the length of the considered \emph{ordinal patterns} (``$m$-OPs'', or just ``OPs'' if~$m$ is clear), see %
\citet{bandt02}. Let $S_{m}$ denote the symmetric group of degree~$m$, which
consists of the $m!$ possible permutations of the integers $\{1,\ldots ,m\}$
endowed with function composition; the group $S_{m}$ is non-abelian for $m\geq 3$. The permutations~$\pi \in S_{m}$ can be
used in various ways to represent the $m!$ different OPs of a vector $%
\mbox{\boldmath $x$}=(x_{1},\ldots ,x_{m})\in \mathbb{R}^{m}$, see %
\citet{berger19}. Here, we consider the \emph{permutation representation},
\ie we use $\pi =(i_{1},\ldots ,i_{m})\in S_{m}$ for expressing
that permutation, which causes an ascending order of the components of~$\fx$: 
\begin{equation}
x_{i_{1}}\leq x_{i_{2}}\leq \ldots \leq x_{i_{m}},\quad \text{and}\quad
i_{k-1}<i_{k}\text{ if }x_{i_{k-1}}=x_{i_{k}}\text{ for }k\geq 2.
\label{ordpattern}
\end{equation}
The second case in \eqref{ordpattern} refers to the possible occurrence of ties in~$\fx$. Since we assume the data-generating process $(X_t)$ to be continuously distributed, ties happen with probability zero. However, in real-world applications with limited measurement accuracy, ties may happen anyway, but we assume them to occur at an negligible rate.

\smallskip
By mapping the $t$th segment $\mbox{\boldmath $X$}_{t}=(X_{t},\ldots
,X_{t+m-1})$ of the process $(X_{t})$ onto its corresponding OP~$\pi _{t}$,
the originally real-valued process $(X_{t})$ is discretized and transformed
into the OP-series $(\pi _{t})=(\pi _{t})_{t\in \mathbb{Z}}$. 
The OP-series $(\pi _{t})$ is called the \textit{ordinal representation} of length~$m$ of
the time series $(x_{t})=(x_{t})_{t\in \mathbb{Z}}$, a realization of $%
(X_{t})$.

\medskip
Choosing for the permutation $\pi :n\mapsto i_{n}$ ($1\leq n,i_{n}\leq m$) the one-line notation $(i_{1},i_{2},\ldots ,i_{m})$ from among other possibilities (\eg matrix forms) allows $\pi$ to be viewed as
the word $i_{1}i_{2}\ldots i_{m}$ composed by letters from the alphabet $%
\{1,2,\ldots,m\}$. In turn, this will allow us below to leverage edit distances
between permutations for the statistical analysis of OP-series $(\pi _{t})$.

\medskip
As said above, the symmetric group~$S_{m}$ forms an algebraic group together
with the composition \textquotedblleft $\circ $\textquotedblright\ of
permutations. Following the usual convention in the literature, here the composition $\pi
_{1}\circ \pi _{2}$ is meant as a \textquotedblleft right
action\textquotedblright , \ie the permutation $\pi _{1}$ acts first and $%
\pi _{2}$ acts second, so that $(\pi _{1}\circ \pi _{2})(n)=\pi _{2}\big(\pi
_{1}(n)\big)$ (usually written as $(n)(\pi _{1}\circ \pi _{2})=\big((n)\pi _{1}\big)\pi_{2}$) for $n=1,\ldots,m$. Therefore, if $\pi _{1}=(i_{1},i_{2},\ldots,i_{m})$
and $\pi _{2}=(j_{1},j_{2},\ldots,j_{m})$, then%
\begin{equation}
\pi _{1}\circ \pi _{2}=(j_{i_{1}},j_{i_{2}},\ldots,j_{i_{m}}).  \label{pi1 x p2}
\end{equation}
Precisely, to exploit the algebraic structure of $S_{m}$ in
the analysis of ordinal representations, \citet{monetti09} introduced the
concept of \emph{transcripts} between OPs. Formally, the transcript $\tau
:S_{m}\times S_{m}\rightarrow S_{m}$ maps two ordinal patterns $\pi _{1},\pi
_{2}\in S_{m}$ on the OP~$\pi \in S_{m}$ defined by%
\begin{equation}
\pi =\tau (\pi _{1},\pi _{2}):=\pi _{2}\circ \pi _{1}^{-1},
\label{transcrip}
\end{equation}
called the transcript between $\pi_{1}$ and $\pi_{2}$. Hence, $\pi =\tau (\pi _{1},\pi _{2})$ is equal to that permutation which
transforms~$\pi _{1}$ into~$\pi _{2}$, in the sense that $\pi \circ \pi_{1}=\pi _{2}$. For this reason, $\tau (\pi _{1},\pi _{2})$ is sometimes called the transcript from the source $\pi_{1}$ to the source $\pi_{2}$ \citep{monetti09}.
Considering the analogy to transcripts in an additive group, where $\tau (x,y)=y-x$, we may interpret the transcript \eqref{transcrip} as a kind of ``difference'' or ``dissimilarity'' between the OPs~$\pi _{1}$ and~$\pi _{2}$. 
As such, transcripts are related to the following two types
of algebraic distance between two permutations \citep{amigo25}:

\begin{itemize}
\item The \emph{Cayley distance} $d_{\textup{C}}:S_{m}\times
S_{m}\rightarrow \mathbb{N}_{0}=\{0,1,\ldots \}$ between $\pi _{1},\pi
_{2}\in S_{m}$ is defined as the minimum number of transpositions needed to
transform~$\pi _{1}$ into~$\pi _{2}$.

\item The \emph{Kendall distance} $d_{\textup{K}}:S_{m}\times
S_{m}\rightarrow \mathbb{N}_{0}=\{0,1,\ldots \}$ between $%
\pi _{1},\pi _{2}\in S_{m}$ is defined as the minimum number of adjacent
transpositions needed to transform~$\pi _{1}$ into~$\pi _{2}$.
\end{itemize}

By definition,
\begin{equation}
d_{\textup{C}}(\pi_{1},\pi_{2})\leq d_{\textup{K}}(\pi_{1},\pi_{2})
\label{d_C <= d_K}
\end{equation}%
for all $\pi_{1},\pi_{2}\in S_{m}$. A quantitative relationship between the Cayley and Kendall distances and transcripts is given in the following proposition (see \citealp{nguyen24,kendall1938}).

\begin{prop}
\label{PropComput dC dK} (a) Let $\pi%
=(i_{1},\ldots,i_{m})\in S_{m}$ and $C(\pi)$ be the number of
cycles (including $1$-cycles) in the cycle factorization of the permutation $%
\pi$. Then,%
\begin{equation}
d_{\textup{C}}(\pi_{1},\pi_{2})=m-C\big(\tau(\pi_{1},\pi_{2})\big)  \label{Comput d_C}
\end{equation}%
for all $\pi_{1},\pi_{2} \in S_{m}$.

\smallskip
(b) Let $I(\pi)$ be the number of inversions in the permutation $\pi$, \ie the number of ordered pairs $(i_{r},i_{s})$, $1\leq
r<s\leq m$, such that $i_{r}>i_{s}$. Then,
\begin{equation}
d_{\textup{K}}(\pi_{1},\pi_{2})=I\big(\tau(\pi_{1},\pi_{2})\big)  \label{Comput d_K}
\end{equation}%
for all $\pi_{1},\pi_{2} \in S_{m}$.
\end{prop}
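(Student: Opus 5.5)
The plan is to reduce both distances to a distance from the identity, using the group structure, and then to treat each part by a classical counting argument.

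\emph{Reduction to the identity.} Write $\pi=\tau(\pi_1,\pi_2)=\pi_2\circ\pi_1^{-1}$, so that $\pi\circ\pi_1=\pi_2$. Transforming the word $\pi_1$ by a transposition of two entries (or of two positions) is composition of $\pi_1$ on one fixed side with a transposition $\sigma$. In the right-action convention \eqref{pi1 x p2}, swapping positions $r,s$ of the word $\pi_1$ gives $(r\,s)\circ\pi_1$. Hence a sequence of $k$ position swaps carries $\pi_1$ to $\pi_2$ exactly when $\sigma_k\circ\cdots\circ\sigma_1\circ\pi_1=\pi_2$, that is, when $\sigma_k\circ\cdots\circ\sigma_1=\pi$. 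The same sequence carries the identity $e$ to $\pi$. Adjacent swaps correspond to $\sigma_j=(r\;r{+}1)$, so this reduction works for both distances. Therefore
\[
d_{\textup{C}}(\pi_1,\pi_2)=d_{\textup{C}}(e,\pi), \qquad d_{\textup{K}}(\pi_1,\pi_2)=d_{\textup{K}}(e,\pi),
\]
and these are the minimal numbers of transpositions, respectively adjacent transpositions, whose product is $\pi$. The first thing to check is that the left/right conventions line up. If they did not, one would get $\pi_1^{-1}\circ\pi_2$ instead of $\pi$. This is harmless, because $C$ and $I$ are invariant under inversion and conjugation, and $\pi_1^{-1}\circ\pi_2$ is conjugate to $\pi_2\circ\pi_1^{-1}$. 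For $I$ we should use only $I(\pi)=I(\pi^{-1})$ together with the correct side, so the side bookkeeping in the Kendall case is the main thing to verify.

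\emph{Part (a).} Composing any permutation with a transposition $(a\,b)$ changes the number of cycles by exactly $\pm1$. If $a$ and $b$ lie in the same cycle, that cycle splits into two; if they lie in different cycles, the two cycles merge. Starting from $e$, which has $C(e)=m$, reaching $\pi$ therefore requires at least $m-C(\pi)$ transpositions. Conversely, a $k$-cycle is a product of $k-1$ transpositions, e.g.\ $(a_1\ldots a_k)=(a_1a_k)\cdots(a_1a_2)$ up to ordering convention. Summing $k-1$ over all cycles gives exactly $m-C(\pi)$, which establishes \eqref{Comput d_C}.

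\emph{Part (b).} Applying an adjacent swap of positions $r,r+1$ to a word changes the inversion count by exactly $\pm1$, since only the relative order of that single pair changes. Since $I(e)=0$, at least $I(\pi)$ adjacent swaps are needed to reach $\pi$. Conversely, bubble sort applied to the word $\pi$ swaps only adjacent inverted pairs. Each such swap lowers $I$ by one, so bubble sort reaches $e$ in exactly $I(\pi)$ steps, and reversing the sequence builds $\pi$ from $e$. This gives \eqref{Comput d_K}.

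The step most likely to need care is the convention check in the reduction, namely that "transforming $\pi_1$ into $\pi_2$" acts on positions and that the product of the swaps equals $\pi_2\circ\pi_1^{-1}$ rather than $\pi_1^{-1}\circ\pi_2$. This matters for $I$, because inversion counts are not conjugation invariant. If the swaps act on values rather than positions, I will redo the reduction with the other side and use $I(\pi)=I(\pi^{-1})$.
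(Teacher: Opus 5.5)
Your proof is correct. The paper gives no argument of its own for this proposition. It cites Nguyen et al.\ and Kendall (1938) for the two formulas, and later quotes the right-invariance \eqref{d(p1,p2) invariance} from Amig\'o et al. Your reduction step is a proof of exactly that right-invariance. Your cycle-splitting bound for (a) and your inversion-count and bubble-sort argument for (b) are the classical proofs behind the cited results, so you have made self-contained what the paper takes from the literature.

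The convention check comes out as you hoped. In the right-action convention \eqref{pi1 x p2}, swapping positions $r,s$ of the word $\pi_1$ gives $(r\,s)\circ\pi_1$. Hence the swaps multiply to $\pi_2\circ\pi_1^{-1}=\tau(\pi_1,\pi_2)$ exactly, with no inversion or conjugation needed.

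Two parts of your hedging should be removed.

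First, the claim that landing on $\pi_1^{-1}\circ\pi_2$ would be harmless is true only for $C$. You say yourself a sentence later that $I$ is not conjugation invariant.

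Second, the fallback for value swaps cannot work, because part (b) is false under adjacent value swaps. Take $\pi_1=\pi^{[4]}=(2,3,1)$ and $\pi_2=\pi^{[2]}=(1,3,2)$. Exchanging the values $1$ and $2$ turns $231$ into $132$ in one step. However, $\tau(\pi^{[4]},\pi^{[2]})=\pi^{[6]}$ has $I=3$, which matches $d_{\textup{K}}(\pi^{[4]},\pi^{[2]})=3$ in \eqref{dK matrix}. So the positional reading is not merely one admissible convention: it is the one under which (b) holds, and it is the one your main line uses.
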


The upper bounds of $d_{\textup{C}}$ and $d_{\textup{K}}$ in $S_{m}$ follow from equations \eqref{Comput d_C} and \eqref{Comput d_K}, namely,
\begin{equation}
d_{\textup{C}}(\pi_{1},\pi_{2})\leq m-1 \;\;\mbox{and} \;\;
d_{\textup{K}}(\pi_{1},\pi_{2})\leq \frac{m(m-1)}{2}.
\label{d_CK max}
\end{equation}%

Cayley and Kendall distances are examples of edit distances, \ie distances (in axiomatic sense) between two string of symbols, defined as the minimum number of allowed edit operations (insertions, deletions, substitutions, transpositions) that transform one string into the other. Edit distances between permutations were proposed by \citet{sorensen07}.

\medskip
In time series analysis---the framework of this paper---transcripts can 
refer (i) to pairs of elements from a single time series $(X_{t})$ (one element shifted in time with respect to the other) to detect, \eg serial dependence or, else, (ii) to pairs of (simultaneous or time-shifted) elements from two coupled time series $(X_{t})$ and $(Y_{t})$ to study, \eg different synchronization regimes. If ($\pi_{t}$) (resp.\ $(\rho_{t})$) is the ordinal representation of length~$m$ of $(X_{t})$ (resp.\ $(Y_{t})$), then the transcripts   
\begin{equation}
\tau_{t}(\pi_{t},\pi_{t+\Lambda})=(\pi_{t+\Lambda}\circ\pi_{t}^{-1}) \;\; 
(\mbox{resp.} \; \tau_{t}(\pi_{t},\rho_{t+\Lambda})=\rho_{t+\Lambda}\circ\pi_{t}^{-1}) 
\label{self and cross}
\end{equation}%
are called \textit{self-transcripts} of $X_{t}$ (resp.\ \textit{cross-transcripts} of $X_{t}$ and $Y_{t}$) with \textit{coupling delay} $\Lambda\in \mathbb{Z}$ \citep{amigo12}. In this paper, our focus is on the self-transcripts obtained from a univariate time series (henceforth simply called transcripts), whereas an analysis of the cross-transcripts in multivariate time series is recommended for future research.

\subsection{The symmetric group of degree 3}
\label{The symmetric group of degree 3}

In what follows, we solely focus on OPs of length $m=3$ (so ``3-OPs'', to be more precise), which is the most
common choice in practice, see \citet{bandt07,bandt19}. Then, the six
possible OPs (in lexicographic order) are the following:

\ba
\label{OrdPatt3}
\begin{array}{@{}lcccccc@{}}
\text{Perm.:} & (1,2,3) & (1,3,2) & (2,1,3) & (2,3,1) & (3,1,2) & (3,2,1) \\
\text{Obs.:}\\[-3ex]
 & \includegraphics[viewport=75 85 135 145, clip=, scale=0.65]{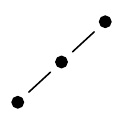}
 & \includegraphics[viewport=75 85 135 145, clip=, scale=0.65]{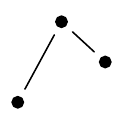} 
 & \includegraphics[viewport=75 85 135 145, clip=, scale=0.65]{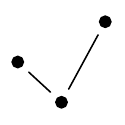} 
 & \includegraphics[viewport=75 85 135 145, clip=, scale=0.65]{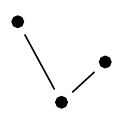} 
 & \includegraphics[viewport=75 85 135 145, clip=, scale=0.65]{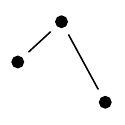} 
 & \includegraphics[viewport=75 85 135 145, clip=, scale=0.65]{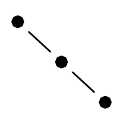} 
\end{array}
\ea
For the case $m=3$ considered in this research, the Cayley table of permutations is given by (see equation \eqref{pi1 x p2}) 

\ba
\label{tabOP}
\begin{array}{c|cccccc}
\toprule
\text{Composition } \circ & \pi^{[1]} & \pi^{[2]} & \pi^{[3]} & \pi^{[4]} & \pi^{[5]} & \pi^{[6]} \\
\midrule
\pi^{[1]} = (1,2,3) & \pi^{[1]} & \pi^{[2]} & \pi^{[3]} & \pi^{[4]} & \pi^{[5]} & \pi^{[6]} \\
\pi^{[2]} = (1,3,2) & \pi^{[2]} & \pi^{[1]} & \pi^{[4]} & \pi^{[3]} & \pi^{[6]} & \pi^{[5]} \\
\pi^{[3]} = (2,1,3) & \pi^{[3]} & \pi^{[5]} & \pi^{[1]} & \pi^{[6]} & \pi^{[2]} & \pi^{[4]} \\
\pi^{[4]} = (2,3,1) & \pi^{[4]} & \pi^{[6]} & \pi^{[2]} & \pi^{[5]} & \pi^{[1]} & \pi^{[3]} \\
\pi^{[5]} = (3,1,2) & \pi^{[5]} & \pi^{[3]} & \pi^{[6]} & \pi^{[1]} & \pi^{[4]} & \pi^{[2]} \\
\pi^{[6]} = (3,2,1) & \pi^{[6]} & \pi^{[4]} & \pi^{[5]} & \pi^{[2]} & \pi^{[3]} & \pi^{[1]} \\
\bottomrule
\end{array}
\ea

While $(\pi^{[1]})^{-1}=\pi^{[1]}=\mathrm{id}$ trivially holds as it constitutes the identity element of~$S_3$, note that
\begin{equation}
(\pi^{[2]})^{-1}=\pi^{[2]}, \; (\pi^{[3]})^{-1}=\pi^{[3]}, \;
(\pi^{[6]})^{-1}=\pi^{[6]},
\label{order2}
\end{equation}%
whereas
\begin{equation}
(\pi^{[4]})^{-1}=\pi^{[5]}, \; (\pi^{[5]})^{-1}=\pi^{[4]}.
\label{order3}
\end{equation}%

Hence, the Cayley table of transcripts is equal to (see equation \eqref{transcrip}) 

\ba
\label{tab_tr}
\begin{array}{c|cccccc}
\toprule
\text{Transcript } \tau & \pi^{[1]} & \pi^{[2]} & \pi^{[3]} & \pi^{[4]} & \pi^{[5]} & \pi^{[6]} \\
\midrule
\pi^{[1]} = (1,2,3) & \pi^{[1]} & \pi^{[2]} & \pi^{[3]} & \pi^{[4]} & \pi^{[5]} & \pi^{[6]} \\
\pi^{[2]} = (1,3,2) & \pi^{[2]} & \pi^{[1]} & \pi^{[5]} & \pi^{[6]} & \pi^{[3]} & \pi^{[4]} \\
\pi^{[3]} = (2,1,3) & \pi^{[3]} & \pi^{[4]} & \pi^{[1]} & \pi^{[2]} & \pi^{[6]} & \pi^{[5]} \\
\pi^{[4]} = (2,3,1) & \pi^{[5]} & \pi^{[6]} & \pi^{[2]} & \pi^{[1]} & \pi^{[4]} & \pi^{[3]} \\
\pi^{[5]} = (3,1,2) & \pi^{[4]} & \pi^{[3]} & \pi^{[6]} & \pi^{[5]} & \pi^{[1]} & \pi^{[2]} \\
\pi^{[6]} = (3,2,1) & \pi^{[6]} & \pi^{[5]} & \pi^{[4]} & \pi^{[3]} & \pi^{[2]} & \pi^{[1]} \\
\bottomrule
\end{array}
\ea

From this table, we can derive the sets~$\Pi _{k}$ of OP-pairs corresponding
to the transcript~$\pi ^{[k]}$, $k=1,\ldots,m!$, \ie 
\begin{equation}
\Pi _{k}\ =\ \big\{(\pi _{1},\pi _{2})\in S_{m}^{2}\ |\ \tau (\pi _{1},\pi
_{2})=\pi ^{[k]}\big\}\ =\ \big\{(\pi _{1},\pi _{2})\in S_{m}^{2}\ |\
\pi _{2}=\pi ^{[k]}\circ \pi _{1}\big\}.  \label{tr_pairs_k}
\end{equation}%
These are given by 
\begin{equation}
\begin{array}{@{}l}
\Pi _{1}=\Big\{(\pi ^{[1]},\pi ^{[1]}),(\pi ^{[2]},\pi
^{[2]}),(\pi ^{[3]},\pi ^{[3]}),(\pi ^{[4]},\pi
^{[4]}),(\pi ^{[5]},\pi ^{[5]}),(\pi ^{[6]},\pi
^{[6]})\Big\}, \\ 
\Pi _{2}=\Big\{(\pi ^{[1]},\pi ^{[2]}),(\pi ^{[2]},\pi
^{[1]}),(\pi ^{[3]},\pi ^{[4]}),(\pi ^{[4]},\pi
^{[3]}),(\pi ^{[5]},\pi ^{[6]}),(\pi ^{[6]},\pi
^{[5]})\Big\}, \\ 
\Pi _{3}=\Big\{(\pi ^{[1]},\pi ^{[3]}),(\pi ^{[2]},\pi
^{[5]}),(\pi ^{[3]},\pi ^{[1]}),(\pi ^{[4]},\pi
^{[6]}),(\pi ^{[5]},\pi ^{[2]}),(\pi ^{[6]},\pi
^{[4]})\Big\}, \\ 
\Pi _{4}=\Big\{(\pi ^{[1]},\pi ^{[4]}),(\pi ^{[2]},\pi
^{[6]}),(\pi ^{[3]},\pi ^{[2]}),(\pi ^{[4]},\pi
^{[5]}),(\pi ^{[5]},\pi ^{[1]}),(\pi ^{[6]},\pi
^{[3]})\Big\}, \\ 
\Pi _{5}=\Big\{(\pi ^{[1]},\pi ^{[5]}),(\pi ^{[2]},\pi
^{[3]}),(\pi ^{[3]},\pi ^{[6]}),(\pi ^{[4]},\pi
^{[1]}),(\pi ^{[5]},\pi ^{[4]}),(\pi ^{[6]},\pi
^{[2]})\Big\}, \\ 
\Pi _{6}=\Big\{(\pi ^{[1]},\pi ^{[6]}),(\pi ^{[2]},\pi
^{[4]}),(\pi ^{[3]},\pi ^{[5]}),(\pi ^{[4]},\pi
^{[2]}),(\pi ^{[5]},\pi ^{[3]}),(\pi ^{[6]},\pi
^{[1]})\Big\}.%
\end{array}
\label{tr_pairs}
\end{equation}

For all $\pi _{1},\pi_{2} \in S_{m}$, it holds $\tau (\pi _{2},\pi_{1})=\tau (\pi _{1},\pi_{2})^{-1}$. Due to equation \eqref{order2}, both $(\pi _{1},\pi_{2})$ and $(\pi _{2},\pi_{1})$ belong to the same set $\Pi _{k}$ according to equation \eqref{tr_pairs} if $k=1,2,3,6$.

\medskip
Furthermore, the Cayley distance $d_{\textup{C}}(\pi _{1},\pi _{2})$ in $S_{3}$ is
given by the matrix (see \citealp{amigo25}) 

\ba
\label{dC matrix}
\begin{array}{c|cccccc}
\toprule
\text{Distance $d_{\textup{C}}$ } & \pi^{[1]} & \pi^{[2]} & \pi^{[3]} & \pi^{[4]} & \pi^{[5]} & \pi^{[6]} \\
\midrule
\pi^{[1]} = (1,2,3) & 0 & 1 & 1 & 2 & 2 & 1 \\
\pi^{[2]} = (1,3,2) & 1 & 0 & 2 & 1 & 1 & 2 \\
\pi^{[3]} = (2,1,3) & 1 & 2 & 0 & 1 & 1 & 2 \\
\pi^{[4]} = (2,3,1) & 2 & 1 & 1 & 0 & 2 & 1 \\
\pi^{[5]} = (3,1,2) & 2 & 1 & 1 & 2 & 0 & 1 \\
\pi^{[6]} = (3,2,1) & 1 & 2 & 2 & 1 & 1 & 0 \\
\bottomrule
\end{array}
\ea

while the Kendall distance $d_{\textup{K}}(\pi _{1},\pi _{2})$ in $S_{3}$ is given by
the matrix (see \citealp{amigo25}) 

\ba
\label{dK matrix}
\begin{array}{c|cccccc}
\toprule
\text{Distance $d_{\textup{K}}$ } & \pi^{[1]} & \pi^{[2]} & \pi^{[3]} & \pi^{[4]} & \pi^{[5]} & \pi^{[6]} \\
\midrule
\pi^{[1]} = (1,2,3) & 0 & 1 & 1 & 2 & 2 & 3 \\
\pi^{[2]} = (1,3,2) & 1 & 0 & 2 & 3 & 1 & 2 \\
\pi^{[3]} = (2,1,3) & 1 & 2 & 0 & 1 & 3 & 2 \\
\pi^{[4]} = (2,3,1) & 2 & 3 & 1 & 0 & 2 & 1 \\
\pi^{[5]} = (3,1,2) & 2 & 1 & 3 & 2 & 0 & 1 \\
\pi^{[6]} = (3,2,1) & 3 & 2 & 2 & 1 & 1 & 0 \\
\bottomrule
\end{array}
\ea

\begin{bem}
\label{bemCayleyKendall}
The Cayley (resp.\ Kendall) distance allows $S_{m}$ to be represented by its adjacency graphs, \ie a graph where every node $\pi
=(i_{1},i_{2},\ldots,i_{m})\in S_{m}$ is connected to its $\frac{(m-1)}{2}$ (resp.\ $m-1$) nearest neighbors, namely those permutations that differ from  $\pi $ (viewed as symbolic strings) due to transpositions (resp.\ adjacent transpositions) of two symbols $i_{j},i_{k}$ for $1\leq j<k\leq m$ (resp.\ $i_{j},i_{j+1}$ for $1\leq j\leq m-1
$). Figure~\ref{Graphs} shows the Cayley and Kendall adjacency graphs of $S_{3}$. In other words, Figure~\ref{Graphs} (a) and (b) visualize the distance matrices \eqref{dC matrix} and \eqref{dK matrix}, respectively. The fact that the Cayley graph has edge crossings while the Kendall graph has not is because $d_{\textup{K}}$ involves only adjacent transpositions while $d_{\textup{C}}$ allows also non-adjacent ones. For the Kendall adjacency graph of $S_{4}$, see \citet[Figure~2]{amigo25}.
\end{bem}

\begin{figure}[th]
\centering
(a)\includegraphics[viewport=15 20 125 130, clip=, scale=1]{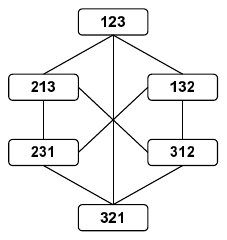}
\qquad
(b)\includegraphics[viewport=15 20 125 130, clip=, scale=1]{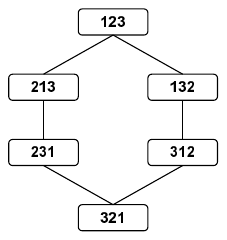}
\caption{The Cayley (a) and Kendall (b) adjacency graphs of the symmetric group $S_{3}$, nodes corresponding to permutations $(i_{1},i_{2},i_{3})$ written as symbolic strings $i_{1}i_{2}i_{3}$. All nodes in panel (a) have valency~3 and the edges between them correspond to Cayley distance~1 (\ie they differ in a single transposition). All nodes in panel (b) have valency~2 and the edges between them have Kendall distance~1 (\ie they differ in a single adjacent transposition). }
\label{Graphs}
\end{figure}

\smallskip
According to \citet{amigo25}, the right-invariance 
\begin{equation}
d_{\textup{C},\textup{K}}(\pi _{1},\pi _{2})=d_{\textup{C},\textup{K}}(\mathrm{id},\pi _{2}\circ\pi_{1}^{-1})=d_{\textup{C},\textup{K}}\big(\mathrm{id},\tau (\pi _{1},\pi _{2})\big)%
\text{,}  \label{d(p1,p2) invariance}
\end{equation}%
holds for all $\pi _{1},\pi _{2}\in S_{m}$, where $d_{\textup{C},\textup{K}}$ stands for both the
Cayley distance $d_{\textup{C}}$ and the Kendall distance $d_{\textup{K}}$, and $\mathrm{id}$
denotes the identity permutation $(1,2,\ldots,m-1,m)$. By equation \eqref{d(p1,p2) invariance}, the distance $d_{\textup{C},\textup{K}}(\pi _{1},\pi _{2})$ in the symmetric group $S_{m}$ can be read in
the $\pi_{1}=\mathrm{id}$ row of the corresponding distance
matrix $(d_{\textup{C},\textup{K}}\big(\pi _{1},\pi _{2})\big)_{\pi _{1},\pi _{2}\in S_{m}}$. In the
case $m=3$, the first rows of the matrices \eqref{dC matrix} and \eqref{dK matrix} spell out the following relationships between $d_{\textup{C},\textup{K}}(\pi _{1},\pi _{2})$ and the sets $\Pi_{k}$ in equation \eqref{tr_pairs}.

\begin{prop}
\label{PropDistancesCK}
It holds:
\begin{description}
\item[(a)] $d_{\textup{C},\textup{K}}(\pi _{1},\pi _{2})=0$ iff $\tau (\pi _{1},\pi
_{2})=\pi ^{[1]}$, \ie $(\pi _{1},\pi _{2})\in \Pi _{1}$.

\item[(b)] $d_{\textup{C}}(\pi _{1},\pi _{2})=1$ iff $\tau (\pi _{1},\pi
_{2})\in \{\pi ^{[2]},\pi ^{[3]},\pi ^{[6]}\}$, \ie $%
(\pi _{1},\pi _{2})\in \Pi _{2}\cup \Pi _{3}\cup \Pi _{6}$.

\item[(c)] $d_{\textup{K}}(\pi _{1},\pi _{2})=1$ iff $\tau (\pi _{1},\pi
_{2})\in \{\pi ^{[2]},\pi ^{[3]}\}$, \ie $(\pi _{1},\pi
_{2})\in \Pi _{2}\cup \Pi _{3}$.

\item[(d)] $d_{\textup{C},\textup{K}}(\pi _{1},\pi _{2})=2$ iff $\tau (\pi _{1},\pi
_{2})\in \{\pi ^{[4]},\pi ^{[5]}\}$, \ie $(\pi _{1},\pi
_{2})\in \Pi _{4}\cup \Pi _{5}$.

\item[(e)] $d_{\textup{K}}(\pi _{1},\pi _{2})=3$ iff $\tau (\pi _{1},\pi
_{2})=\pi ^{[6]}$, \ie $(\pi _{1},\pi _{2})\in \Pi _{6}$.
\end{description}
\end{prop}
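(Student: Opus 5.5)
The plan is to reduce everything to the identity row of the distance matrices via the right-invariance \eqref{d(p1,p2) invariance}. Since $d_{\textup{C},\textup{K}}(\pi_1,\pi_2)=d_{\textup{C},\textup{K}}\big(\mathrm{id},\tau(\pi_1,\pi_2)\big)$, the value of either distance is a function of the transcript alone. By definition \eqref{tr_pairs_k}, $\tau(\pi_1,\pi_2)=\pi^{[k]}$ is equivalent to $(\pi_1,\pi_2)\in\Pi_k$. Each statement (a)--(e) is therefore equivalent to identifying the set of $k$ for which $d_{\textup{C}}(\mathrm{id},\pi^{[k]})$ or $d_{\textup{K}}(\mathrm{id},\pi^{[k]})$ takes the given value.

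To obtain these values independently of the tables, I will use Proposition~\ref{PropComput dC dK} with $m=3$: $d_{\textup{C}}(\mathrm{id},\pi)=3-C(\pi)$ and $d_{\textup{K}}(\mathrm{id},\pi)=I(\pi)$. The cycle structures are as follows. The identity $\pi^{[1]}$ has three $1$-cycles. Each of $\pi^{[2]}=(1,3,2)$, $\pi^{[3]}=(2,1,3)$ and $\pi^{[6]}=(3,2,1)$ is a single transposition, giving two cycles; these are exactly the involutions listed in \eqref{order2}. Each of $\pi^{[4]}$ and $\pi^{[5]}$ is a $3$-cycle, since they have order~3 by \eqref{order3}, giving one cycle. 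This yields the Cayley values $0,1,1,2,2,1$ for $k=1,\ldots,6$. Counting inversions in the one-line notation gives $I=0,1,1,2,2,3$ for $k=1,\ldots,6$. Both lists agree with the first rows of \eqref{dC matrix} and \eqref{dK matrix}.

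Reading off the level sets then gives the five claims:
\begin{itemize}
\item distance $0$ only for $k=1$ in both metrics, which is (a);
\item Cayley distance $1$ for $k\in\{2,3,6\}$, which is (b);
\item Kendall distance $1$ for $k\in\{2,3\}$, which is (c);
\item distance $2$ for $k\in\{4,5\}$ in both metrics, which is (d);
\item Kendall distance $3$ only for $k=6$, which is (e).
\end{itemize}
The "iff" in each case holds because the transcript determines the distance, and every pair $(\pi_1,\pi_2)$ lies in exactly one $\Pi_k$.

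The proof involves no real obstacle. The only point needing care is the convention in \eqref{d(p1,p2) invariance}: the relevant transcript is $\pi_2\circ\pi_1^{-1}$ under the right-action composition \eqref{pi1 x p2}. The sets $\Pi_k$ in \eqref{tr_pairs} must be computed with this same convention. A spot check against \eqref{tab_tr}, for example that $(\pi^{[2]},\pi^{[3]})\in\Pi_5$ and that $d_{\textup{C}}(\pi^{[2]},\pi^{[3]})=2$ in \eqref{dC matrix}, confirms the conventions are consistent.
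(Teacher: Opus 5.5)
Your proposal is correct and follows the paper's own route: you use the right-invariance \eqref{d(p1,p2) invariance}, so that both distances depend only on the transcript, and then read off the level sets from the identity row of \eqref{dC matrix} and \eqref{dK matrix}. The only difference is that you derive the identity-row entries from the cycle-count and inversion formulas \eqref{Comput d_C}--\eqref{Comput d_K} instead of taking them from the tables, which makes the argument self-contained; since those formulas are already stated in terms of $\tau(\pi_1,\pi_2)$, the right-invariance step then becomes redundant.
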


\begin{bem}
\label{bemPeriod}
The period of $\pi \in S_{m}$ (or an element of any other group, for this matter) is the minimum positive integer 
\textrm{per}$(\pi )$ such that $\pi ^{\mathrm{per}(\pi )}=\mathrm{id}$, the identity element of the group. It follows that $\mathrm{id}$ is the only element of period~1 (a ``fixed point''), and \textrm{per}$(\pi )$ is a divisor of $m!$, the cardinality of the group $S_{m}$ \citep{herstein96}. All permutations of a given period $p$ build a so-called \textit{period class} $\mathcal{C}_{p}$, \ie $\mathcal{C}_{p}=\{\pi \in S_{m}:\mathrm{per}(\pi )=p\}$. If $m=3$, then we have (see \eqref{tabOP})%
\begin{equation}
\mathcal{C}_{1}=\{\pi ^{[1]}\},\;\mathcal{C}_{2}=\{\pi ^{\lbrack
2]},\pi ^{[3]},\pi ^{[6]}\},\;\mathcal{C}_{3}=\{\pi ^{\lbrack
4]},\pi ^{[5]}\}.  \label{order classes}
\end{equation}%
Comparison of equation \eqref{order classes} to Proposition~\ref{PropDistancesCK} demonstrates that 
\begin{equation}
d_{\textup{C}}(\pi _{1},\pi _{2})=n\;\;\text{iff}\;\;\tau (\pi _{1},\pi _{2})\in 
\mathcal{C}_{n+1},  \label{order classes 2}
\end{equation}%
where $n=0,1,2$. Therefore, $d_{\textup{C}}(\pi _{1},\pi _{2})=\mathrm{per}\big(\tau (\pi
_{1},\pi _{2})\big)-1$ for $S_{3}$, so that the
period of $\tau (\pi _{1},\pi _{2})$, decreased by $1$, is the Cayley distance between $\pi _{1}$ and $\pi _{2}$ in $S_{3}$.
\end{bem}


\section{Distributional Properties of Transcripts}
\label{Distributional Properties of Transcripts}
From now on, we assume $(X_t)$ to be a real-valued and continuously distributed process having a stationary OP-series $(\pi_t)$ of length $\cat=3$. The corresponding transcript series with coupling delay $\Lambda=1$, so $(\pi_{t+1}\circ\pi_{t}^{-1})$ according to \eqref{self and cross}, is denoted by $(\tau_t)$, and the resulting series of Cayley and Kendall distances by $(d_{\textup{C},t})$ and $(d_{\textup{K},t})$, respectively, where $d_{\textup{C},t}=d_{\textup{C}}(\pi_{t},\pi_{t+1})$ and $d_{\textup{K},t}=d_{\textup{K}}(\pi_{t},\pi_{t+1})$.


\subsection{Marginal Distribution of Transcripts}
\label{Marginal Distribution of Transcripts}
In order to compute the stationary marginal distribution of the transcripts~$(\tau_t)$, \ie the vector $\fp_{\tau} = \big(p_{\tau;1}, \ldots, p_{\tau;m!}\big)^\top$, the joint bivariate distribution of the OP-pairs $(\pi_t,\pi_{t+1})$ is required, because
\ba
\label{tr_marg}
p_{\tau;k} := P\big(\tau_t=\pi^{[k]}\big)\ =\ \sum_{(a,b)\in\Pi_k} P(\pi_t=a, \pi_{t+1}=b),
\ea
where the summation is done across the sets in \eqref{tr_pairs}. Let us introduce the short-hand notations $p_{\pi;k} = P(\pi_t=\pi^{[k]})$ and $\fp_\pi=(p_{\pi;1},\ldots,p_{\pi;m!})^\top$ for the OPs' marginal distribution, as well as $p_{\pi;kl}(h) = P(\pi_t=\pi^{[k]}, \pi_{t+h}=\pi^{[l]})$ and $\mP\!_\pi(h)=\big(p_{\pi;kl}(h)\big)_{k,l=1,\ldots,\cat!}$ for the (matrix of all) probabilities at \textit{time lag} $h\in\bbn$, with $\mP\!_\pi(-h) = \mP\!_\pi(h)^\top$. It should be noted, however, that for consecutive OPs (lag $h=1$), some transitions are impossible, irrespective of the underlying process $(X_t)$. As summarized in Figure~1 of \citet{sousa22}, it is impossible that $\pi^{[k]}$ with $k\in\{1,3,4\}$ is followed by $\pi^{[l]}$ with $l\in\{3,4,6\}$, and analogously with $k\in\{2,5,6\}$ and $l\in\{1,2,5\}$, where we took into account that \citet{sousa22} use different rules than ours to define and order OPs. Hence, the bivariate probabilities~$p_{\pi;kl}(1)$ corresponding to such impossible transitions are equal to zero. Therefore, using \eqref{tr_pairs} and \eqref{tr_marg}, it generally holds for $h=1$ that
\ba
\label{tr_marg2}
\begin{array}{@{}rl}
p_{\tau;1} = P\big(\tau_t=\pi^{[1]}\big)\ =& 
p_{\pi;11}(1) + p_{\pi;66}(1),\\[1ex]
p_{\tau;2} = P\big(\tau_t=\pi^{[2]}\big)\ =& 
p_{\pi;12}(1) + p_{\pi;56}(1),\\[1ex]
p_{\tau;3} = P\big(\tau_t=\pi^{[3]}\big)\ =& 
p_{\pi;31}(1) + p_{\pi;64}(1),\\[1ex]
p_{\tau;4} = P\big(\tau_t=\pi^{[4]}\big)\ =& 
p_{\pi;26}(1) + p_{\pi;32}(1) + p_{\pi;45}(1) + p_{\pi;63}(1),\\[1ex]
p_{\tau;5} = P\big(\tau_t=\pi^{[5]}\big)\ =& 
p_{\pi;15}(1) + p_{\pi;23}(1) + p_{\pi;41}(1) + p_{\pi;54}(1),\\[1ex]
p_{\tau;6} = P\big(\tau_t=\pi^{[6]}\big)\ =& 
p_{\pi;24}(1) + p_{\pi;35}(1) + p_{\pi;42}(1) + p_{\pi;53}(1).
\end{array}
\ea
Altogether, each transcript corresponds to a set of possible 4-OPs, see Table~\ref{tabTrLag0} for a summary.

\begin{table}[t]
\centering\footnotesize
\caption{Possible 4-OPs obtained if a certain transcript is observed.}
\label{tabTrLag0}

\smallskip
\begin{tabular}{cccccc}
\toprule
$\tau_t=\pi^{[1]}$ & $\tau_t=\pi^{[2]}$ & $\tau_t=\pi^{[3]}$ & $\tau_t=\pi^{[4]}$ & $\tau_t=\pi^{[5]}$ & $\tau_t=\pi^{[6]}$ \\
 \midrule
$(1,2,3,4)$ & $(1,2,4,3)$ & $(2,1,3,4)$ & $(1,4,3,2)$ & $(1,3,2,4)$ & $(1,3,4,2)$ \\
$(4,3,2,1)$ & $(4,3,1,2)$ & $(3,4,2,1)$ & $(2,1,4,3)$ & $(1,4,2,3)$ & $(2,4,3,1)$ \\
 &  &  & $(2,4,1,3)$ & $(2,3,1,4)$ & $(3,1,2,4)$ \\
 &  &  & $(3,2,1,4)$ & $(2,3,4,1)$ & $(4,2,1,3)$ \\
 &  &  & $(3,2,4,1)$ & $(3,1,4,2)$ &  \\
 &  &  & $(4,1,3,2)$ & $(3,4,1,2)$ &  \\
 &  &  & $(4,2,3,1)$ & $(4,1,2,3)$ &  \\
 \bottomrule
 \end{tabular}
\end{table}

\smallskip
Furthermore, using Proposition~\ref{PropDistancesCK} and equation \eqref{tr_marg2}, it follows that the marginal distributions of Cayley and Kendall distances, 
$$
\fp_{\textup{C}} = (p_{\textup{C};0},\ldots,p_{\textup{C};2})^\top := \big(P( d_{\textup{C},t}=0), \ldots, P( d_{\textup{C},t}=2)\big)^\top
$$
and 
$$
\fp_{\textup{K}} = (p_{\textup{K};0},\ldots,p_{\textup{K};3})^\top := \big(P( d_{\textup{K},t}=0), \ldots, P( d_{\textup{K},t}=3)\big)^\top, 
$$
respectively, are obtained by multiplying $\fp_{\tau}$ with the transformation matrices
\ba
\label{trans_mat_C}
\mT_{\textup{C}}\ =\ \left(\begin{array}{cccccc}
1 & 0 & 0 & 0 & 0 & 0 \\
0 & 1 & 1 & 0 & 0 & 1 \\
0 & 0 & 0 & 1 & 1 & 0 \\
\end{array}\right)
\ea
and
\ba
\label{trans_mat_K}
\mT_{\textup{K}}\ =\ \left(\begin{array}{cccccc}
1 & 0 & 0 & 0 & 0 & 0 \\
0 & 1 & 1 & 0 & 0 & 0 \\
0 & 0 & 0 & 1 & 1 & 0 \\
0 & 0 & 0 & 0 & 0 & 1 \\
\end{array}\right),
\ea
respectively, \ie

\begin{equation}
\fp_{\textup{C}}\ =\ \mT_{\textup{C}}\, \fp_{\tau},
\qquad
\fp_{\textup{K}}\ =\ \mT_{\textup{K}}\, \fp_{\tau}.
\label{p_C=T_C*p_tau}
\end{equation}

\bigskip
The lag-1 probabilities required for \eqref{tr_marg2} have already been computed for important classes of process $(X_t)$. In view of a possible application to dependence testing, the case of an \iid\ process $(X_t)$ is relevant. Recall that $(X_t)$ is continuously distributed, so under \iid-conditions, it holds that $(\pi_t)$ has a discrete-uniform marginal distribution, $\fp_\pi=(\frac{1}{m!}, \ldots, \frac{1}{m!})^\top$, irrespective of the marginal distribution of $(X_t)$. This distribution-free property is very attractive for practice, because OP-based tests for serial dependence become nonparametric. If $(X_t)$ is \iid, then the lag-1 OP-probabilities are \citep[see][]{elsinger10,sousa22,weiss22}
\ba
\label{op_biv1_iid}
\mP\!_\pi(1)\ =\ 
\frac{1}{24}\,\left(\begin{array}{cccccc}
1 & 1 & 0 & 0 & 2 & 0 \\
0 & 0 & 1 & 1 & 0 & 2 \\
1 & 2 & 0 & 0 & 1 & 0 \\
2 & 1 & 0 & 0 & 1 & 0 \\
0 & 0 & 1 & 2 & 0 & 1 \\
0 & 0 & 2 & 1 & 0 & 1 \\
\end{array}\right).
\ea
Note that each of these authors uses a different order of the OPs, which also differs from the lexicographic order being used here. Using equations \eqref{tr_marg2}--\eqref{trans_mat_K} together with equation \eqref{op_biv1_iid}, we obtain the following result.

\begin{prop}
\label{prop_tr_marg_iid}
If $(X_t)$ is \iid, then the transcripts' marginal distribution is given by
$$
\fp_{\tau}\ =\ 
\tfrac{1}{24}\,\big(2, 2, 2, 7, 7, 4\big)^\top.
$$
Furthermore, the Cayley and Kendall distances of successive OPs satisfy
$$
\begin{array}{@{}l}
\fp_{\textup{C}}\ =\ 
\tfrac{1}{12}\,\big(1, 4, 7\big)^\top\quad
\text{ with mean } \mu_{\textup{C}}=\tfrac{3}{2}
\text{ and variance }
\sigma_{\textup{C}}^2=\tfrac{5}{12},
\\[2ex]
\fp_{\textup{K}}\ =\ 
\tfrac{1}{12}\,\big(1, 2, 7, 2\big)^\top\quad
\text{ with mean } \mu_{\textup{K}}=\tfrac{11}{6}
\text{ and variance }
\sigma_{\textup{K}}^2=\tfrac{23}{36}.
\end{array}
$$
\end{prop}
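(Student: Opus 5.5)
Since $(X_t)$ is \iid, the lag-1 OP-probabilities are given explicitly by \eqref{op_biv1_iid}, so the proof reduces to evaluating the sums in \eqref{tr_marg2} and then applying the transformation matrices \eqref{trans_mat_C} and \eqref{trans_mat_K}. I would first record the matrix entries $p_{\pi;kl}(1)$ that \eqref{tr_marg2} needs, namely the 18 entries belonging to the transitions that are not forbidden. As a consistency check, I would verify that the cells of \eqref{op_biv1_iid} with value zero are exactly the impossible transitions listed before \eqref{tr_marg2}. That check confirms that dropping them from \eqref{tr_marg} loses nothing.

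The next step is to add these entries up for each $k$. From \eqref{op_biv1_iid} with the factor $\frac{1}{24}$ pulled out:
\begin{itemize}
\item $p_{\tau;1}=p_{\pi;11}(1)+p_{\pi;66}(1)=(1+1)/24$.
\item $p_{\tau;2}=p_{\pi;12}(1)+p_{\pi;56}(1)=(1+1)/24$.
\item $p_{\tau;3}=p_{\pi;31}(1)+p_{\pi;64}(1)=(1+1)/24$.
\item $p_{\tau;4}=p_{\pi;26}(1)+p_{\pi;32}(1)+p_{\pi;45}(1)+p_{\pi;63}(1)=(2+2+1+2)/24=7/24$.
\item $p_{\tau;5}=p_{\pi;15}(1)+p_{\pi;23}(1)+p_{\pi;41}(1)+p_{\pi;54}(1)=(2+1+2+2)/24=7/24$.
\item $p_{\tau;6}=p_{\pi;24}(1)+p_{\pi;35}(1)+p_{\pi;42}(1)+p_{\pi;53}(1)=(1+1+1+1)/24=4/24$.
\end{itemize}
The six values sum to $24/24=1$, which cross-checks the reading of the matrix.

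For the distances, \eqref{p_C=T_C*p_tau} gives
\[
\fp_{\textup{C}}=\tfrac{1}{24}(2,\,2+2+4,\,7+7)^\top=\tfrac{1}{12}(1,4,7)^\top,
\qquad
\fp_{\textup{K}}=\tfrac{1}{24}(2,\,4,\,14,\,4)^\top=\tfrac{1}{12}(1,2,7,2)^\top.
\]
This step relies on Proposition~\ref{PropDistancesCK}, which justifies the grouping of transcripts into distance values.

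The moments then follow directly.
\begin{itemize}
\item Cayley: $\mu_{\textup{C}}=(4+14)/12=3/2$ and $E[d_{\textup{C},t}^2]=(4+28)/12=8/3$, so $\sigma_{\textup{C}}^2=8/3-9/4=5/12$.
\item Kendall: $\mu_{\textup{K}}=(2+14+6)/12=11/6$ and $E[d_{\textup{K},t}^2]=(2+28+18)/12=4$, so $\sigma_{\textup{K}}^2=4-121/36=23/36$.
\end{itemize}

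There is no conceptual obstacle here. The main risk is bookkeeping: the matrix \eqref{op_biv1_iid} must be read in the lexicographic order used in this paper, and the index pairs in \eqref{tr_marg2} must match the sets $\Pi_k$ in \eqref{tr_pairs}. The zero-pattern check and the total-mass check above guard against such misreadings. As an additional independent check, I would compute the distribution directly from the 24 equiprobable 4-OPs in Table~\ref{tabTrLag0}, which gives counts $2,2,2,7,7,4$ for $\pi^{[1]},\ldots,\pi^{[6]}$ and so reproduces $\fp_{\tau}$.
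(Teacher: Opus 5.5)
Your proposal is correct and follows the paper's route: substitute the i.i.d.\ lag-1 OP probabilities \eqref{op_biv1_iid} into \eqref{tr_marg2}, then apply $\mT_{\textup{C}}$ and $\mT_{\textup{K}}$ from \eqref{trans_mat_C}--\eqref{trans_mat_K}, and read off the moments. Your matrix entries, sums and moment arithmetic all check out, and your cross-check via the 24 equiprobable 4-OPs of Table~\ref{tabTrLag0} matches the remark the paper makes right after the statement.
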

Note that in the \iid-case, $\fp_{\tau}$ simply agrees with the relative frequencies of the respective 4-OPs in Table~\ref{tabTrLag0}. The \iid-case is particularly relevant for practice, namely if one wants to test the null hypothesis of serial independence against dependent alternatives (as we do later in Sections~\ref{Asymptotics of Transcript Statistics}--\ref{Empirical Investigations}). 

\smallskip
But in view of power properties of such dependence tests, it is also relevant to know the transcripts' marginal distribution for serially dependent processes $(X_t)$. Unfortunately, closed-form expressions for OP-distributions are known for very few types of process so far. In what follows, we pick up two alternative scenarios where closed-form OP-distributions are readily available. First, we consider the case of $(X_t)$ being a random walk with symmetric noise (SRW). Then, see equation (42) in \citet{sousa22}, we have
\ba
\label{op_biv1_srw}
\mP\!_\pi(1)\ =\ 
\frac{1}{48}\,\left(\begin{array}{cccccc}
6 & 3 & 0 & 0 & 3 & 0 \\
0 & 0 & 2 & 1 & 0 & 3 \\
3 & 2 & 0 & 0 & 1 & 0 \\
3 & 1 & 0 & 0 & 2 & 0 \\
0 & 0 & 1 & 2 & 0 & 3 \\
0 & 0 & 3 & 3 & 0 & 6 \\
\end{array}\right).
\ea
So using equations \eqref{tr_marg2}--\eqref{trans_mat_K} together with equation \eqref{op_biv1_srw}, we obtain the following result.

\begin{prop}
\label{prop_tr_marg_srw}
If $(X_t)$ is an SRW, then the transcripts' marginal distribution is given by
$$
\fp_{\tau}\ =\ 
\tfrac{1}{24}\,\big(6, 3, 3, 5, 5, 2\big)^\top.
$$
Furthermore, the Cayley and Kendall distances of successive OPs satisfy
$$
\begin{array}{@{}l}
\fp_{\textup{C}}\ =\ 
\tfrac{1}{12}\,\big(3, 4, 5\big)^\top\quad
\text{ with mean } \mu_{\textup{C}}=\tfrac{7}{6}
\text{ and variance }
\sigma_{\textup{C}}^2=\tfrac{23}{36},
\\[2ex]
\fp_{\textup{K}}\ =\ 
\tfrac{1}{12}\,\big(3, 3, 5, 1\big)^\top\quad
\text{ with mean } \mu_{\textup{K}}=\tfrac{4}{3}
\text{ and variance }
\sigma_{\textup{K}}^2=\tfrac{8}{9}.
\end{array}
$$
\end{prop}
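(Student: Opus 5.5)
The proof is a direct computation, mirroring the derivation of Proposition~\ref{prop_tr_marg_iid}. I will plug the SRW lag-1 matrix \eqref{op_biv1_srw} into the sums \eqref{tr_marg2}, then apply the transformation matrices \eqref{trans_mat_C} and \eqref{trans_mat_K} via \eqref{p_C=T_C*p_tau}, and finally compute the first two moments.

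\textbf{Step 1: transcript probabilities.} Read off the entries $p_{\pi;kl}(1)$ of \eqref{op_biv1_srw} (row $k$, column $l$, denominator $48$) and insert them into \eqref{tr_marg2}:
\begin{itemize}
\item $p_{\tau;1}=p_{\pi;11}+p_{\pi;66}=(6+6)/48=6/24$;
\item $p_{\tau;2}=p_{\pi;12}+p_{\pi;56}=(3+3)/48=3/24$;
\item $p_{\tau;3}=p_{\pi;31}+p_{\pi;64}=(3+3)/48=3/24$;
\item $p_{\tau;4}=p_{\pi;26}+p_{\pi;32}+p_{\pi;45}+p_{\pi;63}=(3+2+2+3)/48=5/24$;
\item $p_{\tau;5}=p_{\pi;15}+p_{\pi;23}+p_{\pi;41}+p_{\pi;54}=(3+2+3+2)/48=5/24$;
\item $p_{\tau;6}=p_{\pi;24}+p_{\pi;35}+p_{\pi;42}+p_{\pi;53}=(1+1+1+1)/48=2/24$.
\end{itemize}
As a consistency check, the entries of \eqref{op_biv1_srw} sum to $48$, and every nonzero entry lies on a possible transition. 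So the six sums exhaust the mass and $\fp_\tau$ sums to one.

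\textbf{Step 2: distance distributions.} By Proposition~\ref{PropDistancesCK}, $\mT_{\textup{C}}$ groups the transcripts as $\{1\},\{2,3,6\},\{4,5\}$. This gives
\[
\fp_{\textup{C}}=\tfrac{1}{24}(6,8,10)^\top=\tfrac1{12}(3,4,5)^\top .
\]
Likewise, $\mT_{\textup{K}}$ groups them as $\{1\},\{2,3\},\{4,5\},\{6\}$, which gives
\[
\fp_{\textup{K}}=\tfrac1{24}(6,6,10,2)^\top=\tfrac1{12}(3,3,5,1)^\top .
\]

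\textbf{Step 3: moments.} For the Cayley distance,
\[
\mu_{\textup{C}}=(4+10)/12=7/6,\qquad E[d_{\textup{C}}^2]=(4+20)/12=2,
\]
so $\sigma^2_{\textup{C}}=2-49/36=23/36$. For the Kendall distance,
\[
\mu_{\textup{K}}=(3+10+3)/12=4/3,\qquad E[d_{\textup{K}}^2]=(3+20+9)/12=8/3,
\]
so $\sigma_{\textup{K}}^2=8/3-16/9=8/9$.

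\textbf{Main obstacle.} Nothing here is conceptually difficult. The only real risk is bookkeeping: the matrix \eqref{op_biv1_srw} is taken from \citet{sousa22}, who order the OPs differently. It must therefore already be expressed in the lexicographic order used here, as the paper asserts. I will check this by confirming two things: the zero pattern of \eqref{op_biv1_srw} matches the impossible transitions listed before \eqref{tr_marg2}, and the row sums give the known SRW OP marginals $\tfrac{1}{4}$ for $\pi^{[1]},\pi^{[6]}$ and $\tfrac18$ for the other four patterns.
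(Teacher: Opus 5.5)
Your proposal is correct and is exactly the paper's route: insert the SRW lag-1 matrix \eqref{op_biv1_srw} into \eqref{tr_marg2}, aggregate with $\mT_{\textup{C}}$ and $\mT_{\textup{K}}$, then compute the first two moments. Your arithmetic checks out, and so do your sanity checks (the zero pattern matches the impossible transitions, and the row sums give $\tfrac14,\tfrac18,\tfrac18,\tfrac18,\tfrac18,\tfrac14$), which confirms the matrix is already in lexicographic order.
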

As another alternative, we consider the generalized coin-tossing (GCT) process of \citet{silbernagel26} with ``success'' probability $p = 1-q \in (0;1)$, which covers the ``fair'' CT process of \citet{bandt25} for $p=q=0.5$. Its bivariate lag-1 probabilities are given by
\ba
\label{op_biv1_gct}
\mP\!_\pi(1)\ =\ 
\left(\begin{array}{cccccc}
p^3 & p^3 q & 0 & 0 & p^2 q^2 & 0 \\
0 & 0 & p^4 q & p^3 q^2 & 0 & p^2 q^2 \\
p^3 q & p^3 q^2 & 0 & 0 & p^2 q^3 & 0 \\
p^2 q^2 & p^2 q^3 & 0 & 0 & p q^4 & 0 \\
0 & 0 & p^3 q^2 & p^2 q^3 & 0 & p q^3 \\
0 & 0 & p^2 q^2 & p q^3 & 0 & q^3 \\
\end{array}\right),
\ea
see \citet[p.~17]{silbernagel26}. So with an analogous argumentation as before, we obtain the following result.

\begin{prop}
\label{prop_tr_marg_gct}
If $(\pi_t)$ is a GCT process, then the transcripts' marginal distribution is given by
$$
\fp_{\tau}\ =\ 
\Big(p^3 + q^3,\ pq(p^2 + q^2),\ pq(p^2 + q^2),\ p q^2 (1 + 2 p^2),\ p^2 q (1 + 2 q^2),\ 2 p^2 q^2\Big)^\top.
$$
Furthermore, the Cayley and Kendall distances of successive OPs satisfy
$$
\fp_{\textup{C}}\ =\ 
\Big(p^3+q^3,\ 2 pq(p^2 + q),\ p q (1 + 2 p q)\Big)^\top
$$
with mean $\mu_{\textup{C}}=2\, pq\, (2 + pq)$ and variance $\sigma_{\textup{C}}^2=2\, p q \, (3+p q)(1-2 p q-2 p^2 q^2)$, and
$$
\fp_{\textup{K}}\ =\ 
\Big(p^3+q^3,\ 2 pq(p^2 + q^2),\ p q (1 + 2 p q),\ 2 p^2 q^2\Big)^\top
$$
with mean $\mu_{\textup{K}}=2\, pq\, (2 + 3 pq)$ and variance $\sigma_{\textup{K}}^2=6\, p q\, (1+3 p q) (1-2 p q-2 p^2 q^2)$, respectively.
\end{prop}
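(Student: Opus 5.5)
The plan is to repeat the argument behind Propositions~\ref{prop_tr_marg_iid} and~\ref{prop_tr_marg_srw} with the GCT lag-1 matrix \eqref{op_biv1_gct}. Everything reduces to polynomial identities in $p$ and $q$, which I will simplify throughout using $q=1-p$.

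\textbf{Step 1: the transcript distribution.} I read off the entries of $\mP\!_\pi(1)$ from \eqref{op_biv1_gct} that appear in \eqref{tr_marg2}. For example,
\[
p_{\tau;1}=p_{\pi;11}(1)+p_{\pi;66}(1)=p^3+q^3,
\qquad
p_{\tau;2}=p_{\pi;12}(1)+p_{\pi;56}(1)=p^3q+pq^3=pq(p^2+q^2).
\]
The components $p_{\tau;3}$ and $p_{\tau;6}$ are handled the same way. For the four-term sums $p_{\tau;4}$ and $p_{\tau;5}$, I factor out $pq^2$ and $p^2q$ respectively. I then rewrite the bracket using identities such as $p^2+q^2=1-2pq$ and $p+q=1$ to reach the stated forms. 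These reductions are not unique, so I will check each claimed factorisation by expanding both sides as polynomials in $p$ alone. As a global consistency check, I will verify that the six components sum to $(p+q)^3\cdot(\ldots)=1$. This follows because the row sums of \eqref{op_biv1_gct} reproduce the GCT marginal OP-probabilities.

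\textbf{Step 2: the distance distributions.} I apply \eqref{p_C=T_C*p_tau} with the matrices \eqref{trans_mat_C} and \eqref{trans_mat_K}, which amounts to regrouping via Proposition~\ref{PropDistancesCK}. For the Cayley distance, $p_{\textup{C};1}=p_{\tau;2}+p_{\tau;3}+p_{\tau;6}$, which equals
\[
2pq(p^2+q^2)+2p^2q^2=2pq(1-pq).
\]
Since $1-pq=1-p+p^2=p^2+q$, this is the stated $2pq(p^2+q)$. Likewise $p_{\textup{C};2}=p_{\tau;4}+p_{\tau;5}$, and the Kendall vector follows by moving $p_{\tau;6}$ into its own class.

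\textbf{Step 3: means and variances.} The means are
\[
\mu_{\textup{C}}=p_{\textup{C};1}+2p_{\textup{C};2},
\qquad
\mu_{\textup{K}}=p_{\textup{K};1}+2p_{\textup{K};2}+3p_{\textup{K};3},
\]
and the variances are the second moments minus the squared means. Because every symmetric expression in $p,q$ is a polynomial in $s:=pq$ (using $p+q=1$), I will express all quantities through $s$ first:
\[
p^3+q^3=1-3s,\qquad p^2+q^2=1-2s,\qquad p_{\textup{C};2}=s(1+2s).
\]
Then, for instance, $\mu_{\textup{C}}=2s(1-s)+2s(1+2s)=2s(2+s)$, as claimed. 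The variance becomes a polynomial in $s$, which I factor into the stated form $2s(3+s)(1-2s-2s^2)$, and similarly for the Kendall distance.

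\textbf{Main obstacle.} The only real difficulty is Step 1 for $p_{\tau;4}$ and $p_{\tau;5}$. These are the two components that are not symmetric in $p\leftrightarrow q$ individually, so the reduction to $s$ is unavailable there. The algebra is also error-prone, because the GCT matrix is not symmetric under time reversal. I would check these two entries most carefully against the explicit 4-OP lists in Table~\ref{tabTrLag0}: each 4-OP probability in the GCT model is a monomial $p^aq^b$, so $p_{\tau;4}$ and $p_{\tau;5}$ can be recomputed independently. Only their sum enters $\fp_{\textup{C}}$ and $\fp_{\textup{K}}$, and that sum is symmetric in $p$ and $q$, so Steps 2 and 3 are robust even if the individual factorisations need adjustment.
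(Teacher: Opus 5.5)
Your proposal is correct and follows the paper's own proof. Like the paper, you read off the relevant entries of the GCT lag-1 matrix via \eqref{tr_marg2}, regroup with $\mT_{\textup{C}}$ and $\mT_{\textup{K}}$, and compute the moments. Your reduction to $s=pq$ is a clean way to carry out the variance step that the paper only says is ``computed analogously'', and your target factorisations check out: $6s-10s^2-16s^3-4s^4=2s(3+s)(1-2s-2s^2)$ and $6s+6s^2-48s^3-36s^4=6s(1+3s)(1-2s-2s^2)$.
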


\begin{proof}
Using \eqref{tr_marg2} together with \eqref{op_biv1_iid}, it follows that 
$$
\begin{array}{@{}rl}
p_{\tau;1}\ =& 
p_{\pi;11}(1) + p_{\pi;66}(1)\ =\ p^3 + q^3,\\[1ex]
p_{\tau;2}\ =& 
p_{\pi;12}(1) + p_{\pi;56}(1)
\ =\ p^3 q + p q^3\ =\ pq(p^2 + q^2),\\[1ex]
p_{\tau;3}\ =& 
p_{\pi;31}(1) + p_{\pi;64}(1)
\ =\ p^3 q + p q^3\ =\ pq(p^2 + q^2),\\[1ex]
p_{\tau;4}\ =& 
p_{\pi;26}(1) + p_{\pi;32}(1) + p_{\pi;45}(1) + p_{\pi;63}(1)
\\
=&
p^2 q^2 + p^3 q^2 + p q^4 + p^2 q^2
\ =\ p q^2 (1 + 2 p^2),\\[1ex]
p_{\tau;5}\ =& 
p_{\pi;15}(1) + p_{\pi;23}(1) + p_{\pi;41}(1) + p_{\pi;54}(1)
\\
=&
p^2 q^2 + p^4 q + p^2 q^2 + p^2 q^3
\ =\ p^2 q (1 + 2 q^2),\\[1ex]
p_{\tau;6}\ =& 
p_{\pi;24}(1) + p_{\pi;35}(1) + p_{\pi;42}(1) + p_{\pi;53}(1)
\\
=&
p^3 q^2 + p^2 q^3 + p^2 q^3 + p^3 q^2
\ =\ 2 p^2 q^2.
\end{array}
$$
From equations \eqref{trans_mat_C}--\eqref{trans_mat_K}, it then follows that
$$
\begin{array}{@{}l}
p_{\textup{C};0} = p_{\textup{K};k} = P\big(\tau_t=\pi^{[1]}\big) = p^3+q^3;
\\[1ex]
p_{\textup{C};1} = P\big(\tau_t \in \{\pi^{[2]}, \pi^{[3]}, \pi^{[6]}\}\big) 
= 2 pq(p^2 + q);
\\[1ex]
p_{\textup{K};1} = P\big(\tau_t \in \{\pi^{[2]}, \pi^{[3]}\}\big) = 2 pq(p^2 + q^2);
\\[1ex]
p_{\textup{C};2} = p_{\textup{K};2} = P\big(\tau_t \in \{\pi^{[4]}, \pi^{[5]}\}\big)
= p q (1 + 2 p q);
\\[1ex]
p_{\textup{K};3} = P\big(\tau_t=\pi^{[6]}\big) = 2 p^2 q^2.
\end{array}
$$
As a consequence, the respective mean distances are computed as
$$
\begin{array}{@{}rl}
\mu_{\textup{C}}=E[d_{\textup{C},t}]\ =&
p_{\textup{C};1} + 2\, p_{\textup{C};2}
\ =\ 
2 pq\,\big(p^2 + q + 1 + 2 p q\big)
\ =\ 
2 pq(2 + p q),
\\[1ex]
\mu_{\textup{K}}=E[d_{\textup{K},t}]\ =&
p_{\textup{K};1} + 2\, p_{\textup{K};2} + 3\, p_{\textup{K};3}
\\
=&
2 pq\,\big(p^2 + q^2 + 1 + 2 p q + 3 pq\big)
\ =\ 
2 pq (2 + 3 pq),
\end{array}
$$
and the variances are computed analogously. So the proof of Proposition~\ref{prop_tr_marg_gct} is complete.
\end{proof}
Note that both expected distances in Proposition~\ref{prop_tr_marg_gct} are quadratic polynomials in~$pq$ (both strictly increasing for $pq\in [0;0.25]$), where~$pq$ itself is a quadratic polynomial in~$p$ that is maximized for $p=0.5$ (fair coin tossing). Similarly, the variances are quartic polynomials in~$pq$. The moment properties of Propositions~\ref{prop_tr_marg_iid}--\ref{prop_tr_marg_gct} are illustrated by Figure~\ref{figExpectedDistances}. Looking at the mean distances in the upper panel (parts~(a) and~(b)), it gets clear that the \iid-case leads to a much larger mean distance (both for $d_{\textup{C}}$ and~$d_{\textup{K}}$) than the serially dependent SRW and GCT process. This implies that both mean distances could be used for constructing test statistics in order to test the null hypothesis of serial independence against dependence, a topic that shall be investigated in more detail in Section~\ref{Asymptotics of Transcript Statistics} below. For the variances in the lower panel of Figure~\ref{figExpectedDistances} (parts~(c) and~(d)), it is interesting to note that the \iid-case often leads to the lowest value, except for a GCT process with very low or very high~$p$.

\begin{figure}[t]
\centering\footnotesize
(a)\hspace{-3ex}%
\includegraphics[viewport=0 45 265 235, clip=, scale=0.6]{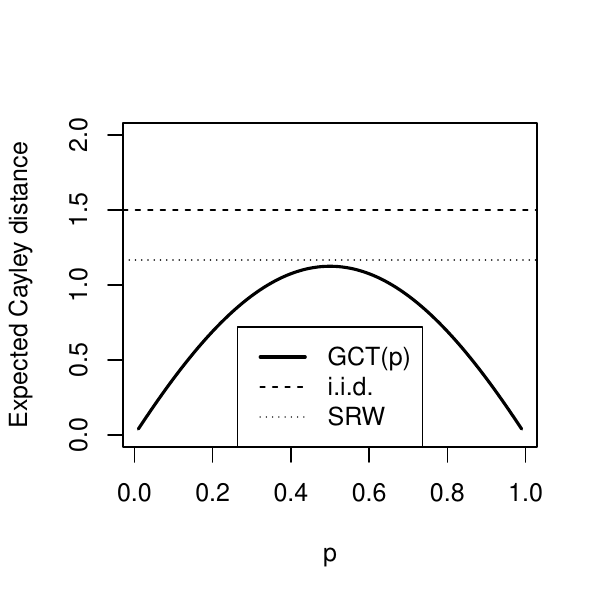}\,$p$
\qquad
(b)\hspace{-3ex}%
\includegraphics[viewport=0 45 265 235, clip=, scale=0.6]{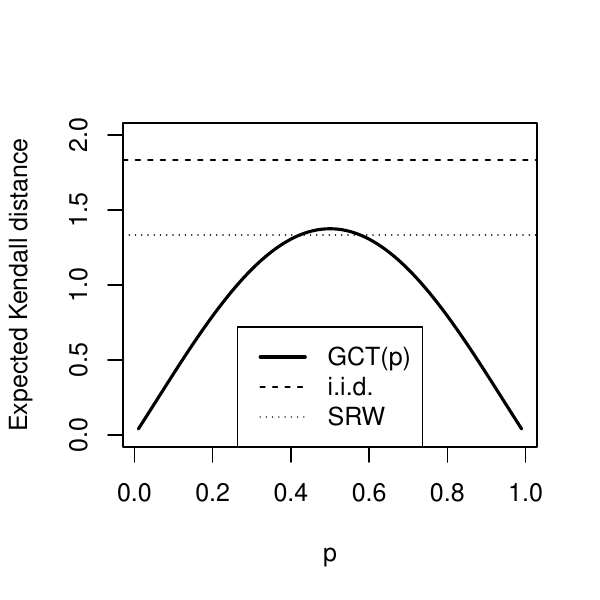}\,$p$
\\[2ex]
(c)\hspace{-3ex}%
\includegraphics[viewport=0 45 265 235, clip=, scale=0.6]{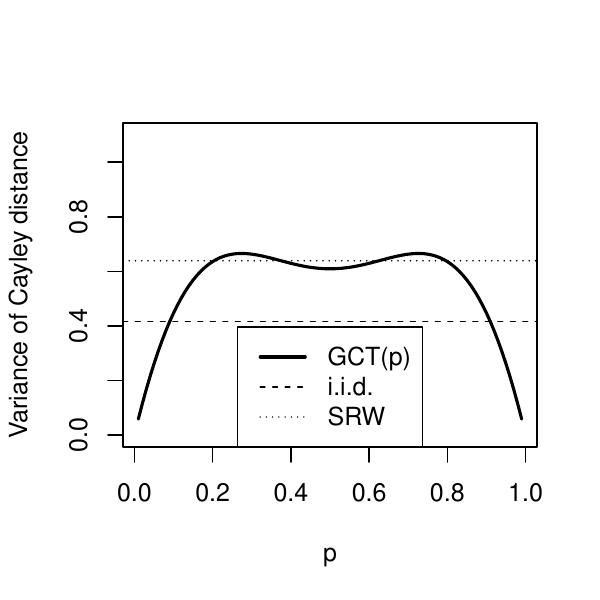}\,$p$
\qquad
(d)\hspace{-3ex}%
\includegraphics[viewport=0 45 265 235, clip=, scale=0.6]{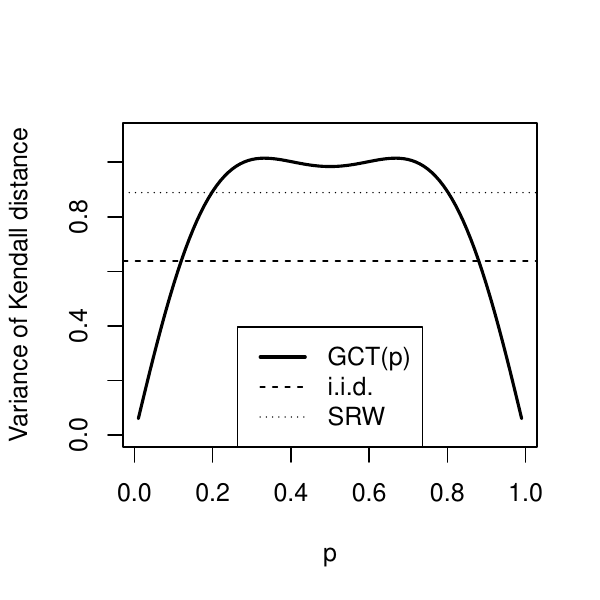}\,$p$
\caption{Expected Cayley distance in (a) and Kendall distance in (b) for GCT process, plotted against success probability~$p$. Lines indicate corresponding values for \iid\ process or SRW, respectively. Parts~(c) and~(d) show the corresponding variances.}
\label{figExpectedDistances}
\end{figure}

\begin{bsp}
\label{bspAlternativeScenarios}
Let us investigate the distributional and moment properties of transcripts, Cayley and Kendall distance for further types of process $(X_t)$. In our simulation study in Section~\ref{Simulation-based Performance Analyses} below, we shall compare the null hypothesis of serial independence (\ie the \iid-case of Proposition~\ref{prop_tr_marg_iid}) to various alternative scenarios with rather different serial dependence structures (mostly taken from \citealp{weiss22}), namely the
\begin{itemize}
    \item first-order autoregressive (AR$(1)$) process $X_t=\phi\,X_{t-1}+\epsilon_t$ with $\phi\in\{-0.5,0.5\}$ and \iid\ normal $\epsilon_t\sim \norm(0,1)$;
    \item first-order quadratic moving-average (QMA$(1)$) process $X_t=\epsilon_t + 0.8\,\epsilon_{t-1}^2$ with \iid\ $\epsilon_t\sim \norm(0,1)$;
    \item first-order transposed exponential autoregressive TEAR$(1)$ process $X_t=B_t\,X_{t-1}+0.85\,\epsilon_t$ with \iid\ exponential $\epsilon_t\sim \textup{Exp}(1)$ and Bernoulli $B_t$ with $P(B_t=1)=0.15$;
    \item first-order autoregressive conditional heteroscedasticity ARCH$(1)$ process $X_t = \epsilon_t\,\sqrt{0.2+0.8\,X_{t-1}^2}$ with \iid\ $\epsilon_t\sim \norm(0,1)$.
\end{itemize}
Since an analytical computation of the relevant distributional and moment properties is not possible for these processes (as closed-form OP-distributions have not been derived so far), we approximate them by their empirical counterparts being computed from a simulated time series of length~$10^6$. The obtained results are summarized in Table~\ref{tabDistMomTranscripts}.

\smallskip
The tabulated properties for both AR$(1)$ processes deviate considerably from those of the \iid-case. Hence, if we later construct tests of the \iid-null, we expect a good power performance for AR$(1)$ alternatives. It is interesting to note that the direction of these deviations strongly depends on the AR~parameter's sign. For example, the mean Cayley resp.\ Kendall distance is decreased (increased) if the AR$(1)$ process is positively (negatively) dependent. This implies that a test based on the mean distance should be constructed in a two-sided manner, \ie it should be able to detect both increases and decreases in the mean distance. Furthermore, the probability $p_{\tau; 1} = p_{\textup{C}; 0} = p_{\textup{K}; 0}$ increases (decreases) for positive (negative) dependence, which can be explained as follows. Recall that transcripts are a kind of ``difference'' between successive OPs, and $\tau_t=\pi^{[1]}$ (equivalently, $d_{\textup{C},t} = d_{\textup{K},t} = 0$) corresponds to the repeated occurrence of the same OP (so no difference at all), see \eqref{tr_pairs}--\eqref{dK matrix}. More precisely, it corresponds to the repeated occurrence of either~$\pi^{[1]}$ or~$\pi^{[6]}$, see \eqref{tr_marg2}. In fact, positive dependence is characterized by long lasting rises (then we observe~$\pi^{[1]}$) or descents (then~$\pi^{[6]}$), while the opposite happens for negative dependence.

\smallskip
For the remaining three types of process, the deviations to the \iid-case are generally less pronounced, so we conjecture that the power of dependence tests is lower than for the AR$(1)$ processes. Nevertheless, there are visible deviations (\eg for~$p_{\tau; 4}$ and~$p_{\tau; 5}$ in case of the QMA$(1)$ and TEAR$(1)$ process, or for $\mu_{\textup{C}}$ in case of the ARCH$(1)$ process), so appropriately designed dependence tests should still be able to uncover these alternatives.
\end{bsp}

\begin{table}[t]
\centering\small
\caption{Distributional and moment properties of (a) transcripts, (b) Cayley distance, and (c) Kendall distance for various types of process $(X_t)$, see Example~\ref{bspAlternativeScenarios}.}
\label{tabDistMomTranscripts}

\smallskip
(a)~\begin{tabular}{l|cccccc}
\toprule
Process & $p_{\tau; 1}$ & $p_{\tau; 2}$ & $p_{\tau; 3}$ & $p_{\tau; 4}$ & $p_{\tau; 5}$ & $p_{\tau; 6}$ \\
\midrule
\it \iid{} & \it 0.083 & \it 0.083 & \it 0.083 & \it 0.292 & \it 0.292 & \it 0.167 \\[1ex]
AR$(1),\ +0.5$ & 0.150 & 0.115 & 0.115 & 0.256 & 0.254 & 0.110 \\
AR$(1),\ -0.5$ & 0.041 & 0.042 & 0.042 & 0.293 & 0.293 & 0.287 \\
QMA$(1)$ & 0.103 & 0.076 & 0.094 & 0.348 & 0.220 & 0.160 \\
TEAR$(1)$ & 0.101 & 0.085 & 0.089 & 0.233 & 0.341 & 0.152 \\
ARCH$(1)$ & 0.100 & 0.085 & 0.085 & 0.279 & 0.279 & 0.172 \\
\bottomrule
\end{tabular}

\bigskip
(b)~\begin{tabular}{l|ccc@{\qquad}cc}
\toprule
Process & $p_{\textup{C}; 0}$ & $p_{\textup{C}; 1}$ & $p_{\textup{C}; 2}$ & $\mu_{\textup{C}}$ & $\sigma_{\textup{C}}^2$ \\
\midrule
\it \iid{} & \it 0.083 & \it 0.333 & \it 0.583 & \it 1.500 & \it 0.417 \\[1ex]
AR$(1),\ +0.5$ & 0.150 & 0.341 & 0.509 & 1.359 & 0.530 \\
AR$(1),\ -0.5$ & 0.041 & 0.372 & 0.587 & 1.545 & 0.331 \\
QMA$(1)$ & 0.103 & 0.330 & 0.568 & 1.465 & 0.454 \\
TEAR$(1)$ & 0.101 & 0.326 & 0.574 & 1.473 & 0.451 \\
ARCH$(1)$ & 0.100 & 0.342 & 0.558 & 1.458 & 0.448 \\
\bottomrule
\end{tabular}

\bigskip
(c)~\begin{tabular}{l|cccc@{\qquad}cc}
\toprule
Process & $p_{\textup{K}; 0}$ & $p_{\textup{K}; 1}$ & $p_{\textup{K}; 2}$ & $p_{\textup{K}; 3}$ & $\mu_{\textup{K}}$ & $\sigma_{\textup{K}}^2$ \\
\midrule
\it \iid{} & \it 0.083 & \it 0.167 & \it 0.583 & \it 0.167 & \it 1.833 & \it 0.639 \\[1ex]
AR$(1),\ +0.5$ & 0.150 & 0.231 & 0.509 & 0.110 & 1.580 & 0.764 \\
AR$(1),\ -0.5$ & 0.041 & 0.085 & 0.587 & 0.287 & 2.120 & 0.523 \\
QMA$(1)$ & 0.103 & 0.170 & 0.568 & 0.160 & 1.785 & 0.694 \\
TEAR$(1)$ & 0.101 & 0.174 & 0.574 & 0.152 & 1.777 & 0.679 \\
ARCH$(1)$ & 0.100 & 0.171 & 0.558 & 0.172 & 1.801 & 0.702 \\
\bottomrule
\end{tabular}
\end{table}


\subsection{Bivariate Distributions of Transcripts}
\label{Bivariate Distributions of Transcripts}
If dependence tests based on the aforementioned distributional or mean properties should be developed, we also require the bivariate distributions of transcripts in order to being able to derive the asymptotic distributions of test statistics. While the latter shall be done in Section~\ref{Asymptotics of Transcript Statistics} below, we start here by extending the computations of Section~\ref{Marginal Distribution of Transcripts} to joint bivariate distributions with time lag~$h\in\bbn$, although the necessary computations are much more cumbersome. For $\mP\!_\tau(h)=\big(p_{\tau;kl}(h)\big)_{k,l=1,\ldots,m!}$, it generally holds that
\ba
\label{tr_biv_h}
	\begin{array}{@{}l}
	p_{\tau;kl}(h) := P\big(\tau_t=\pi^{[k]}, \tau_{t+h}=\pi^{[l]}\big)\ =\ 
	\\[1ex]
	\displaystyle
    \qquad\sum_{(a,b)\in\Pi_k} \sum_{(\alpha,\beta)\in\Pi_l} P\Big(\pi_t=a, \pi_{t+1}=b,\ \pi_{t+h}=\alpha, \pi_{t+h+1}=\beta\Big),
	\end{array}
\ea
where the summations are done across the sets in \eqref{tr_pairs}. Also recall that $\mP\!_\tau(-h) = \mP\!_\tau(h)^\top$. Here, one has to consider again the impossible transitions between consecutive OPs~$\pi^{[i]}$ and~$\pi^{[j]}$, namely $\{1,3,4\}\not\to\{3,4,6\}$ and $\{2,5,6\}\not\to\{1,2,5\}$. Furthermore, for time lag $h=1$, the double summation in \eqref{tr_biv_h} is further constrained by the condition $b=\alpha$, \ie $(a,b)\in\Pi_k$ must be followed by $(b,\pi^{[l]}\circ b)$ from~$\Pi_l$, recall \eqref{tr_pairs_k}. Thus,
\ba
\label{tr_biv_1}
	p_{\tau;kl}(1)\ =\ 
    \sum_{(a,b)\in\Pi_k} P\Big(\pi_t=a, \pi_{t+1}=b,\ \pi_{t+2}=\pi^{[l]}\circ b\Big),
\ea
where $\pi^{[l]}\circ b$ is computed according to \eqref{tabOP}.

\bigskip
For the three types of process considered in Section~\ref{Marginal Distribution of Transcripts}, namely \iid, SRW, and GCT process, the OP process $(\pi_t)$ has a very short memory. More precisely, $(\pi_t)$ is 2-dependent for $(X_t)$ being \iid\ \citep{elsinger10,sousa22,weiss22}, and 1-dependent for both SRW and GCT \citep{sousa22,silbernagel26}. Here, $(\pi_t)$ is $q$-dependent with $q\in\bbn$ if~$\pi_t$ and~$\pi_{t+h}$ are independent if $h>q$. According to \eqref{tr_biv_h}, the $q$-dependence of $(\pi_t)$ implies the $(q+1)$-dependence of the transcript process $(\tau_t)$, \ie the bivariate probabilities only have to be computed for $h\in\{1,\ldots,q+1\}$.

\bigskip
Let us start with the bivariate transcript probabilities for lag $h=1$, where we have the simplified formula \eqref{tr_biv_1}. Using the short-hand notation $p_{\pi;uvw}=P\big(\pi_t=\pi^{[u]}, \pi_{t+1}=\pi^{[v]}, \pi_{t+2}=\pi^{[w]}\big)$ and considering impossible transitions between OPs, it generally holds that
\ba
\label{tr_biv_1_k1}
\begin{array}{@{}rl}
p_{\tau;1l}(1)\ =& 
\left\{\begin{array}{ll}
p_{\pi;111} + p_{\pi;666} & \text{if } l=1, \\
p_{\pi;112} & \text{if } l=2, \\
p_{\pi;664} & \text{if } l=3, \\
p_{\pi;663} & \text{if } l=4, \\
p_{\pi;115} & \text{if } l=5, \\
0 & \text{if } l=6; \\
\end{array}\right.\\[1ex]
\end{array}
\ea

\ba
\label{tr_biv_1_k2}
\begin{array}{@{}rl}
p_{\tau;2l}(1)\ =& 
\left\{\begin{array}{ll}
p_{\pi;566} & \text{if } l=1, \\
0 & \text{if } l=2, \\
p_{\pi;564} & \text{if } l=3, \\
p_{\pi;126} + p_{\pi;563} & \text{if } l=4, \\
p_{\pi;123} & \text{if } l=5, \\
p_{\pi;124} & \text{if } l=6; \\
\end{array}\right.\\[1ex]
\end{array}
\ea

\ba
\label{tr_biv_1_k3}
\begin{array}{@{}rl}
p_{\tau;3l}(1)\ =& 
\left\{\begin{array}{ll}
p_{\pi;311} & \text{if } l=1, \\
p_{\pi;312} & \text{if } l=2, \\
0 & \text{if } l=3, \\
p_{\pi;645} & \text{if } l=4, \\
p_{\pi;315} + p_{\pi;641} & \text{if } l=5, \\
p_{\pi;642} & \text{if } l=6; \\
\end{array}\right.\\[1ex]
\end{array}
\ea

\ba
\label{tr_biv_1_k4}
\begin{array}{@{}rl}
p_{\tau;4l}(1)\ =& 
\left\{\begin{array}{ll}
p_{\pi;266} & \text{if } l=1, \\
p_{\pi;456} & \text{if } l=2, \\
p_{\pi;264} + p_{\pi;631} & \text{if } l=3, \\
p_{\pi;263}+p_{\pi;326}+p_{\pi;632} & \text{if } l=4, \\
p_{\pi;323} + p_{\pi;454} & \text{if } l=5, \\
p_{\pi;324}+p_{\pi;453}+p_{\pi;635} & \text{if } l=6; \\
\end{array}\right.\\[1ex]
\end{array}
\ea

\ba
\label{tr_biv_1_k5}
\begin{array}{@{}rl}
p_{\tau;5l}(1)\ =& 
\left\{\begin{array}{ll}
p_{\pi;411} & \text{if } l=1, \\
p_{\pi;156}+p_{\pi;412} & \text{if } l=2, \\
p_{\pi;231} & \text{if } l=3, \\
p_{\pi;232}+p_{\pi;545} & \text{if } l=4, \\
p_{\pi;154} + p_{\pi;415}+p_{\pi;541} & \text{if } l=5, \\
p_{\pi;153}+p_{\pi;235}+p_{\pi;542} & \text{if } l=6; \\
\end{array}\right.\\[1ex]
\end{array}
\ea

\ba
\label{tr_biv_1_k6}
\begin{array}{@{}rl}
p_{\tau;6l}(1)\ =& 
\left\{\begin{array}{ll}
0 & \text{if } l=1, \\
p_{\pi;356} & \text{if } l=2, \\
p_{\pi;531} & \text{if } l=3, \\
p_{\pi;245}+p_{\pi;426}+p_{\pi;532} & \text{if } l=4, \\
p_{\pi;241} + p_{\pi;354}+p_{\pi;423} & \text{if } l=5, \\
p_{\pi;242}+p_{\pi;353}+p_{\pi;424}+p_{\pi;535} & \text{if } l=6. \\
\end{array}\right.\\[1ex]
\end{array}
\ea
Let us briefly explain the logic behind the proof with an example. To obtain the expression for $p_{\tau;1l}(1) = P\big(\tau_t=\pi^{[1]}, \tau_{t+1}=\pi^{[l]}\big)$ with $l=2$ in \eqref{tr_biv_1_k1}, first note that $p_{\tau;1} = P\big(\tau_t=\pi^{[1]}\big) = p_{\pi;11}(1) + p_{\pi;66}(1)$ according to \eqref{tr_marg2}.
Using that ``triplets'' starting with $a,b$ necessarily equal $a,b,\pi^{[l]}\circ b$, we just have to look into the $l$th column of \eqref{tabOP} in order to find that $p_{\tau;12}(1) = P\big(\tau_t=\pi^{[1]}, \tau_{t+1}=\pi^{[2]}\big) = p_{\pi;112} + p_{\pi;665}$. Finally, considering the impossible transitions $\{1,3,4\}\not\to\{3,4,6\}$ and $\{2,5,6\}\not\to\{1,2,5\}$, we recognize that $p_{\pi;665}=0$, so we end up with $p_{\tau;12}(1) = p_{\pi;112}$.

\bigskip
In order to explicitly calculate the lag-1 probabilities $\mP\!_\tau(1)$ from \eqref{tr_biv_1} for a specific type of OP-series, such as those investigated in Section~\ref{Marginal Distribution of Transcripts}, it should be noted that the OP-triplets in \eqref{tr_biv_1_k1}--\eqref{tr_biv_1_k6} correspond to a segment of length~5 of the original process $(X_t)$ and, thus, to OPs of length~5. So for a given pair $(\tau_t, \tau_{t+1}\big) = \big(\pi^{[k]}, \pi^{[l]}\big)$, we first need to determine the corresponding 5-OPs, and then we can compute the corresponding probability by summing up the probabilities of the 5-OPs. This strategy is analogous to the one in \citet[Table~II]{weiss22} and leads to Table~\ref{tabTrLag1} in Appendix~\ref{Tables}. Using this table, we can immediately compute the lag-1 probabilities $\mP\!_\tau(1)$ for the case where $(X_t)$ is \iid, because then, the 5-OPs are uniformly distributed with probability $1/5!=1/120$ each. So we simply have to count which combination of transcripts corresponds to how many 5-OPs.

\begin{prop}
\label{prop_tr_biv1_iid}
If $(X_t)$ is \iid, then the transcripts' bivariate lag-1 probabilities $\mP\!_\tau(1)=\big(p_{\tau;kl}(1)\big)_{k,l=1,\ldots,m!}$ are given by
$$
\mP\!_\tau(1)\ =\ 
\frac{1}{120}\,\left(\begin{array}{cccccc}
2 & 1 & 1 & 3 & 3 & 0 \\
1 & 0 & 1 & 6 & 1 & 1 \\
1 & 1 & 0 & 1 & 6 & 1 \\
3 & 1 & 6 & 14 & 4 & 7 \\
3 & 6 & 1 & 4 & 14 & 7 \\
0 & 1 & 1 & 7 & 7 & 4 \\
\end{array}\right).
$$
\end{prop}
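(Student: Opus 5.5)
The plan is to work directly with the underlying 5-segment $(X_t,\ldots,X_{t+4})$. Under \iid-conditions, its 5-OP is uniformly distributed on $S_5$, with probability $1/120$ each. The pair $(\tau_t,\tau_{t+1})$ is a deterministic function of the triplet $(\pi_t,\pi_{t+1},\pi_{t+2})$, and this triplet is in turn a deterministic function of the 5-OP. So $p_{\tau;kl}(1)$ equals $N_{kl}/120$, where $N_{kl}$ is the number of 5-OPs mapped to $(\pi^{[k]},\pi^{[l]})$. It therefore suffices to build the map from $S_5$ to transcript pairs and count the preimages, which is exactly what Table~\ref{tabTrLag1} in Appendix~\ref{Tables} records.

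\textbf{Step 1: compute the triplet probabilities.} Take each feasible triplet $(\pi^{[u]},\pi^{[v]},\pi^{[w]})$ that appears in \eqref{tr_biv_1_k1}--\eqref{tr_biv_1_k6}. Feasibility means the consecutive transitions respect $\{1,3,4\}\not\to\{3,4,6\}$ and $\{2,5,6\}\not\to\{1,2,5\}$. For each such triplet, compute $p_{\pi;uvw}$ as (number of compatible 5-OPs)$/120$.

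The counting can be organized through the 4-OPs. A triplet is determined by the 4-OP of $(X_t,\ldots,X_{t+3})$ together with the 4-OP of $(X_{t+1},\ldots,X_{t+4})$. Table~\ref{tabTrLag0} already lists which 4-OPs give which first transcript. A 5-OP is compatible with a prescribed pair of such overlapping 4-OPs in a number of ways that depends only on the relative position of $x_t$ and $x_{t+4}$. Equivalently, the count is the number of linear extensions of the poset generated by the two rankings, which is small here.

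\textbf{Step 2: assemble and check.} Insert the triplet probabilities into \eqref{tr_biv_1_k1}--\eqref{tr_biv_1_k6} to obtain the $36$ entries, scaled by $1/120$.

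I would then run consistency checks:
\begin{itemize}
\item The row sums must equal $120\,p_{\tau;k}$, i.e. $(10,10,10,35,35,20)$ by Proposition~\ref{prop_tr_marg_iid}.
\item The column sums must give the same vector, by stationarity.
\item The total must be $120$.
\end{itemize}
For instance, row~1 gives $2+1+1+3+3+0=10$, and row~4 gives $3+1+6+14+4+7=35$.

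A useful structural check is time reversal combined with the value-flip $x\mapsto -x$. Both maps preserve the \iid\ law, and they explain the visible symmetries of the matrix: swapping the labels $2\leftrightarrow3$ and $4\leftrightarrow5$ maps the matrix onto itself. This symmetry halves the counting effort and catches errors.

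\textbf{Main obstacle.} The hard part is the bookkeeping of Step~1, namely correctly enumerating the 5-OPs for each of the roughly 40 feasible triplets. Cases such as $p_{\tau;44}(1)$ and $p_{\tau;55}(1)$, each equal to $14/120$ and each made up of three triplets, are the most error-prone. I would first do this count mechanically: loop over all $120$ permutations of $\{1,\ldots,5\}$, compute $\pi_t,\pi_{t+1},\pi_{t+2}$ via \eqref{ordpattern}, and then compute the transcripts via \eqref{transcrip} and Table~\eqref{tab_tr}. This gives Table~\ref{tabTrLag1} directly, and I would use the symmetry and marginal checks above to confirm it.
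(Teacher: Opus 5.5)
Your proposal is correct and is essentially the paper's own argument: under the \iid{} null the 5-OP of $(X_t,\ldots,X_{t+4})$ is uniform on $S_5$, so each $p_{\tau;kl}(1)$ is the number of 5-OPs mapped to $(\pi^{[k]},\pi^{[l]})$ divided by $120$, which is exactly what Table~\ref{tabTrLag1} tabulates. One precision on your symmetry check: the invariance under relabelling $2\leftrightarrow3$, $4\leftrightarrow5$ already follows from the value flip alone, since it conjugates each transcript by $\pi^{[6]}$; time reversal instead gives the different identity $p_{\tau;kl}(1)=p_{\tau;\iota(l)\iota(k)}(1)$, where $\iota$ is transcript inversion, swapping $\pi^{[4]}\leftrightarrow\pi^{[5]}$.
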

Proposition~\ref{prop_tr_biv1_iid} can be used to conclude on the bivariate lag-1 probabilities of the Cayley and Kendall distances, where we introduce the short-hand notations $\mP\!_{\textup{C}}(h)=\big(p_{\textup{C};kl}(h)\big)_{k,l=0,1,2}$ with $p_{\textup{C};kl}(h) := P\big(d_{\textup{C},t}=k, d_{\textup{C},t+h}=l\big)$ and $\mP\!_{\textup{K}}(h)=\big(p_{\textup{K};kl}(h)\big)_{k,l=0,\ldots,3}$ with $p_{\textup{K};kl}(h) := P\big(d_{\textup{K},t}=k, d_{\textup{K},t+h}=l\big)$. 
Applying the transformation matrices from equations \eqref{trans_mat_C}--\eqref{trans_mat_K}, we immediately get the following results on the bivariate distributions as well as the respective autocorrelation function (acf), $\rho_{\textup{C}}(h) := Cor[d_{\textup{C},t}, d_{\textup{C},t+h}]$ and $\rho_{\textup{K}}(h) := Cor[d_{\textup{K},t}, d_{\textup{K},t+h}]$, respectively.

\begin{kor}
\label{kor_d_biv1_iid}
If $(X_t)$ is \iid, then the bivariate lag-1 probabilities of the Cayley and Kendall distances, respectively, are
$$
\mP\!_{\textup{C}}(1)\ =\mT_{\textup{C}}\, \mP\!_{\tau}(1)\, \mT_{\textup{C}}^{\top}
=\ \frac{1}{60}\,\left(
\begin{array}{ccc}
 1 & 1 & 3 \\
 1 & 5 & 14 \\
 3 & 14 & 18 \\
\end{array}
\right)\quad
\text{ with acf }
\rho_{\textup{C}}(1) = -\tfrac{2}{25}
$$
and
$$
\mP\!_{\textup{K}}(1)\ =\mT_{\textup{K}}\, \mP\!_{\tau}(1)\, \mT_{\textup{K}}^{\top}=\ \frac{1}{60}\,\left(
\begin{array}{cccc}
 1 & 1 & 3 & 0 \\
 1 & 1 & 7 & 1 \\
 3 & 7 & 18 & 7 \\
 0 & 1 & 7 & 2 \\
\end{array}
\right)\quad
\text{ with acf }
\rho_{\textup{K}}(1) = \tfrac{22}{115}.
$$
\end{kor}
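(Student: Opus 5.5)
The proof is a direct linear-algebra computation built on Proposition~\ref{prop_tr_biv1_iid} and Proposition~\ref{prop_tr_marg_iid}. Since $d_{\textup{C},t}$ and $d_{\textup{K},t}$ are deterministic functions of $\tau_t$, namely through the partition of $S_3$ in Proposition~\ref{PropDistancesCK}, the joint probabilities aggregate additively:
$$p_{\textup{C};ij}(1)=\sum_{k:\,d_{\textup{C}}(\mathrm{id},\pi^{[k]})=i}\ \sum_{l:\,d_{\textup{C}}(\mathrm{id},\pi^{[l]})=j} p_{\tau;kl}(1).$$
The same holds for the Kendall distance. The $0/1$ rows of $\mT_{\textup{C}}$ and $\mT_{\textup{K}}$ in \eqref{trans_mat_C}--\eqref{trans_mat_K} are exactly the indicators of these index sets. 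Hence $\mP\!_{\textup{C}}(1)=\mT_{\textup{C}}\,\mP\!_\tau(1)\,\mT_{\textup{C}}^\top$ and $\mP\!_{\textup{K}}(1)=\mT_{\textup{K}}\,\mP\!_\tau(1)\,\mT_{\textup{K}}^\top$, which justifies the first equalities in both displays.

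The next step is to evaluate the two block sums from the $6\times 6$ matrix of Proposition~\ref{prop_tr_biv1_iid}. For Cayley, the index groups are $\{1\},\{2,3,6\},\{4,5\}$. For example, the $(\{2,3,6\},\{2,3,6\})$ block is $(0+1+1)+(1+0+1)+(1+1+4)=10$, giving $10/120=5/60$. The $(\{2,3,6\},\{4,5\})$ block is $(6+1)+(1+6)+(7+7)=28$, giving $14/60$. For Kendall, the groups are $\{1\},\{2,3\},\{4,5\},\{6\}$, and the blocks are summed the same way. As a consistency check, the row sums should reproduce $\fp_{\textup{C}}=\tfrac{1}{12}(1,4,7)^\top$ and $\fp_{\textup{K}}=\tfrac{1}{12}(1,2,7,2)^\top$ from Proposition~\ref{prop_tr_marg_iid}. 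Symmetry of the resulting matrices is also expected, because $\mP\!_\tau(1)$ is symmetric.

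For the autocorrelations I would use
$$\rho(1)=\frac{E[d_t d_{t+1}]-\mu^2}{\sigma^2},\qquad E[d_t d_{t+1}]=\sum_{i,j} i\,j\,p_{ij}(1),$$
with $\mu$ and $\sigma^2$ taken from Proposition~\ref{prop_tr_marg_iid}. For Cayley,
$$E[d_t d_{t+1}]=\tfrac{1}{60}(5+2\cdot14+2\cdot14+4\cdot18)=\tfrac{133}{60},$$
so $\rho_{\textup{C}}(1)=(133/60-9/4)/(5/12)=-\tfrac{2}{25}$. The Kendall case is analogous. I compute $E[d_t d_{t+1}]$ from the $4\times4$ matrix, subtract $(11/6)^2$, and divide by $23/36$, which should give $\tfrac{22}{115}$.

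There is no conceptual obstacle here. The only real risk is arithmetic error in the block sums, and the marginal and symmetry checks above guard against that.
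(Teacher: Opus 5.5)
Your proposal is correct and follows the paper's route: aggregate $\mP\!_\tau(1)$ with $\mT_{\textup{C}}$ and $\mT_{\textup{K}}$, then obtain the lag-1 acf from $E[d_td_{t+1}]$ together with the moments of Proposition~\ref{prop_tr_marg_iid}. Your block sums are right, and the Kendall case you left open gives $E[d_{\textup{K},t}d_{\textup{K},t+1}]=\tfrac{209}{60}$, hence $\rho_{\textup{K}}(1)=(\tfrac{209}{60}-\tfrac{121}{36})/\tfrac{23}{36}=\tfrac{22}{115}$.

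One side remark is wrong, though harmless because it is only a sanity check. $\mP\!_\tau(1)$ is \emph{not} symmetric: for example, $p_{\tau;24}(1)=\tfrac{6}{120}\neq p_{\tau;42}(1)=\tfrac{1}{120}$. The aggregated matrices are symmetric for a different reason. Time reversal of an i.\,i.\,d.\ sequence maps $(\tau_t,\tau_{t+1})$ to $(\tau_{t+1}^{-1},\tau_t^{-1})$. Inversion only swaps $\pi^{[4]}\leftrightarrow\pi^{[5]}$, and these two transcripts lie in the same distance class.
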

We continue analogously with time lag $h=2$. From \eqref{tr_biv_h}, we obtain the lag-2 probabilities
\ba
\label{tr_biv_2}
	p_{\tau;kl}(2)\ =\ 
    \sum_{(a,b)\in\Pi_k} \sum_{(\alpha,\beta)\in\Pi_l} P\Big(\pi_t=a, \pi_{t+1}=b,\ \pi_{t+2}=\alpha, \pi_{t+3}=\beta\Big).
\ea
So we end up with quadruples of 3-OPs, which correspond to certain 6-OPs with respect to the original process $(X_t)$.
Like before, we determine all 6-OPs corresponding to the given pair $(\tau_t, \tau_{t+2}\big) = \big(\pi^{[k]}, \pi^{[l]}\big)$ of transcripts, see Table~\ref{tabTrLag2} in Appendix~\ref{Tables} for the result. Using this table, we compute the lag-2 probabilities $\mP\!_\tau(2)$ for the case where $(X_t)$ is \iid, because then, the 6-OPs are uniformly distributed with probability $1/6!=1/720$ each. So again, we simply have to count which combination of transcripts corresponds to how many 6-OPs.

\begin{prop}
\label{prop_tr_biv2_iid}
If $(X_t)$ is \iid, then the transcripts' bivariate lag-2 probabilities $\mP\!_\tau(2)=\big(p_{\tau;kl}(2)\big)_{k,l=1,\ldots,m!}$ are given by
$$
\mP\!_\tau(2)\ =\ 
\frac{1}{720}\,\left(\begin{array}{cccccc}
2 & 7 & 7 & 18 & 18 & 8 \\
7 & 0 & 14 & 25 & 6 & 8 \\
7 & 14 & 0 & 6 & 25 & 8 \\
18 & 6 & 25 & 77 & 48 & 36 \\
18 & 25 & 6 & 48 & 77 & 36 \\
8 & 8 & 8 & 36 & 36 & 24 \\
\end{array}\right).
$$
\end{prop}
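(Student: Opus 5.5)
The plan is to reduce the lag-2 transcript probabilities to counting 6-OPs, exactly as done for lag~1 in Proposition~\ref{prop_tr_biv1_iid}. By \eqref{tr_biv_2}, the event $\{\tau_t=\pi^{[k]},\tau_{t+2}=\pi^{[l]}\}$ is determined by the quadruple $(\pi_t,\pi_{t+1},\pi_{t+2},\pi_{t+3})$. This quadruple is in turn a function of the rank pattern of the segment $(X_t,\ldots,X_{t+5})$, i.e.\ of its 6-OP. Under the \iid\ assumption with continuous marginals, all $6!=720$ rank orderings are equally likely. Hence $p_{\tau;kl}(2)=N_{kl}/720$, where $N_{kl}$ is the number of 6-OPs whose first transcript ($\pi_t$ to $\pi_{t+1}$) equals $\pi^{[k]}$ and whose third transcript ($\pi_{t+2}$ to $\pi_{t+3}$) equals $\pi^{[l]}$. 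The middle transcript is unconstrained.

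The concrete computation proceeds as follows. For each of the 720 permutations of $\{1,\ldots,6\}$, I read off the four 3-OPs of the windows $(1,2,3),(2,3,4),(3,4,5),(4,5,6)$ in the convention \eqref{ordpattern}. I then form $\tau(\pi_t,\pi_{t+1})$ and $\tau(\pi_{t+2},\pi_{t+3})$ via the Cayley table \eqref{tab_tr} and tally the result into a $6\times6$ array. This is the content of Table~\ref{tabTrLag2}. A cleaner route that reduces bookkeeping is to note that the first transcript depends only on the 4-OP of $(X_t,\ldots,X_{t+3})$ and the second only on the 4-OP of $(X_{t+2},\ldots,X_{t+5})$, and these two windows overlap in two positions. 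Table~\ref{tabTrLag0} lists, for each transcript, its 4-OPs. So $N_{kl}$ counts the 6-OPs whose first-four restriction lies in the $k$th column of Table~\ref{tabTrLag0} and whose last-four restriction lies in the $l$th column. For a given pair of 4-OPs $(A,B)$ that are compatible on the overlap (the relative order of positions 3,4 in $A$ agrees with that of positions 1,2 in $B$), the number of 6-OPs restricting to both can be counted by interleaving. The two outer values of $A$ must be merged with the two outer values of $B$ consistently with the shared pair, which gives a small combinatorial count.

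For sanity checks I would verify the following:
\begin{itemize}
\item All entries sum to $720$.
\item The row sums equal $720\,p_{\tau;k}$ from Proposition~\ref{prop_tr_marg_iid}, i.e.\ $(60,60,60,210,210,120)$. For example, the first row gives $2+7+7+18+18+8=60$.
\item The matrix is symmetric, as it must be by time reversibility of \iid\ sequences combined with the structure of the $\Pi_k$ (reversal and the complement map $x\mapsto -x$ permute the transcripts).
\end{itemize}
The symmetry $k\leftrightarrow$ swapping $2\leftrightarrow3$, $4\leftrightarrow5$ should also be visible and can be explained by the reflection symmetries of $S_3$ acting on patterns.

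The main obstacle is purely the error-prone enumeration. In practice I would carry it out by a short computer enumeration over the 720 permutations, then confirm the marginal and symmetry checks above. An exact hand derivation through the 4-OP overlap counting is feasible but tedious. The delicate point there is handling the interleaving counts correctly for each compatible pair $(A,B)$; getting those wrong is the most likely source of mistakes.
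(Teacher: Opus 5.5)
Your core argument is the paper's own. Under the i.i.d.\ null each of the $6!=720$ rank orderings of $(X_t,\ldots,X_{t+5})$ has probability $1/720$, and $(\tau_t,\tau_{t+2})$ is a function of this 6-OP. So $p_{\tau;kl}(2)$ is the number of 6-OPs in cell $(k,l)$ divided by $720$, which is exactly the enumeration recorded in Table~\ref{tabTrLag2}. Your overlapping-4-OP bookkeeping only reorganises the same count, and your total and row-sum checks are correct.

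There is, however, a concrete error in your third sanity check, and it would lead you to reject the correct answer. The matrix is \emph{not} symmetric: $720\,p_{\tau;24}(2)=25$ but $720\,p_{\tau;42}(2)=6$, and Table~\ref{tabTrLag2} lists 25 and 6 patterns in these cells. Already $\mP\!_\tau(1)$ in Proposition~\ref{prop_tr_biv1_iid} has entries $6\neq 1$ at positions $(2,4)$ and $(4,2)$.

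Time reversal does not give $p_{\tau;kl}=p_{\tau;lk}$. Reversing a window sends each 3-OP $\pi$ to $\pi\circ\pi^{[6]}$ and reverses the order of the OPs. Since $\tau(b\circ\pi^{[6]},a\circ\pi^{[6]})=\tau(a,b)^{-1}$, the pair $(\tau_t,\tau_{t+2})$ is replaced by $(\tau_{t+2}^{-1},\tau_t^{-1})$. The correct identity is therefore $p_{\tau;kl}(2)=p_{\tau;\iota(l)\iota(k)}(2)$, where $\iota$ (inversion in $S_3$) swaps the indices $4$ and $5$ and fixes the rest.

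The complement map $x\mapsto -x$ sends $\pi$ to $\pi^{[6]}\circ\pi$, so it conjugates transcripts by $\pi^{[6]}$, which swaps $2\leftrightarrow 3$ and $4\leftrightarrow 5$. This gives your (correct) second symmetry $p_{\tau;kl}(2)=p_{\tau;c(k)c(l)}(2)$. Combining the two yields $p_{\tau;kl}(2)=p_{\tau;s(l)s(k)}(2)$, with $s$ swapping only $2$ and $3$, which is still not plain symmetry.

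Replace the symmetry check by these identities, which the stated matrix does satisfy; for example $p_{\tau;24}(2)=p_{\tau;52}(2)=p_{\tau;43}(2)=p_{\tau;35}(2)=25/720$.
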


Similarly as in Corollary~\ref{kor_d_biv1_iid}, we obtain:

\begin{kor}
\label{kor_d_biv2_iid}
If $(X_t)$ is \iid, then the bivariate lag-2 probabilities of the Cayley and Kendall distances, respectively, are
$$
\mP\!_{\textup{C}}(2)\ =\ \frac{1}{360}\,\left(
\begin{array}{ccc}
 1 & 11 & 18 \\
 11 & 42 & 67 \\
 18 & 67 & 125 \\
\end{array}
\right)\quad
\text{ with acf }
\rho_{\textup{C}}(2) = 0
$$
and
$$
\mP\!_{\textup{K}}(2)\ =\ \frac{1}{360}\,\left(
\begin{array}{cccc}
 1 & 7 & 18 & 4 \\
 7 & 14 & 31 & 8 \\
 18 & 31 & 125 & 36 \\
 4 & 8 & 36 & 12 \\
\end{array}
\right)\quad
\text{ with acf }
\rho_{\textup{K}}(2) = \tfrac{8}{115}.
$$
\end{kor}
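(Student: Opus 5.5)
The plan is to argue exactly as for Corollary~\ref{kor_d_biv1_iid}, now starting from Proposition~\ref{prop_tr_biv2_iid}. By Proposition~\ref{PropDistancesCK}, the events $\{d_{\textup{C},t}=k\}$ and $\{d_{\textup{K},t}=k\}$ are disjoint unions of events $\{\tau_t=\pi^{[j]}\}$. For the Cayley distance the groups are $\{1\}$, $\{2,3,6\}$, $\{4,5\}$. For the Kendall distance they are $\{1\}$, $\{2,3\}$, $\{4,5\}$, $\{6\}$. Hence $p_{\textup{C};kl}(2)$ is the sum of $p_{\tau;ij}(2)$ over all $i$ in the $k$th group and $j$ in the $l$th group. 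In matrix form this reads $\mP\!_{\textup{C}}(2)=\mT_{\textup{C}}\,\mP\!_\tau(2)\,\mT_{\textup{C}}^\top$, with $\mT_{\textup{C}}$ from \eqref{trans_mat_C}. The analogous identity $\mP\!_{\textup{K}}(2)=\mT_{\textup{K}}\,\mP\!_\tau(2)\,\mT_{\textup{K}}^\top$ holds with \eqref{trans_mat_K}.

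Next I compute the block sums of the $720$-scaled matrix of Proposition~\ref{prop_tr_biv2_iid} and divide by $2$ to pass to the denominator $360$. For Cayley, the $(0,0)$ entry is $2$ and the $(0,1)$ entry is $7+7+8=22$. The $\{2,3,6\}\times\{2,3,6\}$ block sums to $84$, the $\{2,3,6\}\times\{4,5\}$ block to $134$, and the $\{4,5\}\times\{4,5\}$ block to $250$. This yields the stated matrix. The Kendall matrix is obtained in the same way, and both are symmetric because $\mP\!_\tau(2)$ is. As a check, the row sums must reproduce the marginals $\fp_{\textup{C}}$ and $\fp_{\textup{K}}$ of Proposition~\ref{prop_tr_marg_iid}.

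For the acf I use $\rho(2)=\big(E[d_t d_{t+2}]-\mu^2\big)/\sigma^2$, with $\mu$ and $\sigma^2$ taken from Proposition~\ref{prop_tr_marg_iid}. For Cayley,
\[
E[d_{\textup{C},t}d_{\textup{C},t+2}]=\frac{42+2\cdot2\cdot67+4\cdot125}{360}=\frac{810}{360}=\frac94=\mu_{\textup{C}}^2 .
\]
So the covariance vanishes and $\rho_{\textup{C}}(2)=0$. For Kendall,
\[
E[d_{\textup{K},t}d_{\textup{K},t+2}]=\frac{14+4\cdot31+6\cdot8+4\cdot125+12\cdot36+9\cdot12}{360}=\frac{1226}{360}.
\]
Subtracting $\mu_{\textup{K}}^2=\frac{121}{36}=\frac{1210}{360}$ gives a covariance of $\frac{16}{360}=\frac{2}{45}$. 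Dividing by $\sigma_{\textup{K}}^2=\frac{23}{36}$ gives $\rho_{\textup{K}}(2)=\frac{8}{115}$.

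There is no conceptual obstacle, since the whole argument is a linear transformation of Proposition~\ref{prop_tr_biv2_iid}. The main risk is arithmetic bookkeeping in the block sums, which I would guard against with the marginal-consistency check above.
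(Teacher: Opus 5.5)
Your proposal is correct and follows the paper's own route: the paper obtains this corollary ``similarly as in Corollary~\ref{kor_d_biv1_iid}'', i.e.\ as $\mT_{\textup{C}}\,\mP\!_\tau(2)\,\mT_{\textup{C}}^\top$ and $\mT_{\textup{K}}\,\mP\!_\tau(2)\,\mT_{\textup{K}}^\top$ from Proposition~\ref{prop_tr_biv2_iid}, with the acf then computed from the moments in Proposition~\ref{prop_tr_marg_iid}. Your block sums, your marginal check, and the values $E[d_{\textup{C},t}d_{\textup{C},t+2}]=9/4=\mu_{\textup{C}}^2$ and $E[d_{\textup{K},t}d_{\textup{K},t+2}]=1226/360$ (giving $\rho_{\textup{K}}(2)=8/115$) are all correct.
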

Note that the Cayley distances are uncorrelated at lag~2, but they are not independent, because their bivariate distribution according to Corollary~\ref{kor_d_biv2_iid} differs from the product of the marginal distributions:
$$
\fp_{\textup{C}}\,\fp_{\textup{C}}^\top\ =\ \big(p_{\textup{C};k}\, p_{\textup{C};l}\big)_{k,l=0,1,2}\ =\ \frac{1}{144}\,\left(
\begin{array}{ccc}
 1 & 4 & 7 \\
 4 & 16 & 28 \\
 7 & 28 & 49 \\
\end{array}
\right).
$$
At this point, let us briefly recall that both the SRW and GCT process have a 1-dependent OP-series \citep{sousa22,silbernagel26} and, thus, a 2-dependent transcript series, see the paragraph below \eqref{tr_biv_1}. So Tables~\ref{tabTrLag1} and~\ref{tabTrLag2} will be useful for computing the full bivariate lag-$h$ distributions of their transcript series, which appears to be another interesting direction for future research. If the OP-series arises from an \iid\ process $(X_t)$, in turn, it is known to be 2-dependent such that its transcript series is even 3-dependent. Therefore, we also need to compute the lag-3 probabilities $\mP\!_\tau(3)$ in order to obtain the full bivariate lag-$h$ distributions of the transcript series corresponding to the \iid\ process $(X_t)$. This is done like before by counting how many 7-OPs belong to  given pair $(\tau_t, \tau_{t+3}\big) = \big(\pi^{[k]}, \pi^{[l]}\big)$ of transcripts (a full table is omitted this time), and by using that the 7-OPs are uniformly distributed with probability $1/7!=1/5040$ each.

\begin{prop}
\label{prop_tr_biv3_iid}
If $(X_t)$ is \iid, then the transcripts' bivariate lag-3 probabilities $\mP\!_\tau(3)=\big(p_{\tau;kl}(3)\big)_{k,l=1,\ldots,m!}$ are given by
$$
\mP\!_\tau(3)\ =\ 
\frac{1}{5040}\,\left(\begin{array}{cccccc}
42 & 35 & 35 & 119 & 119 & 70 \\
35 & 40 & 30 & 116 & 129 & 70 \\
35 & 30 & 40 & 129 & 116 & 70 \\
119 & 129 & 116 & 421 & 440 & 245 \\
119 & 116 & 129 & 440 & 421 & 245 \\
70 & 70 & 70 & 245 & 245 & 140 \\
\end{array}\right).
$$
\end{prop}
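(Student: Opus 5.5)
By \eqref{tr_biv_h} with $h=3$, the event $\{\tau_t=\pi^{[k]},\tau_{t+3}=\pi^{[l]}\}$ depends only on $(\pi_t,\pi_{t+1})$ and $(\pi_{t+3},\pi_{t+4})$, hence only on the segment $(X_t,\ldots,X_{t+6})$. Under the \iid\ assumption this segment has a uniformly distributed 7-OP, each with probability $1/5040$. The plan is therefore to compute, for every $7$-OP $\sigma\in S_7$, the resulting pair of transcripts and then count. Concretely, $\sigma$ determines $\tau_t$ through its restriction to positions $1,\ldots,4$ (a 4-OP, cf.\ Table~\ref{tabTrLag0}) and $\tau_{t+3}$ through its restriction to positions $4,\ldots,7$. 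So
$$
p_{\tau;kl}(3)\ =\ \frac{1}{5040}\,\#\big\{\sigma\in S_7:\ \sigma|_{1..4}\in A_k,\ \sigma|_{4..7}\in A_l\big\},
$$
where $A_k$ is the set of 4-OPs listed under $\tau_t=\pi^{[k]}$ in Table~\ref{tabTrLag0}.

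To make the count tractable by hand, I would avoid enumerating all $5040$ permutations directly. Instead, I would reduce to counting pairs of 4-OPs that overlap in one position. For a fixed pair of 4-OPs $(u,v)$, the number of 7-OPs whose first four entries have pattern $u$ and whose last four have pattern $v$ depends only on the rank of the shared element $X_{t+3}$ within each block. If that rank is $r$ in the first block and $s$ in the second, then the shared element has $r-1$ smaller and $4-r$ larger elements among the first three values, and $s-1$ smaller and $4-s$ larger among the last three. Counting the interleavings of these two independent triples around the common point gives
$$
N(r,s)=\binom{r+s-2}{r-1}\binom{8-r-s}{4-r}.
$$
Then $p_{\tau;kl}(3)=\frac{1}{5040}\sum_{u\in A_k,v\in A_l} N(r_u,s_v)$, where $r_u$ is the rank of the last entry of $u$ and $s_v$ is the rank of the first entry of $v$. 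This reduces everything to tabulating, for each $k$, how many 4-OPs in $A_k$ end with rank $r$ (and begin with rank $s$), for $r,s\in\{1,\dots,4\}$. With those counts, $\mP_\tau(3)$ is a product $R^\top N S$ of small matrices.

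Several checks come essentially for free. The row sums must reproduce $\fp_\tau$ from Proposition~\ref{prop_tr_marg_iid}: for example, row 1 sums to $420=5040\cdot 2/24$. The symmetry $\mP_\tau(3)=\mP_\tau(3)^\top$ should hold by time reversal combined with the reflection $X\mapsto -X$, consistent with the displayed matrix. The symmetry swapping indices $2\leftrightarrow3$ and $4\leftrightarrow5$ comes from time reversal alone.

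The main obstacle is correctly extracting the rank data of $A_k$, that is, the ranks of the first and last positions in each 4-OP of Table~\ref{tabTrLag0}. A second obstacle is making sure the permutation representation is converted to the rank of a given position, not the position of a given rank; this conversion is easy to get backwards. I would first verify the method by reproducing row 1 and the known marginals, and if needed cross-check with a brute-force enumeration over $S_7$.
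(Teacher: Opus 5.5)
Your argument is correct, and it organizes the count differently from the paper. The paper simply enumerates, for each pair $(\pi^{[k]},\pi^{[l]})$, the compatible 7-OPs of $(X_t,\ldots,X_{t+6})$ (the table is omitted) and weights each by $1/5040$.

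You instead use the fact that the two 4-OP windows share only $X_{t+3}$. The count therefore factorizes as $5040\,\mP\!_\tau(3)=R^\top N S$, with the interleaving kernel $N(r,s)=\binom{r+s-2}{r-1}\binom{8-r-s}{4-r}$. This reproduces the statement:
\begin{itemize}
\item From Table~\ref{tabTrLag0}, the last-position rank profiles are $R_1=(1,0,0,1)$, $R_2=(1,0,1,0)$, $R_3=(0,1,0,1)$, $R_4=(2,2,2,1)$, $R_5=(1,2,2,2)$, $R_6=(1,1,1,1)$.
\item The first-position profiles are $S_k=R_{k'}$, where $k\mapsto k'$ swaps $4$ and $5$ and fixes the other indices.
\item The product gives every entry of the displayed matrix. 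For example, row~2 is $(21+14,\,24+16,\,11+19,\,50+66,\,69+60,\,35+35)=(35,40,30,116,129,70)$.
\end{itemize}
Your route gives a short, hand-checkable derivation and makes visible that $\tau_t$ and $\tau_{t+3}$ are coupled only through the rank of the shared observation. The paper's brute-force enumeration has the advantage of being the same procedure used for lags~1 and~2, where the windows overlap in more points and an analogous kernel would be messier.

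One of your ``free'' checks is false, however, and it would make a correct computation look inconsistent. $\mP\!_\tau(3)$ is not symmetric: the displayed matrix has $p_{\tau;24}(3)=116/5040\neq 129/5040=p_{\tau;42}(3)$.
\begin{itemize}
\item Time reversal replaces each transcript by its inverse and reverses the time order. It therefore yields $p_{\tau;kl}(3)=p_{\tau;l'k'}(3)$, i.e.\ transposition plus $4\leftrightarrow5$, since $(\pi^{[4]})^{-1}=\pi^{[5]}$. In your notation this is exactly $S_k=R_{k'}$ together with $N=N^\top$.
\item The relabeling $2\leftrightarrow3$, $4\leftrightarrow5$ without transposition, $p_{\tau;kl}(3)=p_{\tau;c(k)c(l)}(3)$, comes from the reflection $X\mapsto -X$, not from time reversal. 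The reflection conjugates transcripts by $\pi^{[6]}$ and corresponds to $r\mapsto 5-r$, with $N$ centrosymmetric.
\item Combining the two gives transposition plus $2\leftrightarrow3$, never plain transposition.
\end{itemize}
These checks are not needed for the proof, but as stated they should be corrected. Your row- and column-sum checks against $\fp_\tau$ are fine.
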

Due to the 3-dependence of the transcript series (if $(X_t)$ is \iid), it holds that $\mP\!_\tau(h) = \fp_{\tau}\,\fp_{\tau}^\top$ for lag $|h|\geq 4$.

\begin{kor}
\label{kor_d_biv3_iid}
If $(X_t)$ is \iid, then the bivariate lag-3 probabilities of the Cayley and Kendall distances, respectively, are
$$
\mP\!_{\textup{C}}(3)\ =\ \frac{1}{360}\,\left(
\begin{array}{ccc}
 3 & 10 & 17 \\
 10 & 40 & 70 \\
 17 & 70 & 123 \\
\end{array}
\right)
\text{ with acf }
\rho_{\textup{C}}(3) = \tfrac{1}{75}
$$
and
$$
\mP\!_{\textup{K}}(3)\ =\ \frac{1}{360}\,\left(
\begin{array}{cccc}
 3 & 5 & 17 & 5 \\
 5 & 10 & 35 & 10 \\
 17 & 35 & 123 & 35 \\
 5 & 10 & 35 & 10 \\
\end{array}
\right)
\text{ with acf }
\rho_{\textup{K}}(3) = \tfrac{1}{115}.
$$
\end{kor}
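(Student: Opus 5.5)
The plan mirrors Corollaries~\ref{kor_d_biv1_iid} and~\ref{kor_d_biv2_iid}. First, the bivariate distributions of the distances are linear images of the transcripts' bivariate distribution. Since $d_{\textup{C},t}$ (resp.\ $d_{\textup{K},t}$) is a deterministic function of $\tau_t$ via Proposition~\ref{PropDistancesCK}, and the rows of $\mT_{\textup{C}}$ (resp.\ $\mT_{\textup{K}}$) are the indicator vectors of the preimages, we have
\[
\mP\!_{\textup{C}}(3)=\mT_{\textup{C}}\,\mP\!_\tau(3)\,\mT_{\textup{C}}^\top,
\qquad
\mP\!_{\textup{K}}(3)=\mT_{\textup{K}}\,\mP\!_\tau(3)\,\mT_{\textup{K}}^\top,
\]
with $\mP\!_\tau(3)$ taken from Proposition~\ref{prop_tr_biv3_iid}. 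Concretely, each entry $p_{\textup{C};kl}(3)$ is the sum of the block of $\mP\!_\tau(3)$ with rows in the transcript class of distance $k$ and columns in that of distance $l$. The classes are $\{1\}$, $\{2,3,6\}$, $\{4,5\}$ for Cayley and $\{1\}$, $\{2,3\}$, $\{4,5\}$, $\{6\}$ for Kendall.

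As a sample check, the $(0,0)$ entry is $42/5040=3/360$, since $5040/360=14$. The Cayley $(1,1)$ block sum is $40+30+70+30+40+70+70+70+140=560$, and $560/14=40$. I would do the remaining block sums the same way, dividing by $14$ throughout. As a sanity check, the row sums must reproduce $\fp_{\textup{C}}$ and $\fp_{\textup{K}}$ from Proposition~\ref{prop_tr_marg_iid}; for instance $3+10+17=30=360/12$.

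For the acf, compute $E[d_t d_{t+3}]=\sum_{k,l} k\,l\,p_{kl}(3)$, subtract $\mu^2$, and divide by $\sigma^2$, using the means and variances of Proposition~\ref{prop_tr_marg_iid}. For Cayley this gives
\[
E[d_{\textup{C},t}d_{\textup{C},t+3}]=\tfrac{1}{360}\,(40+2\cdot 70+2\cdot 70+4\cdot 123)=\tfrac{812}{360},
\]
so the covariance is $\tfrac{812}{360}-\tfrac{9}{4}=\tfrac{2}{360}$ and $\rho_{\textup{C}}(3)=\tfrac{2/360}{5/12}=\tfrac{1}{75}$. The Kendall case is analogous with $\mu_{\textup{K}}=11/6$ and $\sigma_{\textup{K}}^2=23/36$, and the same scheme should yield $\rho_{\textup{K}}(3)=\tfrac{1}{115}$.

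The main obstacle is bookkeeping rather than any conceptual step: Proposition~\ref{prop_tr_biv3_iid} must be correct, since it rests on counting 7-OPs, and the aggregation has to be done without arithmetic slips. The marginal check above and the symmetry of the resulting matrices serve as consistency tests.
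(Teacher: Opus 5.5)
Your proposal is correct and follows the paper's own route: $\mP\!_{\textup{C}}(3)=\mT_{\textup{C}}\,\mP\!_\tau(3)\,\mT_{\textup{C}}^\top$ and $\mP\!_{\textup{K}}(3)=\mT_{\textup{K}}\,\mP\!_\tau(3)\,\mT_{\textup{K}}^\top$ with $\mP\!_\tau(3)$ from Proposition~\ref{prop_tr_biv3_iid} (as in Corollary~\ref{kor_d_biv1_iid}), followed by reading off the acf from the moments in Proposition~\ref{prop_tr_marg_iid}. The Kendall case you left as ``analogous'' does check out: $E[d_{\textup{K},t}\,d_{\textup{K},t+3}]=\tfrac{1212}{360}$, so the covariance is $\tfrac{2}{360}$ and $\rho_{\textup{K}}(3)=\tfrac{2/360}{23/36}=\tfrac{1}{115}$.
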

Again due to the 3-dependence, it holds that $\mP\!_{\textup{C}}(h) = \fp_{\textup{C}}\,\fp_{\textup{C}}^\top$ and $\mP\!_{\textup{K}}(h) = \fp_{\textup{K}}\,\fp_{\textup{K}}^\top$ for lag $|h|\geq 4$, as well as $\rho_{\textup{C}}(h) = \rho_{\textup{K}}(h) =0$.


\numberwithin{satz}{section}

\section{Asymptotics of Transcript Statistics}
\label{Asymptotics of Transcript Statistics}
In what follows, we are interested in the asymptotic distribution of statistics being based on a time series $\tau_1,\ldots,\tau_n$ of transcripts. More precisely, we estimate the true marginal distribution of transcripts, $\fp_{\tau}$, by the corresponding vector~$\widehat{\fp}_{\tau}$ of relative frequencies computed from $\tau_1,\ldots,\tau_n$, and the considered transcript statistics are functions of~$\widehat{\fp}_{\tau}$. As the starting point, we consider the asymptotics of the vector~$\widehat{\fp}_{\tau}$ itself. This vector can be expressed as the mean across binary indicator $\fZ_1, \ldots,\fZ_n$, which are defined by the ``one-hot'' encoding: $Z_{t,k}=\indfkt(\tau_t=\pi^{[k]})$ for $k=1,\ldots,m!$ and $t=1,\ldots,n$, where the indicator functions $\indfkt(A)$ equals~1 (0) if~$A$ is true (false). Thus, $\fp_{\tau} = E[\fZ_t]$ and $\widehat{\fp}_{\tau} = \overline{\fZ} := \frac{1}{n}\,\sum_{t=1}^n \fZ_t$. With an analogous argumentation as in \citet[Theorem~II.6]{silbernagel26}, we obtain the following central limit theorem, where~``$\0$'' denotes the vector of zeros and~``$\norm$'' abbreviates the (multivariate) normal distribution.

\begin{satz}
\label{satzCLTtranscripts}
Let the transcript series $(\tau_t)$ be stationary and $\alpha$-mixing with mixing coefficients $\alpha_i\geq 0$, $i\in\bbn_0$, satisfying $\sum_{i=0}^\infty \alpha_i<\infty$. Then, as $n\to\infty$,
$$
\sqrt{n}\,\big(\widehat{\fp}_{\tau}-\fp_{\tau}\big)\ \overset{d}{\to}\ \norm(\0,\fSigma_{\tau}),
$$
where the asymptotic covariance matrix~$\fSigma_{\tau}$ equals
$$
\fSigma_{\tau}\ =\ \textup{diag}(\fp_{\tau})-\fp_{\tau}\,\fp_{\tau}^\top\ +\ \sum_{h=1}^\infty \Big(\mP\!_\tau(h) + \mP\!_\tau(h)^\top - 2\,\fp_{\tau}\,\fp_{\tau}^\top\Big).
$$
\end{satz}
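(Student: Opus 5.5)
The plan is to reduce the statement to a standard central limit theorem for bounded, strongly mixing, vector-valued sequences, applied to the one-hot indicators $\fZ_t$. Since $\fZ_t$ is a measurable function of $\tau_t$ alone, the process $(\fZ_t)$ is stationary and $\alpha$-mixing with mixing coefficients bounded by those of $(\tau_t)$. Moreover, $\widehat{\fp}_{\tau}=\overline{\fZ}$ and $E[\fZ_t]=\fp_{\tau}$. Using the Cram\'er--Wold device, it suffices to show that for every fixed $\mbox{\boldmath $c$}\in\mathbb{R}^{m!}$ the scalar sequence $Y_t:=\mbox{\boldmath $c$}^\top(\fZ_t-\fp_{\tau})$ satisfies $\sqrt{n}\,\overline{Y}\overset{d}{\to}\norm(0,\mbox{\boldmath $c$}^\top\fSigma_{\tau}\mbox{\boldmath $c$})$. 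The $Y_t$ are centred, stationary, $\alpha$-mixing with the same coefficients, and bounded by $\|\mbox{\boldmath $c$}\|_1$.

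For bounded sequences, the CLT for $\alpha$-mixing processes (e.g.\ Ibragimov's theorem in the form given in Doukhan's or Rio's monographs) requires only $\sum_i \alpha_i<\infty$, which is exactly the assumption. More precisely, it requires Rio's condition $\int_0^1 \alpha^{-1}(u)Q^2(u)\,du<\infty$, and this reduces to summability of the mixing coefficients because the quantile function $Q$ is bounded. This is the step I expect to require care: one has to cite a version valid under summable rather than polynomially decaying coefficients, and check that boundedness removes all moment conditions. Davydov's covariance inequality $|Cov(Y_0,Y_h)|\leq 4\|\mbox{\boldmath $c$}\|_1^2\,\alpha_h$ then gives absolute summability of the autocovariances. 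Hence the long-run variance $\sigma^2=Var(Y_0)+2\sum_{h\geq1}Cov(Y_0,Y_h)$ exists and equals $\lim_n n\,Var(\overline{Y})$. If $\sigma^2=0$, the limit is degenerate at $0$, which is still consistent with the claim.

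It remains to identify $\sigma^2$ with $\mbox{\boldmath $c$}^\top\fSigma_{\tau}\mbox{\boldmath $c$}$. Since the one-hot vector has exactly one entry equal to~1, we have $E[\fZ_t\fZ_t^\top]=\textup{diag}(\fp_{\tau})$, so $Cov(\fZ_t)=\textup{diag}(\fp_{\tau})-\fp_{\tau}\fp_{\tau}^\top$. For $h\geq1$, $E[\fZ_t\fZ_{t+h}^\top]=\mP\!_\tau(h)$ by the definition of $p_{\tau;kl}(h)$, so $Cov(\fZ_t,\fZ_{t+h})=\mP\!_\tau(h)-\fp_{\tau}\fp_{\tau}^\top$. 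In the long-run variance, the lag-$h$ and lag-$(-h)$ terms combine, using $\mP\!_\tau(-h)=\mP\!_\tau(h)^\top$, into $\mP\!_\tau(h)+\mP\!_\tau(h)^\top-2\fp_{\tau}\fp_{\tau}^\top$. Summing over $h$ gives the stated $\fSigma_{\tau}$, with convergence of the series guaranteed entrywise by the Davydov bound. Finally, Cram\'er--Wold yields the multivariate normal limit.

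As a remark, under the \iid\ null the transcript series is 3-dependent, so it is trivially $\alpha$-mixing with $\alpha_i=0$ for $i\geq4$. In that case the sum in $\fSigma_{\tau}$ truncates at $h=3$ and can be evaluated using Propositions~\ref{prop_tr_biv1_iid}--\ref{prop_tr_biv3_iid}.
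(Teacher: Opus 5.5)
Your proposal is correct and follows essentially the same route as the paper. The paper gives no self-contained proof and instead defers to the analogous argument of \citet[Theorem~II.6]{silbernagel26}. That argument consists of applying a CLT for bounded, stationary $\alpha$-mixing sequences with summable coefficients to the one-hot vectors $\fZ_t$, and then identifying the covariance via $E[\fZ_t\fZ_t^\top]=\textup{diag}(\fp_{\tau})$ and $E[\fZ_t\fZ_{t+h}^\top]=\mP\!_\tau(h)$. Your explicit treatment of the degenerate directions, e.g.\ $\mbox{\boldmath $c$}=(1,\ldots,1)^\top$ reflecting the rank deficiency of $\fSigma_{\tau}$, via $n\,Var(\overline{Y})\to\sigma^2$ is a worthwhile detail that the paper leaves implicit.
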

The $\alpha$-mixing condition is automatically satisfied if the transcript series $(\tau_t)$ is $q$-dependent with some $q\in\bbn$, \ie if $\mP\!_\tau(h) = \fp_{\tau}\,\fp_{\tau}^\top$ for all $|h|>q$. Then, the summation for $\fSigma$ even reduces to a finite sum for $h=1,\ldots,q$. Recall that SRW and GCT lead to a 2-dependent transcript series $(\tau_t)$, while the case of $(X_t)$ being \iid\ implies $(\tau_t)$ to be 3-dependent.
Hence, Theorem~\ref{satzCLTtranscripts} together with Propositions~\ref{prop_tr_marg_iid}, \ref{prop_tr_biv1_iid}, \ref{prop_tr_biv2_iid}, and~\ref{prop_tr_biv3_iid} immediately yields the following result.

\begin{kor}
\label{korCLTtranscripts_iid}
Let $(\tau_t)$ be the transcript series from an \iid\ process $(X_t)$, \ie with $\fp_{\tau} = 
\tfrac{1}{24}\,\big(2, 2, 2, 7, 7, 4\big)^\top$. Then, as $n\to\infty$,
$$
\sqrt{n}\,\big(\widehat{\fp}_{\tau}-\fp_{\tau}\big)\ \overset{d}{\to}\ \norm(\0,\fSigma_{\tau}),
$$
where the asymptotic covariance matrix~$\fSigma_{\tau}$ has rank~5 and equals
$$
\fSigma_{\tau}\ =\ \frac{1}{20\,160}\, \left(
\begin{array}{cccccc}
 1820 & 28 & 28 & -462 & -462 & -952 \\
 28 & 1020 & 380 & -406 & -406 & -616 \\
 28 & 380 & 1020 & -406 & -406 & -616 \\
 -462 & -406 & -406 & 6259 & -4453 & -532 \\
 -462 & -406 & -406 & -4453 & 6259 & -532 \\
 -952 & -616 & -616 & -532 & -532 & 3248 \\
\end{array}
\right).
$$
\end{kor}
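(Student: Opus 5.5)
The plan is to specialise Theorem~\ref{satzCLTtranscripts} to the \iid\ case and then compute the covariance matrix explicitly. Under \iid\ $(X_t)$, the OP-series $(\pi_t)$ is 2-dependent, so $(\tau_t)$ is 3-dependent. Indeed, $\tau_t$ is a function of $(X_t,\ldots,X_{t+3})$ only, so $\tau_t$ and $\tau_{t+h}$ depend on disjoint blocks of \iid\ variables once $h\geq 4$. A $q$-dependent stationary sequence is $\alpha$-mixing with $\alpha_i=0$ for $i>q$, so the summability condition holds trivially and the theorem applies. The infinite sum in $\fSigma_{\tau}$ then truncates at $h=3$:
$$
\fSigma_{\tau}=\textup{diag}(\fp_{\tau})-\fp_{\tau}\fp_{\tau}^\top+\sum_{h=1}^{3}\Big(\mP\!_\tau(h)+\mP\!_\tau(h)^\top-2\,\fp_{\tau}\fp_{\tau}^\top\Big).
$$

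Next we insert $\fp_{\tau}$ from Proposition~\ref{prop_tr_marg_iid} and $\mP\!_\tau(1),\mP\!_\tau(2),\mP\!_\tau(3)$ from Propositions~\ref{prop_tr_biv1_iid}, \ref{prop_tr_biv2_iid} and~\ref{prop_tr_biv3_iid}. Since $120$, $720$, $5040$ and $24^2=576$ all divide $20\,160$, every term can be written over the common denominator $20\,160$, and the entries are summed. The total contribution of the $\fp_\tau\fp_\tau^\top$ terms is $-7\,\fp_{\tau}\fp_{\tau}^\top$, and $\fp_{\tau}\fp_{\tau}^\top=\frac{1}{576}(v_kv_l)$ with $v=(2,2,2,7,7,4)$. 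As a check, the $(1,1)$ entry is
$$
\tfrac{2}{24}-7\cdot\tfrac{4}{576}+2\Big(\tfrac{2}{120}+\tfrac{2}{720}+\tfrac{42}{5040}\Big)
=\tfrac{1680-980+672+112+336}{20160}=\tfrac{1820}{20160},
$$
which matches the stated value. The remaining entries follow the same way. By symmetry of the displayed matrices, and the $2\leftrightarrow3$, $4\leftrightarrow5$ symmetry visible in them, only about a dozen entries are distinct.

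For the rank, note that $\mathbf{1}^\top\widehat{\fp}_{\tau}=1$, so $\mathbf{1}^\top(\widehat{\fp}_\tau-\fp_\tau)=0$ and hence $\fSigma_{\tau}\mathbf{1}=\0$. We can confirm this directly by checking that each row of the integer matrix sums to zero, e.g.\ $1820+28+28-462-462-952=0$. So the rank is at most~5. To show it equals~5, I would exhibit a nonvanishing $5\times5$ principal minor, for instance the one obtained by deleting the last row and column. A cheaper alternative is to use the symmetric/antisymmetric structure under swapping indices $2\leftrightarrow3$ and $4\leftrightarrow5$. This block-diagonalises the matrix into two antisymmetric $1\times1$ blocks, with eigenvalues $640$ and $10712$ (up to the factor $1/20160$), plus a $4\times4$ symmetric block. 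That block has the null vector $\mathbf{1}$ and must be checked to have rank~3.

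The main obstacle is purely computational: the arithmetic over the common denominator and the rank verification. I would carry both out with exact rational arithmetic, by hand using the symmetry reduction or with a computer algebra check.
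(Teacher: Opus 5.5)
Your proposal is correct and follows the paper's route. The paper obtains this corollary by specializing Theorem~\ref{satzCLTtranscripts} to the 3-dependent transcript series, so the sum stops at $h=3$, and inserting Propositions~\ref{prop_tr_marg_iid}, \ref{prop_tr_biv1_iid}, \ref{prop_tr_biv2_iid} and~\ref{prop_tr_biv3_iid}. Your entry check and your rank argument via $\fSigma_{\tau}\mathbf{1}=\0$ (antisymmetric eigenvalues $640$ and $10712$, plus a $4\times4$ symmetric block that does have rank~3) simply spell out the routine verification that the paper leaves implicit.
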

The frequency vectors of the Cayley and Kendall distances follow similarly as in equation \eqref{p_C=T_C*p_tau} for the probability vectors:
$$
\widehat{\fp}_{\textup{C}}\ =\ \mT_{\textup{C}}\, \widehat{\fp}_{\tau},
\quad
\widehat{\fp}_{\textup{K}}\ =\ \mT_{\textup{K}}\, \widehat{\fp}_{\tau}.
$$
Hence, by the normal distribution's invariance with respect to affine transformations, it immediately follows that~$\widehat{\fp}_{\textup{C}}$ and~$\widehat{\fp}_{\textup{K}}$ are asymptotically normal as well, where the respective covariance matrices are equal to $\fSigma_{\textup{C}} = \mT_{\textup{C}}\, \fSigma_{\tau}\, \mT_{\textup{C}}^\top$ and $\fSigma_{\textup{K}} = \mT_{\textup{K}}\, \fSigma_{\tau}\, \mT_{\textup{K}}^\top$. Therefore, in the case of an \iid\ process $(X_t)$, Proposition~\ref{prop_tr_marg_iid} and Corollary~\ref{korCLTtranscripts_iid} yield the following result.

\begin{kor}
\label{korCLT_d_iid}
Let $(d_{\textup{C},t})$ and $(d_{\textup{K},t})$ be the series of Cayley and Kendall distances, respectively, from an \iid\ process $(X_t)$, \ie with $\fp_{\textup{C}} = \tfrac{1}{12}\,\big(1, 4, 7\big)^\top$ and $\fp_{\textup{K}} = \tfrac{1}{12}\,\big(1, 2, 7, 2\big)^\top$. Then, as $n\to\infty$,
$$
\sqrt{n}\,\big(\widehat{\fp}_{\textup{C}}-\fp_{\textup{C}}\big)\ \overset{d}{\to}\ \norm(\0,\fSigma_{\textup{C}})
\quad\text{and}\quad
\sqrt{n}\,\big(\widehat{\fp}_{\textup{K}}-\fp_{\textup{K}}\big)\ \overset{d}{\to}\ \norm(\0,\fSigma_{\textup{K}}),
$$
where the asymptotic covariance matrices~$\fSigma_{\textup{C}}$ and~$\fSigma_{\textup{K}}$ equal
$$
\fSigma_{\textup{C}}\ =\ \frac{1}{720}\, \left(
\begin{array}{ccc}
 65 & -32 & -33 \\
 -32 & 128 & -96 \\
 -33 & -96 & 129 \\
\end{array}
\right)
\quad\text{with rank } 2
$$
and
$$
\fSigma_{\textup{K}}\ =\ \frac{1}{720}\, \left(
\begin{array}{cccc}
 65 & 2 & -33 & -34 \\
 2 & 100 & -58 & -44 \\
 -33 & -58 & 129 & -38 \\
 -34 & -44 & -38 & 116 \\
\end{array}
\right)
\quad\text{with rank } 3.
$$
\end{kor}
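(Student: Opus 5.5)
The statement is Corollary~\ref{korCLT_d_iid}. It follows from Corollary~\ref{korCLTtranscripts_iid} by a linear map. Since $\widehat{\fp}_{\textup{C}}=\mT_{\textup{C}}\,\widehat{\fp}_{\tau}$ and $\fp_{\textup{C}}=\mT_{\textup{C}}\,\fp_{\tau}$ by \eqref{p_C=T_C*p_tau}, we have
$$
\sqrt{n}\,\big(\widehat{\fp}_{\textup{C}}-\fp_{\textup{C}}\big)=\mT_{\textup{C}}\,\sqrt{n}\,\big(\widehat{\fp}_{\tau}-\fp_{\tau}\big).
$$
The continuous mapping theorem, together with the invariance of the normal distribution under linear maps, then gives $\norm(\0,\mT_{\textup{C}}\fSigma_{\tau}\mT_{\textup{C}}^\top)$. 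The Kendall case is identical with $\mT_{\textup{K}}$. The marginal vectors $\fp_{\textup{C}}$ and $\fp_{\textup{K}}$ are those from Proposition~\ref{prop_tr_marg_iid}.

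It remains to evaluate the matrix products. Entry $(i,j)$ of $\mT\fSigma_{\tau}\mT^\top$ is the sum of the block of $\fSigma_{\tau}$ with rows in the $i$th group and columns in the $j$th group. For Cayley the groups are $\{1\},\{2,3,6\},\{4,5\}$. For example, the $(1,1)$ entry is $1820/20160=65/720$. The $(1,2)$ entry is $(28+28-952)/20160=-896/20160=-32/720$. The $(2,2)$ entry is the sum of the $\{2,3,6\}$ block, $1020+380-616+380+1020-616-616-616+3248=3584$, which gives $3584/20160=128/720$. For Kendall the groups are $\{1\},\{2,3\},\{4,5\},\{6\}$, and the same block summation is carried out. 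As an alternative check, $\fSigma_{\textup{C}}$ can be computed directly from Theorem~\ref{satzCLTtranscripts} using Corollaries~\ref{kor_d_biv1_iid}, \ref{kor_d_biv2_iid} and~\ref{kor_d_biv3_iid}, with $\textup{diag}(\fp_{\textup{C}})-\fp_{\textup{C}}\fp_{\textup{C}}^\top+\sum_{h=1}^{3}(\mP_{\textup{C}}(h)+\mP_{\textup{C}}(h)^\top-2\fp_{\textup{C}}\fp_{\textup{C}}^\top)$.

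For the ranks, each column of $\mT$ has exactly one entry equal to one, so $\mathbf{1}^\top\mT=\mathbf{1}^\top$. Since $\fSigma_{\tau}\mathbf{1}=\0$ (the relative frequencies sum to one), it follows that $\fSigma_{\textup{C}}\mathbf{1}=\0$ and $\fSigma_{\textup{K}}\mathbf{1}=\0$. Hence the ranks are at most $2$ and $3$. Equality is shown by exhibiting a nonzero $2\times2$ minor for Cayley, for instance $65\cdot128-32^2\neq0$, and a nonzero $3\times3$ minor for Kendall. The minor must be nonzero for Kendall because the symmetrization $\{2,3\}$ merges directions that could otherwise collapse.

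The main obstacle is purely bookkeeping: getting the block sums of $\fSigma_{\tau}$ exactly right and then reducing the fractions $1/20160$ to $1/720$, which is a factor of $28$. I would verify each entry against the cross-check via the distance lag probabilities.
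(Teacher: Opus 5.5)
Your proposal is correct and takes the paper's route: the paper also obtains the result by applying $\mT_{\textup{C}}$ and $\mT_{\textup{K}}$ to the transcript CLT of Corollary~\ref{korCLTtranscripts_iid} and computing $\mT\,\fSigma_{\tau}\,\mT^\top$, and your block sums all check out. The one weak spot is the closing remark about the $\{2,3\}$ symmetrization ``merging directions'', which is not an argument: to get rank~3, combine your null-vector bound with an explicit minor, e.g.\ the leading principal $3\times 3$ minor of $720\,\fSigma_{\textup{K}}$, which equals $518\,080\neq 0$.
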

Corollary~\ref{korCLT_d_iid} allows to deduce the asymptotics of the mean Cayley or Kendall distance, which is again computed as an affine transformation of the respective frequency:
$$
\begin{array}{rl}
\mu_{\textup{C}}\ =\ (0,1,2)\, \fp_{\textup{C}},
&
\overline{d_{\textup{C}}}\ =\ (0,1,2)\, \widehat{\fp}_{\textup{C}},
\\[1ex]
\mu_{\textup{K}}\ =\ (0,\ldots,3)\, \fp_{\textup{K}},
&
\overline{d_{\textup{K}}}\ =\ (0,\ldots,3)\, \widehat{\fp}_{\textup{K}}.
\end{array}
$$
So both mean distances~$\overline{d_{\textup{C}}}$ and~$\overline{d_{\textup{K}}}$ are asymptotically normally distributed with variance $(0,1,2)\, \fSigma_{\textup{C}}\, (0,1,2)^\top$ and $(0,\ldots,3)\, \fSigma_{\textup{K}}\, (0,\ldots,3)^\top$, respectively.

\begin{kor}
\label{korCLT_mean_d_iid}
Let $(d_{\textup{C},t})$ and $(d_{\textup{K},t})$ be the series of Cayley and Kendall distances, respectively, from an \iid\ process $(X_t)$, \ie with means $\mu_{\textup{C}} = \tfrac{3}{2}$ and $\mu_{\textup{K}} = \tfrac{11}{6}$. Then, as $n\to\infty$,
$$
\sqrt{n}\,\big(\,\overline{d_{\textup{C}}}-\mu_{\textup{C}}\big)\ \overset{d}{\to}\ \norm\big(0,\tfrac{13}{36}\big)
\quad\text{and}\quad
\sqrt{n}\,\big(\,\overline{d_{\textup{K}}}-\mu_{\textup{K}}\big)\ \overset{d}{\to}\ \norm\big(0,\tfrac{59}{60}\big).
$$
\end{kor}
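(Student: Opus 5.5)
The plan is to apply Corollary~\ref{korCLT_d_iid} with the linear maps $\mathbf{a}_{\textup{C}}=(0,1,2)$ and $\mathbf{a}_{\textup{K}}=(0,1,2,3)$. Since $\overline{d_{\textup{C}}}=\mathbf{a}_{\textup{C}}\,\widehat{\fp}_{\textup{C}}$ and $\mu_{\textup{C}}=\mathbf{a}_{\textup{C}}\,\fp_{\textup{C}}$, the continuous mapping theorem gives $\sqrt{n}\,(\overline{d_{\textup{C}}}-\mu_{\textup{C}})\overset{d}{\to}\norm(0,\mathbf{a}_{\textup{C}}\fSigma_{\textup{C}}\mathbf{a}_{\textup{C}}^\top)$. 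The same holds for the Kendall case. The means $\tfrac32$ and $\tfrac{11}{6}$ are taken from Proposition~\ref{prop_tr_marg_iid}, so the only real task is to identify the two asymptotic variances.

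Rather than expand the quadratic forms entrywise, I would use the equivalent representation as a long-run variance. Since $\fSigma_{\textup{C}}=\mT_{\textup{C}}\fSigma_\tau\mT_{\textup{C}}^\top$, with $\fSigma_\tau$ from Theorem~\ref{satzCLTtranscripts}, the scalar $\mathbf{a}_{\textup{C}}\fSigma_{\textup{C}}\mathbf{a}_{\textup{C}}^\top$ equals
$$
\sigma_{\textup{C}}^2\Big(1+2\sum_{h=1}^{\infty}\rho_{\textup{C}}(h)\Big),
$$
because $\mathbf{a}_{\textup{C}}\big(\mP\!_{\textup{C}}(h)-\fp_{\textup{C}}\fp_{\textup{C}}^\top\big)\mathbf{a}_{\textup{C}}^\top$ is exactly the lag-$h$ autocovariance of $(d_{\textup{C},t})$. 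The same identity holds for the Kendall distance. By the 3-dependence noted after Proposition~\ref{prop_tr_biv3_iid}, the sum stops at $h=3$.

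I would then insert the values already derived. For the Cayley distance, $\sigma_{\textup{C}}^2=\tfrac{5}{12}$ (Proposition~\ref{prop_tr_marg_iid}), and $\rho_{\textup{C}}(1)=-\tfrac{2}{25}$, $\rho_{\textup{C}}(2)=0$, $\rho_{\textup{C}}(3)=\tfrac1{75}$ (Corollaries~\ref{kor_d_biv1_iid}--\ref{kor_d_biv3_iid}). This gives
$$
\tfrac{5}{12}\big(1-\tfrac{2}{15}\big)=\tfrac{13}{36}.
$$
For the Kendall distance, $\sigma_{\textup{K}}^2=\tfrac{23}{36}$, and the autocorrelations $\tfrac{22}{115},\tfrac{8}{115},\tfrac1{115}$ sum to $\tfrac{31}{115}$. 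This gives
$$
\tfrac{23}{36}\cdot\tfrac{177}{115}=\tfrac{59}{60}.
$$

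As a cross-check, I would evaluate $\mathbf{a}\fSigma\mathbf{a}^\top$ directly from the matrices in Corollary~\ref{korCLT_d_iid}. For the Cayley case this is $(128-192+516)/720$, which simplifies to $13/36$.

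The only step needing care is the identification of the quadratic form with the long-run variance. This requires checking that applying $\mT$ and then $\mathbf{a}$ to $\mP\!_\tau(h)+\mP\!_\tau(h)^\top-2\fp_\tau\fp_\tau^\top$ produces twice the autocovariance. That follows from the symmetry of the scalar quadratic form, and it is routine linear algebra.
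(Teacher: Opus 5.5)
Your argument is correct, but it computes the asymptotic variance by a different route than the paper. The paper plugs the weight vectors directly into the covariance matrices of Corollary~\ref{korCLT_d_iid}. That is, it evaluates $(0,1,2)\,\fSigma_{\textup{C}}\,(0,1,2)^\top = 260/720$ and $(0,1,2,3)\,\fSigma_{\textup{K}}\,(0,1,2,3)^\top = 708/720$.

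You instead push the linear functional through the formula of Theorem~\ref{satzCLTtranscripts} and recognise the long-run variance $\sigma^2\big(1+2\sum_{h=1}^{3}\rho(h)\big)$. You then feed in only scalar quantities: the marginal variances of Proposition~\ref{prop_tr_marg_iid} and the autocorrelations of Corollaries~\ref{kor_d_biv1_iid}--\ref{kor_d_biv3_iid}. The identification is right: with $\mathbf{b}=\mathbf{a}\,\mT$, the term $\mathbf{b}\,(\mP\!_\tau(h)+\mP\!_\tau(h)^\top-2\fp_\tau\fp_\tau^\top)\,\mathbf{b}^\top$ is exactly twice the lag-$h$ autocovariance of the distance series.

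The two routes trade off as follows:
\begin{itemize}
\item The paper's route is a one-line computation once $\fSigma_{\textup{C}}$ and $\fSigma_{\textup{K}}$ are available, and those matrices are needed anyway for the $H$- and $\Delta$-statistics.
\item Your route bypasses the $6\times 6$ matrix $\fSigma_\tau$ entirely. It also shows why the Cayley variance $\tfrac{13}{36}$ falls below $\sigma_{\textup{C}}^2=\tfrac{5}{12}$ (net negative serial correlation), while the Kendall variance is inflated from $\tfrac{23}{36}$ to $\tfrac{59}{60}$.
\item Your route doubles as a consistency check between the stated acf values and the stated covariance matrices.
\end{itemize}

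There is one arithmetic slip in your cross-check. The cross term in $\mathbf{a}_{\textup{C}}\fSigma_{\textup{C}}\mathbf{a}_{\textup{C}}^\top$ is $2\cdot 1\cdot 2\cdot(-96)=-384$, not $-192$. As written, $(128-192+516)/720=452/720\neq 13/36$, whereas the correct sum $(128-384+516)/720=260/720=13/36$. This does not affect your main argument, which is sound.
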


\bigskip
A possible application of the above asymptotics is to develop tests for serial dependence, \ie tests of the null hypothesis that the process $(X_t)$ is \iid {} Recall that the asymptotics in Corollaries~\ref{korCLTtranscripts_iid}--\ref{korCLT_mean_d_iid} apply to any continuously distributed \iid\ process $(X_t)$, irrespective of its marginal distribution. Thus, the dependence tests derived from Corollaries~\ref{korCLTtranscripts_iid}--\ref{korCLT_mean_d_iid} are nonparametric (distribution-free), which is highly attractive for practice.

\smallskip
Two tests for serial dependence immediately get clear from Corollary~\ref{korCLT_mean_d_iid}. Under the \iid-null, the mean Cayley distance~$\overline{d_{\textup{C}}}$ and the mean Kendall distance~$\overline{d_{\textup{K}}}$ are asymptotically normally distributed as given in Corollary~\ref{korCLT_mean_d_iid}, so we can use these normal distributions for computing (two-sided) critical values or P-values of the $\overline{d_{\textup{C}}}$-test and $\overline{d_{\textup{K}}}$-test, respectively. Recall from parts~(a) and~(b) of Figure~\ref{figExpectedDistances} as well as from Example~\ref{bspAlternativeScenarios} that for some kinds of serial dependence, these mean distances are expected to deviate considerably from the null values $\mu_{\textup{C}} = \tfrac{3}{2}$ and $\mu_{\textup{K}} = \tfrac{11}{6}$, respectively, such that the $\overline{d_{\textup{C}}}$-test and $\overline{d_{\textup{K}}}$-test might turn out to be powerful for these alternatives.

\smallskip
The most common approach in the OP-community, however, is to consider entropy-like statistics \citep{bandt02,bandt19,weiss22}, which compare the full vector of marginal frequencies to the marginal distribution under the null (in the aforementioned works, the OP-frequencies $\widehat{\fp}_\pi$ are compared to $\fp_\pi^{(0)}=(\frac{1}{m!}, \ldots, \frac{1}{m!})^\top$, the marginal OP-distribution under the \iid-null). In the present research, we propose three options, namely to compare the

\begin{itemize}
    \item transcript frequencies $\widehat{\fp}_{\tau}$ to $\fp_{\tau}^{(0)} = 
\tfrac{1}{24}\,\big(2, 2, 2, 7, 7, 4\big)^\top$, see Corollary~\ref{korCLTtranscripts_iid};
    \item Cayley frequencies $\widehat{\fp}_{\textup{C}}$ to $\fp_{\textup{C}}^{(0)} = \tfrac{1}{12}\,\big(1, 4, 7\big)^\top$, see Corollary~\ref{korCLT_d_iid};
    \item Kendall frequencies $\widehat{\fp}_{\textup{K}}$ to $\fp_{\textup{K}}^{(0)} = \tfrac{1}{12}\,\big(1, 2, 7, 2\big)^\top$, see Corollary~\ref{korCLT_d_iid}.
\end{itemize}
Let~$\widehat{\fp}$ and~$\fp^{(0)}$ be such $k$-dimensional frequency and probability vectors, respectively. Then, the entropy (``deviance'') statistic $H\big(\widehat{\fp}\big)$, \ie the Kullback--Leibler divergence between~$\widehat{\fp}$ and~$\fp^{(0)}$, is given by
\ba
\label{entropy}
H\big(\widehat{\fp}\big)
\ =\ 
-\sum_{i=1}^k \hat{p}_i\,\big(\ln{\hat{p}_i} - \ln{p_i^{(0)}}\big)
\ \approx\ 
-\frac{1}{2}\,\sum_{i=1}^k \frac{\big(\hat{p}_i-p_i^{(0)}\big)^2}{p_i^{(0)}},
\ea
where the last expression is the second-order Taylor approximation of $H\big(\widehat{\fp}\big)$ around~$\fp^{(0)}$, see Equation~(6) in \citet{weiss22}. Except the factor~$-\frac{1}{2}$, this is the well-known Pearson statistic for goodness-of-fit,
\ba
\label{Delta2}
\Delta\big(\widehat{\fp}\big)
\ =\ 
\sum_{i=1}^k \frac{\big(\hat{p}_i-p_i^{(0)}\big)^2}{p_i^{(0)}}
\ =\ 
\big(\widehat{\fp}-\fp^{(0)}\big)^\top\, \textup{diag}\big(\fp^{(0)}\big)^{-1}\,\big(\widehat{\fp}-\fp^{(0)}\big),
\ea
which has also been considered for OP-frequencies in former research \citep{bandt19,weiss22}. Both statistics \eqref{entropy} and \eqref{Delta2} can be combined with the aforementioned three options of transcripts, Cayley and Kendall distances, leading to six further tests for serial dependence, the test statistics of which are denoted as $H_\tau\big(\widehat{\fp}_\tau\big)$, $H_{\textup{C}}\big(\widehat{\fp}_{\textup{C}}\big)$, $H_{\textup{K}}\big(\widehat{\fp}_{\textup{K}}\big)$ as well as $\Delta_\tau\big(\widehat{\fp}_\tau\big)$, $\Delta_{\textup{C}}\big(\widehat{\fp}_{\textup{C}}\big)$, $\Delta_{\textup{K}}\big(\widehat{\fp}_{\textup{K}}\big)$. Their asymptotic distributions under the \iid-null can be derived by applying Theorem~3.1 of \citet{tan77} together with Corollaries~\ref{korCLTtranscripts_iid} and~\ref{korCLT_d_iid} to $\Delta\big(\widehat{\fp}\big)\approx -2\, H\big(\widehat{\fp}\big)$. If the frequency vector is asymptotically normal of the form $\sqrt{n}\,\big(\widehat{\fp}-\fp^{(0)}\big) \overset{d}{\to} \norm(\0,\fSigma)$, then $n\,\Delta\big(\widehat{\fp}\big)$ and $-2n\, H\big(\widehat{\fp}\big)$ asymptotically follow the quadratic-form (QF) distribution $\sum_{i=1}^k \lambda_i\,\chi_{1;i}^2$, where the~$\chi_{1;i}^2$ are independent $\chi^2$-distributed random variables with one degree of freedom, and where the~$\lambda_i$ are the eigenvalues of the matrix $\fSigma\, \textup{diag}\big(\fp^{(0)}\big)^{-1}$. These eigenvalues can be explicitly calculated from the expressions provided by Corollaries~\ref{korCLTtranscripts_iid} and~\ref{korCLT_d_iid}, where we always have one eigenvalue equal to zero (note that the rank of the covariance matrices is always one less than the matrix dimension).

\begin{prop}
\label{propEntropies}
Let $(X_t)$ be an \iid\ process. Then,

\begin{itemize}
    \item[(a)] $-2n\,H_\tau\big(\widehat{\fp}_\tau\big)$ and $n\,\Delta_\tau\big(\widehat{\fp}_\tau\big)$ are asymptotically QF-distributed according to $\sum_{i=1}^5 \lambda_{\tau;i}\,\chi_{1;i}^2$, where
    $$
\lambda_{\tau;1} = \frac{1339}{735},
\quad
\lambda_{\tau;2} \approx 1.5468,
\quad
\lambda_{\tau;3} \approx 0.9260,
\quad
\lambda_{\tau;4} \approx 0.7177,
\quad
\lambda_{\tau;5} = \frac{8}{21};
    $$
    \item[(b)] $-2n\,H_{\textup{C}}\big(\widehat{\fp}_{\textup{C}}\big)$ and $n\,\Delta_{\textup{C}}\big(\widehat{\fp}_{\textup{C}}\big)$ are asymptotically QF-distributed according to $\sum_{i=1}^2 \lambda_{\textup{C};i}\,\chi_{1;i}^2$, where
    $$
\lambda_{\textup{C};1} = \frac{6}{5},
\quad
\lambda_{\textup{C};2} = \frac{76}{105};
    $$
    \item[(c)] $-2n\,H_{\textup{K}}\big(\widehat{\fp}_{\textup{K}}\big)$ and $n\,\Delta_{\textup{K}}\big(\widehat{\fp}_{\textup{K}}\big)$ are asymptotically QF-distributed according to $\sum_{i=1}^3 \lambda_{\textup{K};i}\,\chi_{1;i}^2$, where
    $$
\lambda_{\textup{K};1} \approx 1.5468,
\quad
\lambda_{\textup{K};2} \approx 0.9260,
\quad
\lambda_{\textup{K};3} \approx 0.7177.
    $$
\end{itemize}
\end{prop}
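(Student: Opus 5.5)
The plan is to apply Theorem~3.1 of \citet{tan77}, as outlined just before the statement, and then compute eigenvalues. For each of the three frequency vectors, Corollaries~\ref{korCLTtranscripts_iid} and~\ref{korCLT_d_iid} give $\sqrt{n}\,(\widehat{\fp}-\fp^{(0)})\overset{d}{\to}\norm(\0,\fSigma)$. Writing $\fZ\sim\norm(\0,\fSigma)$, the continuous mapping theorem gives
\[
n\,\Delta(\widehat{\fp}) \overset{d}{\to} \fZ^\top \textup{diag}(\fp^{(0)})^{-1}\fZ .
\]
Diagonalising $\fSigma^{1/2}\,\textup{diag}(\fp^{(0)})^{-1}\,\fSigma^{1/2}$ shows that this limit is $\sum_i\lambda_i\chi^2_{1;i}$, where the $\lambda_i$ are the eigenvalues of $\fSigma\,\textup{diag}(\fp^{(0)})^{-1}$.

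For the deviance, I would Taylor-expand $H$ around $\fp^{(0)}$ as in \eqref{entropy}. The linear term is $-\sum_i(\hat p_i-p_i^{(0)})$, which vanishes because both vectors sum to one. The remainder is $O_P(\|\widehat{\fp}-\fp^{(0)}\|^3)=O_P(n^{-3/2})$. Hence $-2nH(\widehat{\fp})=n\Delta(\widehat{\fp})+o_P(1)$, and Slutsky's lemma transfers the limit law from $\Delta$ to $H$.

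The remaining work is computational: find the spectra of
\[
\fSigma_{\textup{C}}\,\textup{diag}\big(\tfrac{1}{12}(1,4,7)\big)^{-1},\qquad
\fSigma_{\textup{K}}\,\textup{diag}\big(\tfrac{1}{12}(1,2,7,2)\big)^{-1},\qquad
\fSigma_\tau\,\textup{diag}\big(\tfrac{1}{24}(2,2,2,7,7,4)\big)^{-1}.
\]
Each $\fSigma$ annihilates $\mathbf{1}$, since $\mathbf{1}^\top\fSigma=\0$ because frequencies sum to one. So $\fp^{(0)}$ is a right null vector of $\fSigma\,\textup{diag}(\fp^{(0)})^{-1}$, and one eigenvalue is zero. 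The other eigenvalues therefore come from a characteristic polynomial of degree $k-1$.

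For Cayley this is a quadratic. Its trace and sum of principal $2\times2$ minors should factor to give $\tfrac65$ and $\tfrac{76}{105}$. For Kendall it is a cubic, which will have no rational roots, so I would report numerical values. For transcripts it is a quintic.

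To avoid brute force on the quintic, I would exploit the symmetry of $\fSigma_\tau$. It is invariant under swapping indices $2\leftrightarrow3$ together with $4\leftrightarrow5$, and $\fp^{(0)}$ is invariant as well. Splitting into symmetric and antisymmetric subspaces decomposes the problem. The antisymmetric part is two-dimensional, spanned by $e_2-e_3$ and $e_4-e_5$. On it the matrix is block-diagonal: the $(2,3)$ and $(4,5)$ blocks give
\[
\frac{1020-380}{20160}\cdot 12=\frac{8}{21},\qquad
\frac{6259+4453}{20160}\cdot\frac{24}{7}=\frac{1339}{735}.
\]
These are the two rational eigenvalues. The symmetric part is four-dimensional and contains the zero eigenvalue, leaving a cubic.

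I expect this cubic to coincide with the Kendall cubic, since $\mT_{\textup{K}}$ merges exactly the symmetric pairs $\{2,3\}$ and $\{4,5\}$. On symmetric vectors this aggregation is a similarity between the two operators. I would verify the match numerically, giving $\approx1.5468, 0.9260, 0.7177$ for both.

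The main obstacle is bookkeeping rather than concept: checking that the reduction of the symmetric subspace of $\fSigma_\tau$ is exactly the Kendall operator, and evaluating the cubic's roots reliably.
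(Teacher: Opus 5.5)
Your proof is correct. Its probabilistic part matches the paper's. The paper also applies Theorem~3.1 of \citet{tan77} (equivalently, continuous mapping plus diagonalisation) to the CLTs of Corollaries~\ref{korCLTtranscripts_iid}--\ref{korCLT_d_iid}, and handles the deviance through $\Delta\approx-2H$. Your observation that the linear Taylor term vanishes and the remainder is $O_P(n^{-3/2})$ simply makes the $o_P(1)$ step explicit.

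You differ genuinely in how the spectra are computed. The paper only says the eigenvalues of $\fSigma\,\textup{diag}(\fp^{(0)})^{-1}$ ``can be explicitly calculated'', i.e.\ by direct evaluation of a quintic for the transcripts. You instead exploit the invariance of $\fSigma_\tau$ and $\fp_\tau^{(0)}$ under $2\leftrightarrow3$, $4\leftrightarrow5$.

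This checks out. $\fSigma_\tau(e_2-e_3)$ is supported on $\{2,3\}$ and $\fSigma_\tau(e_4-e_5)$ on $\{4,5\}$, which gives exactly $8/21$ and $1339/735$.

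For the symmetric part you do not need a numerical comparison, because your aggregation argument already proves the similarity. Since $p^{(0)}_{\tau;2}=p^{(0)}_{\tau;3}$ and $p^{(0)}_{\tau;4}=p^{(0)}_{\tau;5}$, every symmetric $v$ satisfies $\textup{diag}(\fp_\tau^{(0)})^{-1}v=\mT_{\textup{K}}^\top\textup{diag}(\fp_{\textup{K}}^{(0)})^{-1}\mT_{\textup{K}}v$. Hence
\[
\mT_{\textup{K}}\,\fSigma_\tau\,\textup{diag}(\fp_\tau^{(0)})^{-1}v=\fSigma_{\textup{K}}\,\textup{diag}(\fp_{\textup{K}}^{(0)})^{-1}\mT_{\textup{K}}v .
\]
Together with the invariance of the symmetric subspace and the fact that $\mT_{\textup{K}}$ maps that subspace bijectively onto $\mathbb{R}^4$, this gives the similarity. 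As a check, the nonzero Kendall eigenvalues are the roots of $1575\lambda^3-5025\lambda^2+5051\lambda-1619=0$, which has no rational roots.

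What your route buys is a structural explanation, absent from the paper, of why two transcript eigenvalues are rational and why the other three coincide with those in part~(c). The paper's direct computation is shorter to state but leaves this coincidence unexplained.
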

Note that (upper) critical values or P-values for the QF-distributions according to Proposition~\ref{propEntropies} can be computed numerically, \eg by using the R~package \href{https://cran.r-project.org/package=CompQuadForm}{\nolinkurl{CompQuadForm}} of \citet{duchesne10}.

\section{Empirical Investigations}
\label{Empirical Investigations}


\subsection{Simulation-based Performance Analyses}
\label{Simulation-based Performance Analyses}
In order to evaluate the performance of the nonparametric dependence tests developed in Section~\ref{Asymptotics of Transcript Statistics}, we did a simulation study, where time series of different lengths $T\in\{100, 250, 500, 1000\}$ from various types of process are simulated (with $10^4$ replications each). All tests are applied to all time series (where the level was chosen as 5\,\% for convenience), and the resulting rates of rejections are the empirical sizes (if the tests are applied under the \iid-null) or power values (under an alternative process), respectively. For evaluating the tests' sizes, we simulate \iid\ $\norm(0,1)$ time series---recall that the tests are nonparametric, so the actual marginal distribution does not matter. As alternative scenarios, we consider the processes discussed in Example~\ref{bspAlternativeScenarios}, which mostly correspond to processes in the simulation of \citet{weiss22}. Hence, we can directly compare the newly simulated power values to those presented by \citet{weiss22}, \ie the new tests are compared to various competitors. These are the parametric ACF-test on the one hand (as the most widely used test for serial dependence in practice), and various nonparametric tests being based on OP-frequencies (but not on transcripts as done here). To keep the comparison manageable, we focus on the best-performing OP-tests in \citet{weiss22}, namely the $H$-, $\Delta$-, $\upbeta$- and $\uptau$-tests developed there. Finally, we also complemented the alternative scenarios of Example~\ref{bspAlternativeScenarios} by a ``contamination experiment'' with additive outliers (AOs): for both AR$(1)$ processes, we randomly selected 10\,\% of the time-series observations and added either~$-10$ or $+10$ (with probability 50\,\% each). In this way, we can analyze the robustness of the tests with respect to extreme observations. All simulation results are summarized in Table~\ref{tabPerformance} in Appendix~\ref{Tables}.

\smallskip
The size values of the novel dependence tests (first block in Table~\ref{tabPerformance}) are reasonably close to the nominal 5\,\% level. In particular, there are at most modest exceedances of 5\,\%, while notable undersizing (\ie a conservative test) is only observed for the $\overline{d_{\textup{C}}}$-test if $T\leq 250$. But some competitors ($\upbeta$- and ACF-test) are even more conservative for $T=100$. The remaining rejection rates in Table~\ref{tabPerformance} are empirical power values for the alternative scenarios described before. Let us start with the uncontaminated AR$(1)$ process in blocks~2 and~4. It can be seen that among the novel tests, the $\overline{d_{\textup{K}}}$-test is most powerful throughout, but also the $H_{\textup{K}}$- and $\Delta_{\textup{K}}$-test lead to a similar power performance, and the $H_{\tau}$- and $\Delta_{\tau}$-test do only somewhat worse. In particular, these novel transcript-based tests lead to much better power values than the former OP-based tests of \citet{weiss22}, being close to the power values of the ACF, which is, in a sense, tailor-made for uncovering dependence in Gaussian AR processes. However, a major advantage of OP- and transcript-based tests gets clear if the AR time series are contaminated by outliers (blocks~3 and~5 in Table~\ref{tabPerformance}). Wile the power gets reduced for all tests, the  OP- and transcript-based tests are much more robust and still rather powerful, whereas the ACF is hardly powerful anymore. Also under AOs, $\overline{d_{\textup{K}}}$-test is most powerful among the novel tests, and now most powerful even among all tests considered. 

\smallskip
The remaining alternatives are nonlinear processes. For QMA$(1)$ processes, already \citet{weiss22} showed that some OP-based tests outperform the ACF, but the $H_{\tau}$- and $\Delta_{\tau}$-test are even more powerful and thus clearly perform best among all tests. By contrast, for the TEAR$(1)$ process, the $H_{\tau}$- and $\Delta_{\tau}$-test do not reach the power of the former OP-based tests. In fact, this is the only simulation scenario where tests of \citet{weiss22} are superior, which can be explained by a characteristic feature of the TEAR$(1)$ process: the generated time series are characterized by long lasting rises, \ie the OP~$\pi^{[1]}$ is very frequent. The former $H$-, $\Delta$-, and $\upbeta$-tests are very well suited for detecting such unusual OP-frequencies, whereas the transcript-based tests focus on ``differences'' between successive OPs rather than particular OPs themselves. Finally, for the ARCH$(1)$ process in the last block of Table~\ref{tabPerformance}, all tests are at most moderately powerful, which is not surprising as we applied the tests to the original time series, but not to the squared values (the latter would behave like an AR$(1)$ time series). Nevertheless, it is interesting to note that the tests based on the Cayley distance outperform all other OP-based and transcript-based tests, and that the $\overline{d_{\textup{C}}}$-test even outperforms the ACF for $T=1000$. The latter is somewhat surprising, because the $\overline{d_{\textup{C}}}$-test often performed rather poorly in the other alternative scenarios.

\smallskip
To sum up, the transcript-based $H_{\tau}$- and $\Delta_{\tau}$-tests show an appealing power for most of the considered alternatives. For AR$(1)$ processes, they are even surpassed by the $\overline{d_{\textup{K}}}$-test, which, however, performs rather poorly for the remaining alternatives. The $\overline{d_{\textup{C}}}$-test, by contrast, only excelled for the ARCH$(1)$ process. It is important to stress that with the exception of the TEAR$(1)$ process, the novel (nonparametric and robust) transcript-based tests are more powerful than the former OP-based tests, and often also outperform the ACF, such that they can be recommended as a valuable complement to existing tests for serial dependence. 


\subsection{Real-World Data Application}
\label{Real-World Data Application}
As an illustrative data example, we consider a time series about the monthly mean of air temperature (in ${}^\circ$C) at 2\,m above ground for the station ``Ham\-burg--Fuhlsb\"uttel'' in Germany in the period 1936--2024. The data source is the \href{https://cdc.dwd.de/portal/}{Climate Data Center (CDC)} offered by Deutscher Wetterdienst (German Weather Service), which was accessed on December~10, 2025. The originally 1068 mean temperatures are plotted in Figure~\ref{figTemp}\,(a) and show a strong seasonal pattern with period~12 (monthly data). Therefore, we applied the difference operator with lag~12 (yearly difference) to the data and obtained the time series $x_1,\ldots,x_T$ of length~$T=1056$ shown in Figure~\ref{figTemp}\,(b). These differenced data appear to be stationary without an apparent dependence structure, so it seems reasonable to test the null hypothesis that the data are serially independent, \ie that $x_1,\ldots,x_T$ arose from an \iid\ process.

\begin{figure}[t]
\centering\footnotesize
(a)\hspace{-3ex}\includegraphics[viewport=0 40 695 230, clip=, scale=0.5]{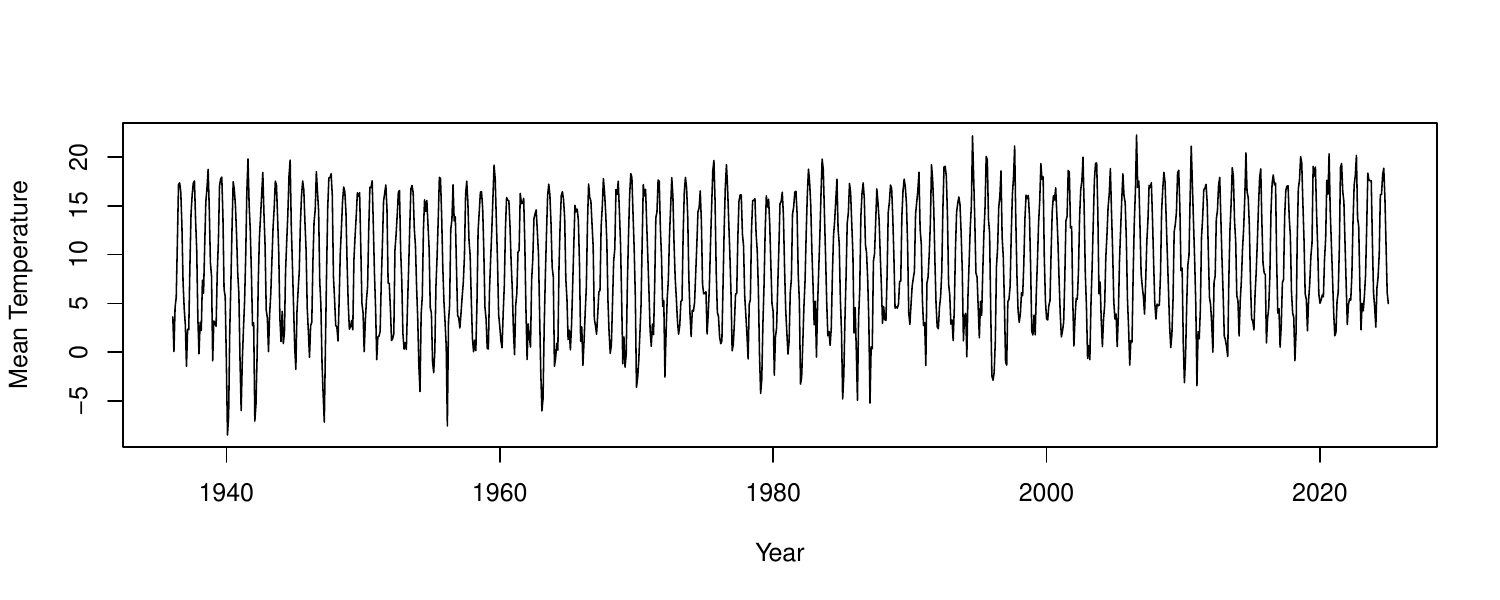}\hspace{-4ex}Year
\\[2ex]
(b)\hspace{-3ex}\includegraphics[viewport=0 40 695 230, clip=, scale=0.5]{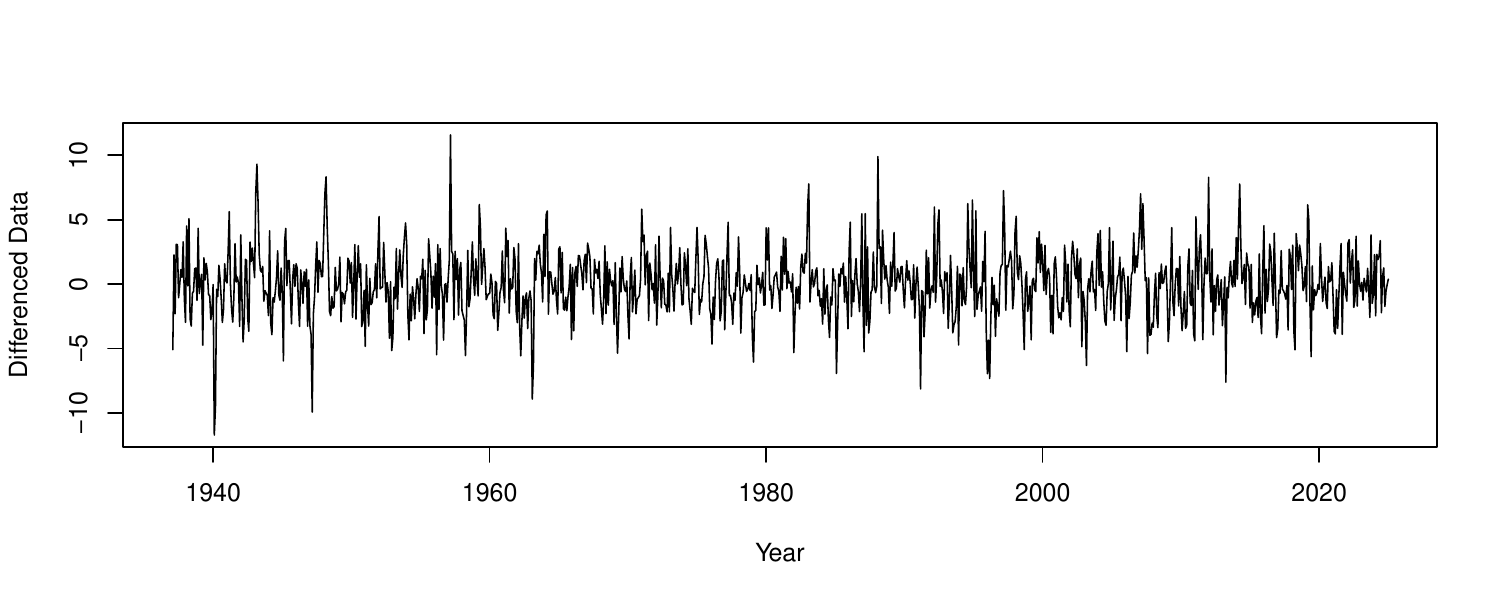}\hspace{-4ex}Year
\caption{Temperature data of Section~\ref{Real-World Data Application}: plot of raw data in~(a), and differenced data (lag~12) in (b).}
\label{figTemp}
\end{figure}

\smallskip
For this purpose, we apply our novel transcript-based tests developed in Section~\ref{Asymptotics of Transcript Statistics}. So we first compute the transcript series $\tau_1,\ldots,\tau_n$ as well as the distance series $d_{\textup{C},1}, \ldots, d_{\textup{C},n}$ and $d_{\textup{K},1}, \ldots, d_{\textup{K},n}$ of length $n=1053$ each, where distributional and moment properties are summarized in Table~\ref{tabDistMomTranscripts_data}. Comparing the empirically observed values to those of the \iid\ null, we recognize clear deviations throughout, indicating that the time series $(x_t)$ still exhibits serial dependence. In fact, comparing to Table~\ref{tabDistMomTranscripts} from Example~\ref{bspAlternativeScenarios}, it seems that we are concerned with positive dependence as, for example, the first transcript corresponding to the repeated occurrence of either OP~$\pi^{[1]}$ or~$\pi^{[6]}$ is more frequent than under an \iid\ process. This is also confirmed by the lag-1 ACF of $(x_t)$ being equal to $\approx 0.225$ as well as by the marginal frequencies of raw OPs, $\widehat{\fp}_\pi\approx (0.190, 0.150, 0.163, 0.150, 0.163, 0.184)^\top$, which deviates from a discrete uniform distribution and has increased frequencies for OPs~$\pi^{[1]}$ and~$\pi^{[6]}$. Note, however, that the transcript frequency $\hat{p}_{\tau,1}\approx 0.104$ provides another information than the marginal OP-frequencies $\hat{p}_{\pi,1}\approx 0.190$ and $\hat{p}_{\pi,6}\approx 0.184$, because the first transcript refers to the \emph{repeated} rather than a single occurrence of these OPs. In the context of our data example, the occurrence of transcript~1 implies that an increase (or decrease) of temperature differences across three months is continued to the fourth month, also recall the first column in Table~\ref{tabTrLag0}.

\begin{table}[t]
\centering\small
\caption{(Empirical) distributional and moment properties of (a) transcripts, (b) Cayley distance, and (c) Kendall distance for \iid\ process $(X_t)$ against temperature data $(x_t)$ from Section~\ref{Real-World Data Application}.}
\label{tabDistMomTranscripts_data}

\smallskip
(a)~\begin{tabular}{l|cccccc}
\toprule
Process & $p_{\tau; 1}$ & $p_{\tau; 2}$ & $p_{\tau; 3}$ & $p_{\tau; 4}$ & $p_{\tau; 5}$ & $p_{\tau; 6}$ \\
\midrule
\it \iid{} & \it 0.083 & \it 0.083 & \it 0.083 & \it 0.292 & \it 0.292 & \it 0.167 \\[1ex]
data & 0.104 & 0.098 & 0.091 & 0.273 & 0.297 & 0.137 \\
\bottomrule
\end{tabular}

\bigskip
(b)~\begin{tabular}{l|ccc@{\qquad}cc}
\toprule
Process & $p_{\textup{C}; 0}$ & $p_{\textup{C}; 1}$ & $p_{\textup{C}; 2}$ & $\mu_{\textup{C}}$ & $\sigma_{\textup{C}}^2$ \\
\midrule
\it \iid{} & \it 0.083 & \it 0.333 & \it 0.583 & \it 1.500 & \it 0.417 \\[1ex]
data & 0.104 & 0.326 & 0.570 & 1.465 & 0.458 \\
\bottomrule
\end{tabular}

\bigskip
(c)~\begin{tabular}{l|cccc@{\qquad}cc}
\toprule
Process & $p_{\textup{K}; 0}$ & $p_{\textup{K}; 1}$ & $p_{\textup{K}; 2}$ & $p_{\textup{K}; 3}$ & $\mu_{\textup{K}}$ & $\sigma_{\textup{K}}^2$ \\
\midrule
\it \iid{} & \it 0.083 & \it 0.167 & \it 0.583 & \it 0.167 & \it 1.833 & \it 0.639 \\[1ex]
data & 0.104 & 0.189 & 0.570 & 0.137 & 1.739 & 0.676 \\
\bottomrule
\end{tabular}
\end{table}

\smallskip
Next, we use the above transcript and distance frequencies to compute the test statistics and corresponding P-values according to Section~\ref{Asymptotics of Transcript Statistics}. The results are shown in Table~\ref{tabTests_data}, together with those of the same competing tests as in Section~\ref{Simulation-based Performance Analyses}. Note that for the sake of direct comparison to the respective asymptotic distribution, we show the rescaled versions of all test statistics, see Corollary~\ref{korCLT_mean_d_iid}, Proposition~\ref{propEntropies}, as well as the asymptotics in \citet{weiss22}. If comparing the P-values to the 5\,\%-level, we recognize from part~(a) that all transcript-based tests except~$\overline{d_{\textup{C}}}$ and~$H_{\textup{C}}$ lead to a rejection of the \iid-null. This outcome agrees with our power analyses in Section~\ref{Simulation-based Performance Analyses}, where the Cayley tests often performed rather poorly while transcripts and Kendall were often rather powerful. The test based on the lag-1 ACF according to Table~\ref{tabTests_data}\,(b) also recognizes significant dependence, whereas the OP-based tests lead to some ambiguity. The OP-based $H$- and $\Delta$-statistic have P-values only slightly below the level and thus lead to a very narrow rejection, which confirms our finding from Section~\ref{Simulation-based Performance Analyses}, where we noted a poorer power for the OP-based entropies than for the transcript-based ones. The $\upbeta$-test was powerful only for a few types of process, so it is not surprising that it does not lead to a rejection for the temperature data. The $\uptau$-test, in turn, was often quite powerful, which goes along with its rejection of the \iid-null in the present example. Altogether, the test decisions for the temperature data are well explainable with regard to our previous power analyses, and they confirm the appealing performance of our novel nonparametric tests being based on transcripts or Kendall distance.

\begin{table}[t]
\centering\footnotesize
\caption{Rescaled statistics and P-values of (a) transcript-based tests and (b) competing tests, see Section~\ref{Real-World Data Application}.}
\label{tabTests_data}

\smallskip
\begin{tabular}{l|c@{\ }c@{\ }c@{\ }c@{\ }c@{\ }c@{\ }c@{\ }c}
\toprule
(a) & $\sqrt{n}\,\big(\,\overline{d_{\textup{C}}}-\frac{3}{2}\big)$ & $\sqrt{n}\,\big(\,\overline{d_{\textup{K}}}-\frac{11}{6}\big)$ & $-2n\,H_\tau$ & $-2n\,H_{\textup{C}}$ & $-2n\,H_{\textup{K}}$ & $n\,\Delta_\tau$ & $n\,\Delta_{\textup{C}}$ & $n\,\Delta_{\textup{K}}$ \\
\midrule
Statistic & -1.125 & -3.066 & 15.965 & 5.734 & 14.592 & 16.153 & 6.155 & 14.773 \\
P-value & 0.061 & 0.002 & 0.020 & 0.053 & 0.005 & 0.019 & 0.043 & 0.005 \\
\bottomrule
\end{tabular}

\bigskip
\begin{tabular}{l|ccccc}
\toprule
(b) & $\frac{T-2}{3}\,(\ln{6}-H)$ & $(T-2)\,\Delta$ & $\sqrt{T-2}\,\hat{\upbeta}$ & $\sqrt{T-2}\,\hat{\uptau}$ & $\sqrt{T}\,\hat{\rho}(1)$ \\
\midrule
Statistic & 1.485 & 1.498 & 0.185 & 1.314 & 7.303 \\
P-value & 0.050 & 0.049 & 0.749 & 0.002 & 0.000 \\
\bottomrule
\end{tabular}
\end{table}


\section{Conclusions}
\label{Conclusions}
Since the introduction of OPs by \citet{bandt02}, the study of their applications to random
processes and dynamical systems is an active research topic, see \citet{amigo23} and the references therein. As a further step in this long-term endeavor, we focused here on the perhaps simplest way to exploit the group structure of the OPs, namely the concept of transcript (a group-theoretical difference of OPs) and the related Cayley and Kendall edit distances (norms of transcripts). What distinguishes this work from previous works on the properties and applications of transcripts \citep{monetti09,amigo12,monetti13} is the objective: nonparametric detection of serial
dependence via transcripts (including the Cayley and Kendall distances). Therefore, this paper continues the research of \citet{weiss22} and \citet{weiss22b} on the detection of serial dependence via OPs, this time by means of transcripts.

\smallskip
To achieve the aforementioned objective, in Section~\ref{Distributional Properties of Transcripts}, we first derived the marginal distributions of transcripts and distances when the underlying random process is \iid\ \eqref{prop_tr_marg_iid}, a symmetric random walk (Proposition~\ref{prop_tr_marg_srw}), or a generalized coin tossing process (Proposition~\ref{prop_tr_marg_gct}), obtaining results that showed the discrimination power of our transcript-based approach. Then, in Section~\ref{Bivariate Distributions of Transcripts}, we focused on \iid\ processes and derived the bivariate transcript distributions for lags $h\geq 1$. In Section~\ref{Asymptotics of Transcript Statistics}, we derived the asymptotic distribution of statistics based on the relative frequencies of transcripts and distances
for \iid\ processes (Corollaries~\ref{korCLTtranscripts_iid}--\ref{korCLT_mean_d_iid}). With these asymptotic
distributions at hand, the final step was to develop tests for the null hypothesis that the original process is \iid {} The statistics were denoted by $\overline{d_{\textup{C}}}$, $\overline{d_{\textup{K}}}$ (mean distances), $H_{\tau},H_{\textup{C}},H_{\textup{K}}$ (deviance or Kullback--Leibler entropy), and $\Delta _{\tau},\Delta _{\textup{C}},\Delta _{\textup{K}}$ (Pearson statistic for goodness of fit).

\smallskip
The power for serial dependence of the transcript-based statistics was analyzed with simulated data in Section~\ref{Simulation-based Performance Analyses} and real-world data in Section~\ref{Real-World Data Application}. The numerical processes comprised four linear processes (AR$(1)$~$\pm 0.5$ with or without AOs) and three nonlinear processes (QMA$(1)$, TEAR$(1)$, and ARCH$(1)$). Overall, the new statistics showed better rejection rates than former OP-based tests (including $H$, $\Delta$, and ACF). The real-world data consisted of time
series of monthly mean air temperatures. Again, the transcript-based
statistics and P-values generally showed a better performance than the competing OP-based statistics, which was explained in light of the results with simulated data. The results are summarized in Table~\ref{tabPerformance} of the Appendix.

\medskip
In conclusion, the novel transcript-based statistics proposed in this paper have proved to be powerful tools for the detection of serial dependence. In this regard, the results in Section~\ref{Empirical Investigations} make $H_{\tau}$, $H_{\textup{K}}$ (or $\Delta _{\tau}, \Delta _{\textup{K}}$, respectively) and $\overline{d_{\textup{K}}}$ particularly recommendable. More generally, our empirical investigations show that the algebraic structure of the OPs can be an advantage in certain applications, suggesting that further applications of transcript-based tools in time series analysis is a subject worth researching. Future work along lines similar to the present work will explore two extensions of the above framework: transcripts between two or more time series (\ie cross-transcripts) and transcripts between spatial ordinal patterns (\ie OPs defined for random fields). Both tools are relevant for studying coupled random processes and dynamical systems (since symbolization randomizes a deterministic dynamic). Furthermore, it would be relevant to obtain closed-form distributions and asymptotics for further types of process than those considered here.



\subsubsection*{Conflicts of Interest}
The authors declare no conflicts of interest.

\subsubsection*{Data Availability Statement}
Data and codes for Section~\ref{Real-World Data Application} are available in the online supplementary materials.



\clearpage


\appendix
\small
\numberwithin{equation}{section}
\numberwithin{table}{section}

\section{Tables}
\label{Tables}


\begin{table}[bh!]
\centering\footnotesize
\caption{Possible 5-OPs obtained if transcripts are observed with lag~1 (row followed by column). Here, $(i_1,\ldots,i_5)$ is abbreviated as ``$i_1\ldots i_5$''.}
\label{tabTrLag1}

\smallskip
\begin{tabular}{l|cccc@{\ }cc@{\ }cc}
\toprule
$\tau_t\ \setminus\ \tau_{t+1}$ & $\pi^{[1]}$ & $\pi^{[2]}$ & $\pi^{[3]}$ & \multicolumn{2}{c}{$\pi^{[4]}$} & \multicolumn{2}{c}{$\pi^{[5]}$} & $\pi^{[6]}$ \\
 \midrule
$\pi^{[1]}$ & 12345 & 12354 & 45321 & 43215 &  & 12534 &  & --- \\
 & 54321 &  &  & 43251 &  & 15234 &  &  \\
 &  &  &  & 43521 &  & 51234 &  &  \\[1ex]
$\pi^{[2]}$ & 54312 & --- & 45312 & 12543 &  & 12435 &  & 12453 \\
 &  &  &  & 15243 &  &  &  &  \\
 &  &  &  & 43125 &  &  &  &  \\
 &  &  &  & 43152 &  &  &  &  \\
 &  &  &  & 43512 &  &  &  &  \\
 &  &  &  & 51243 &  &  &  &  \\[1ex]
$\pi^{[3]}$ & 21345 & 21354 & --- & 53421 &  & 21534 &  & 35421 \\
 &  &  &  &  &  & 25134 &  &  \\
 &  &  &  &  &  & 34215 &  &  \\
 &  &  &  &  &  & 34251 &  &  \\
 &  &  &  &  &  & 34521 &  &  \\
 &  &  &  &  &  & 52134 &  &  \\[1ex]
$\pi^{[4]}$ & 15432 & 54231 & 14532 & 14325 & 32541 & 21435 &  & 21453 \\
 & 51432 &  & 32145 & 14352 & 35214 & 24135 &  & 24153 \\
 & 54132 &  & 32415 & 21543 & 35241 & 42531 &  & 24513 \\
 &  &  & 32451 & 25143 & 41325 & 45231 &  & 42315 \\
 &  &  & 41532 & 25413 & 41352 &  &  & 42351 \\
 &  &  & 45132 & 32154 & 52143 &  &  & 53214 \\
 &  &  &  & 32514 & 52413 &  &  & 53241 \\[1ex]
$\pi^{[5]}$ & 23145 & 15423 & 13245 & 13254 &  & 14253 & 34152 & 14235 \\
 & 23415 & 23154 &  & 13524 &  & 14523 & 34512 & 15324 \\
 & 23451 & 23514 &  & 53142 &  & 25314 & 41253 & 31542 \\
 &  & 23541 &  & 53412 &  & 25341 & 41523 & 35142 \\
 &  & 51423 &  &  &  & 31425 & 45123 & 35412 \\
 &  & 54123 &  &  &  & 31452 & 52314 & 41235 \\
 &  &  &  &  &  & 34125 & 52341 & 51324 \\[1ex]
$\pi^{[6]}$ & --- & 54213 & 31245 & 15342 &  & 13425 &  & 13542 \\
 &  &  &  & 25431 &  & 13452 &  & 24531 \\
 &  &  &  & 31254 &  & 24315 &  & 42135 \\
 &  &  &  & 31524 &  & 24351 &  & 53124 \\
 &  &  &  & 35124 &  & 42153 &  &  \\
 &  &  &  & 51342 &  & 42513 &  &  \\
 &  &  &  & 52431 &  & 45213 &  &  \\
 \bottomrule
 \end{tabular}
\end{table}

\clearpage


\begin{sidewaystable}[th!]
\centering
\caption{Possible 6-OPs obtained if transcripts are observed with lag~2 (row followed by column). Here, $(i_1,\ldots,i_6)$ is abbreviated as ``$i_1\ldots i_6$''.}
\label{tabTrLag2}

\smallskip
\resizebox{.95\linewidth}{!}{
\begin{tabular}{l|c@{\ }cc@{\ }c@{\ }cc@{\ }c@{\ }cc@{\ }c@{\ }c@{\ }c@{\ }c@{\ }c@{\ }c@{\ }cc@{\ }c@{\ }c@{\ }c@{\ }c@{\ }c@{\ }c@{\ }cc@{\ }c@{\ }c@{\ }c}
\toprule
$\tau_t\ \setminus\ \tau_{t+1}$ & \multicolumn{2}{c}{$\pi^{[1]}$} & \multicolumn{3}{c}{$\pi^{[2]}$} & \multicolumn{3}{c}{$\pi^{[3]}$} & \multicolumn{8}{c}{$\pi^{[4]}$} & \multicolumn{8}{c}{$\pi^{[5]}$} & \multicolumn{4}{c}{$\pi^{[6]}$} \\
 \midrule
$\pi^{[1]}$ & 123456 &  & 123465 &  &  & 432156 &  &  & 123654 & 463251 &  &  &  &  &  &  & 123546 & 453261 &  &  &  &  &  &  & 123564 &  &  &  \\
 & 654321 &  & 126534 &  &  & 432516 &  &  & 126354 & 463521 &  &  &  &  &  &  & 123645 & 453621 &  &  &  &  &  &  & 125346 &  &  &  \\
 &  &  & 162534 &  &  & 432561 &  &  & 162354 & 543216 &  &  &  &  &  &  & 125364 & 456321 &  &  &  &  &  &  & 152346 &  &  &  \\
 &  &  & 165234 &  &  & 435216 &  &  & 432165 & 543261 &  &  &  &  &  &  & 125634 & 512364 &  &  &  &  &  &  & 465321 &  &  &  \\
 &  &  & 612534 &  &  & 435261 &  &  & 432615 & 543621 &  &  &  &  &  &  & 126345 & 512634 &  &  &  &  &  &  & 512346 &  &  &  \\
 &  &  & 615234 &  &  & 435621 &  &  & 432651 & 546321 &  &  &  &  &  &  & 152364 & 516234 &  &  &  &  &  &  & 643215 &  &  &  \\
 &  &  & 651234 &  &  & 564321 &  &  & 436215 & 612354 &  &  &  &  &  &  & 152634 & 561234 &  &  &  &  &  &  & 643251 &  &  &  \\
 &  &  &  &  &  &  &  &  & 436251 & 645321 &  &  &  &  &  &  & 156234 & 612345 &  &  &  &  &  &  & 643521 &  &  &  \\
 &  &  &  &  &  &  &  &  & 436521 &  &  &  &  &  &  &  & 162345 &  &  &  &  &  &  &  &  &  &  &  \\
 &  &  &  &  &  &  &  &  & 463215 &  &  &  &  &  &  &  & 453216 &  &  &  &  &  &  &  &  &  &  &  \\[1ex]
$\pi^{[2]}$ & 126543 &  & --- &  &  & 124356 & 512643 &  & 124365 & 431652 & 543162 &  &  &  &  &  & 124536 &  &  &  &  &  &  &  & 124653 &  &  &  \\
 & 162543 &  &  &  &  & 125643 & 516243 &  & 124635 & 436125 & 543612 &  &  &  &  &  & 124563 &  &  &  &  &  &  &  & 126435 &  &  &  \\
 & 165243 &  &  &  &  & 152643 & 561243 &  & 125436 & 436152 & 546312 &  &  &  &  &  & 453126 &  &  &  &  &  &  &  & 162435 &  &  &  \\
 & 612543 &  &  &  &  & 156243 & 564312 &  & 125463 & 436512 & 612453 &  &  &  &  &  & 453162 &  &  &  &  &  &  &  & 465312 &  &  &  \\
 & 615243 &  &  &  &  & 431256 &  &  & 126453 & 463125 & 645312 &  &  &  &  &  & 453612 &  &  &  &  &  &  &  & 612435 &  &  &  \\
 & 651243 &  &  &  &  & 431526 &  &  & 152436 & 463152 &  &  &  &  &  &  & 456312 &  &  &  &  &  &  &  & 643125 &  &  &  \\
 & 654312 &  &  &  &  & 431562 &  &  & 152463 & 463512 &  &  &  &  &  &  &  &  &  &  &  &  &  &  & 643152 &  &  &  \\
 &  &  &  &  &  & 435126 &  &  & 162453 & 512436 &  &  &  &  &  &  &  &  &  &  &  &  &  &  & 643512 &  &  &  \\
 &  &  &  &  &  & 435162 &  &  & 431265 & 512463 &  &  &  &  &  &  &  &  &  &  &  &  &  &  &  &  &  &  \\
 &  &  &  &  &  & 435612 &  &  & 431625 & 543126 &  &  &  &  &  &  &  &  &  &  &  &  &  &  &  &  &  &  \\[1ex]
$\pi^{[3]}$ & 213456 &  & 213465 & 621534 &  & --- &  &  & 213654 &  &  &  &  &  &  &  & 213546 & 354261 & 563421 &  &  &  &  &  & 213564 &  &  &  \\
 & 342156 &  & 216534 & 625134 &  &  &  &  & 216354 &  &  &  &  &  &  &  & 213645 & 354621 & 621345 &  &  &  &  &  & 215346 &  &  &  \\
 & 342516 &  & 261534 & 652134 &  &  &  &  & 261354 &  &  &  &  &  &  &  & 215364 & 364215 & 634215 &  &  &  &  &  & 251346 &  &  &  \\
 & 342561 &  & 265134 & 653421 &  &  &  &  & 365421 &  &  &  &  &  &  &  & 215634 & 364251 & 634251 &  &  &  &  &  & 356421 &  &  &  \\
 & 345216 &  & 342165 &  &  &  &  &  & 621354 &  &  &  &  &  &  &  & 216345 & 364521 & 634521 &  &  &  &  &  & 521346 &  &  &  \\
 & 345261 &  & 342615 &  &  &  &  &  & 635421 &  &  &  &  &  &  &  & 251364 & 521364 &  &  &  &  &  &  & 534216 &  &  &  \\
 & 345621 &  & 342651 &  &  &  &  &  &  &  &  &  &  &  &  &  & 251634 & 521634 &  &  &  &  &  &  & 534261 &  &  &  \\
 &  &  & 346215 &  &  &  &  &  &  &  &  &  &  &  &  &  & 256134 & 526134 &  &  &  &  &  &  & 534621 &  &  &  \\
 &  &  & 346251 &  &  &  &  &  &  &  &  &  &  &  &  &  & 261345 & 536421 &  &  &  &  &  &  &  &  &  &  \\
 &  &  & 346521 &  &  &  &  &  &  &  &  &  &  &  &  &  & 354216 & 562134 &  &  &  &  &  &  &  &  &  &  \\[1ex]
$\pi^{[4]}$ & 165432 & 621543 & 321465 &  &  & 143256 & 413256 & 561432 & 143265 & 214635 & 254163 & 362514 & 423615 & 514632 & 542361 & 632541 & 145326 & 321546 & 352461 & 451326 & 532641 &  &  &  & 146532 & 321564 & 426531 & 621435 \\
 & 216543 & 625143 & 324165 &  &  & 143526 & 413526 & 562143 & 143625 & 215436 & 254613 & 362541 & 423651 & 521436 & 542631 & 635214 & 145362 & 321645 & 362145 & 451362 & 536214 &  &  &  & 164325 & 325164 & 461532 & 624135 \\
 & 261543 & 625413 & 324615 &  &  & 143562 & 413562 & 562413 & 143652 & 215463 & 261453 & 365214 & 426315 & 521463 & 546132 & 635241 & 145632 & 325146 & 362415 & 451632 & 536241 &  &  &  & 164352 & 325614 & 462531 & 641325 \\
 & 265143 & 651432 & 324651 &  &  & 156432 & 423156 & 564132 & 146325 & 216453 & 264153 & 365241 & 426351 & 524136 & 546231 & 641532 & 214536 & 325416 & 362451 & 452316 & 563214 &  &  &  & 214653 & 325641 & 465132 & 641352 \\
 & 265413 & 652143 & 653214 &  &  & 214356 & 423516 & 564231 & 146352 & 241365 & 264513 & 413265 & 461325 & 524163 & 614532 & 642531 & 214563 & 325461 & 415326 & 452361 & 563241 &  &  &  & 216435 & 352164 & 465231 & 642315 \\
 & 321456 & 652413 & 653241 &  &  & 215643 & 423561 &  & 154326 & 241635 & 321654 & 413625 & 461352 & 524613 & 621453 & 645132 & 241536 & 326145 & 415362 & 452631 & 632145 &  &  &  & 241653 & 352614 & 532146 & 642351 \\
 & 324156 & 654132 &  &  &  & 241356 & 516432 &  & 154362 & 246135 & 326154 & 413652 & 462315 & 541326 & 624153 & 645231 & 241563 & 326415 & 415632 & 456132 & 632415 &  &  &  & 246153 & 352641 & 532416 &  \\
 & 324516 & 654231 &  &  &  & 251643 & 521643 &  & 154632 & 251436 & 326514 & 416325 & 462351 & 541362 & 624513 &  & 245136 & 326451 & 425316 & 456231 & 632451 &  &  &  & 246513 & 356214 & 532461 &  \\
 & 324561 &  &  &  &  & 256143 & 526143 &  & 164532 & 251463 & 326541 & 416352 & 514326 & 541632 & 632154 &  & 245163 & 352146 & 425361 & 532164 &  &  &  &  & 261435 & 356241 & 614325 &  \\
 & 615432 &  &  &  &  & 256413 & 526413 &  & 214365 & 254136 & 362154 & 423165 & 514362 & 542316 & 632514 &  & 245613 & 352416 & 425631 & 532614 &  &  &  &  & 264135 & 416532 & 614352 &  \\[1ex]
$\pi^{[5]}$ & 132456 & 341526 & 132465 & 314652 & 651324 & 142356 &  &  & 132654 & 163524 & 316542 & 514623 & 623514 &  &  &  & 132546 & 153624 & 236451 & 315462 & 364125 & 451623 & 531642 & 623415 & 132564 & 235614 & 416253 & 531462 \\
 & 165423 & 341562 & 165324 & 341265 & 652314 & 156423 &  &  & 136254 & 164253 & 361542 & 541236 & 623541 &  &  &  & 132645 & 156324 & 253164 & 316425 & 364152 & 456123 & 536142 & 623451 & 135264 & 235641 & 416523 & 534126 \\
 & 231456 & 345126 & 231465 & 341625 & 652341 & 412356 &  &  & 136524 & 164523 & 365142 & 541263 & 631542 &  &  &  & 135246 & 163245 & 253614 & 316452 & 364512 & 513264 & 536412 & 631425 & 135624 & 253146 & 461253 & 534162 \\
 & 234156 & 345162 & 234165 & 341652 & 653142 & 516423 &  &  & 142365 & 231654 & 365412 & 541623 & 635142 &  &  &  & 136245 & 231546 & 253641 & 351426 & 412536 & 513624 & 561324 & 631452 & 142653 & 253416 & 461523 & 534612 \\
 & 234516 & 345612 & 234615 & 346125 & 653412 & 561423 &  &  & 142635 & 236154 & 412365 & 546123 & 635412 &  &  &  & 142536 & 231645 & 256314 & 351462 & 412563 & 516324 & 562314 & 634125 & 146253 & 253461 & 465123 & 614235 \\
 & 234561 & 615423 & 234651 & 346152 &  & 564123 &  &  & 146235 & 236514 & 412635 & 613254 & 641253 &  &  &  & 142563 & 235146 & 256341 & 354126 & 415236 & 523164 & 562341 & 634152 & 146523 & 315642 & 513246 & 641235 \\
 & 314256 & 651423 & 265314 & 346512 &  &  &  &  & 154236 & 236541 & 416235 & 613524 & 641523 &  &  &  & 145236 & 235416 & 263145 & 354162 & 415263 & 523614 & 563142 & 634512 & 153246 & 351642 & 523146 &  \\
 & 314526 & 654123 & 265341 & 615324 &  &  &  &  & 154263 & 263154 & 461235 & 614253 & 645123 &  &  &  & 145263 & 235461 & 263415 & 354612 & 415623 & 523641 & 563412 &  & 164235 & 356142 & 523416 &  \\
 & 314562 &  & 314265 & 625314 &  &  &  &  & 154623 & 263514 & 514236 & 614523 &  &  &  &  & 145623 & 236145 & 263451 & 361425 & 451236 & 526314 & 613245 &  & 231564 & 356412 & 523461 &  \\
 & 341256 &  & 314625 & 625341 &  &  &  &  & 163254 & 263541 & 514263 & 623154 &  &  &  &  & 153264 & 236415 & 315426 & 361452 & 451263 & 526341 & 623145 &  & 235164 & 412653 & 531426 &  \\[1ex]
$\pi^{[6]}$ & 134256 &  & 134265 &  &  & 243156 &  &  & 136542 & 264531 & 462135 & 631254 &  &  &  &  & 135426 & 245631 & 425163 & 536124 &  &  &  &  & 135642 & 351624 & 531246 &  \\
 & 134526 &  & 134625 &  &  & 243516 &  &  & 163542 & 312654 & 524316 & 631524 &  &  &  &  & 135462 & 312546 & 425613 & 561342 &  &  &  &  & 153426 & 356124 & 624315 &  \\
 & 134562 &  & 134652 &  &  & 243561 &  &  & 243165 & 316254 & 524361 & 635124 &  &  &  &  & 136425 & 312645 & 452136 & 563124 &  &  &  &  & 153462 & 421653 & 624351 &  \\
 & 265431 &  & 165342 &  &  & 256431 &  &  & 243615 & 316524 & 524631 & 642153 &  &  &  &  & 136452 & 315246 & 452163 & 613425 &  &  &  &  & 246531 & 426153 & 642135 &  \\
 & 312456 &  & 312465 &  &  & 421356 &  &  & 243651 & 361254 & 542136 & 642513 &  &  &  &  & 153642 & 316245 & 452613 & 613452 &  &  &  &  & 264315 & 426513 &  &  \\
 & 625431 &  & 615342 &  &  & 526431 &  &  & 246315 & 361524 & 542163 & 645213 &  &  &  &  & 156342 & 351246 & 456213 & 631245 &  &  &  &  & 264351 & 462153 &  &  \\
 & 652431 &  & 651342 &  &  & 562431 &  &  & 246351 & 365124 & 542613 &  &  &  &  &  & 163425 & 361245 & 513642 &  &  &  &  &  & 312564 & 462513 &  &  \\
 & 654213 &  & 653124 &  &  & 564213 &  &  & 254316 & 421365 & 546213 &  &  &  &  &  & 163452 & 421536 & 516342 &  &  &  &  &  & 315264 & 465213 &  &  \\
 &  &  &  &  &  &  &  &  & 254361 & 421635 & 613542 &  &  &  &  &  & 245316 & 421563 & 531264 &  &  &  &  &  & 315624 & 513426 &  &  \\
 &  &  &  &  &  &  &  &  & 254631 & 426135 & 624531 &  &  &  &  &  & 245361 & 425136 & 531624 &  &  &  &  &  & 351264 & 513462 &  &  \\
 \bottomrule
 \end{tabular}}
\end{sidewaystable}

\clearpage


\begin{sidewaystable}[th!]
\centering
\caption{Simulated rejection rates (size or power) from $10^4$ replications for different dependence tests and time series lengths~$T$, see Section~\ref{Simulation-based Performance Analyses}. Italic numbers are taken from \citet{weiss22}. Highest power among novel tests in bold font.}
\label{tabPerformance}

\smallskip
\resizebox{.9\linewidth}{!}{
\begin{tabular}{lr|cccc@{\qquad}c@{\qquad\quad}cc@{\qquad}ccc@{\qquad}ccc}
\toprule
Process & $T$ & $H$ & $\Delta$ & $\upbeta$ & $\uptau$ & ACF & $\overline{d_{\textup{C}}}$ & $\overline{d_{\textup{K}}}$ & $H_\tau$ & $H_{\textup{C}}$ & $H_{\textup{K}}$ & $\Delta_\tau$ & $\Delta_{\textup{C}}$ & $\Delta_{\textup{K}}$ \\
 \midrule
\iid{} & 100 & \it 0.051 & \it 0.047 & \it 0.042 & \it 0.055 & \it 0.039 & 0.043 & 0.051 & 0.052 & 0.052 & 0.052 & 0.049 & 0.050 & 0.048 \\
 & 250 & \it 0.054 & \it 0.052 & \it 0.059 & \it 0.047 & \it 0.047 & 0.043 & 0.048 & 0.046 & 0.052 & 0.047 & 0.044 & 0.050 & 0.047 \\
 & 500 & \it 0.051 & \it 0.050 & \it 0.050 & \it 0.048 & \it 0.044 & 0.051 & 0.051 & 0.051 & 0.050 & 0.049 & 0.049 & 0.050 & 0.049 \\
 & 1000 & \it 0.049 & \it 0.049 & \it 0.051 & \it 0.051 & \it 0.046 & 0.050 & 0.049 & 0.050 & 0.051 & 0.049 & 0.050 & 0.051 & 0.049 \\
 \midrule
AR$(1)$, & 100 & \it 0.250 & \it 0.256 & \it 0.051 & \it 0.556 & \it 0.998 & 0.576 & \bf 0.684 & 0.505 & 0.482 & 0.627 & 0.551 & 0.545 & 0.662 \\
$+0.5$ & 250 & \it 0.614 & \it 0.630 & \it 0.065 & \it 0.882 & \it 1.000 & 0.912 & \bf 0.973 & 0.919 & 0.865 & 0.962 & 0.932 & 0.885 & 0.967 \\
 & 500 & \it 0.929 & \it 0.934 & \it 0.057 & \it 0.992 & \it 1.000 & 0.996 & \bf 1.000 & 0.999 & 0.991 & \bf 1.000 & 0.999 & 0.992 & \bf 1.000 \\
 & 1000 & \it 0.999 & \it 0.999 & \it 0.057 & \it 1.000 & \it 1.000 & \bf 1.000 & \bf 1.000 & \bf 1.000 & \bf 1.000 & \bf 1.000 & \bf 1.000 & \bf 1.000 & \bf 1.000 \\
 \midrule
AR$(1)$, & 100 & 0.145 & 0.148 & 0.056 & 0.348 & 0.076 & 0.405 & \bf 0.422 & 0.289 & 0.318 & 0.379 & 0.329 & 0.375 & 0.418 \\
$+0.5$, & 250 & 0.334 & 0.346 & 0.067 & 0.647 & 0.143 & 0.757 & \bf 0.796 & 0.650 & 0.682 & 0.761 & 0.686 & 0.711 & 0.783 \\
AOs & 500 & 0.663 & 0.673 & 0.058 & 0.898 & 0.261 & 0.958 & 0.970 & 0.933 & 0.922 & 0.968 & 0.943 & 0.930 & \bf 0.972 \\
 & 1000 & 0.950 & 0.952 & 0.065 & 0.996 & 0.447 & 0.999 & \bf 1.000 & 0.999 & 0.998 & \bf 1.000 & 0.999 & 0.998 & \bf 1.000 \\
 \midrule
AR$(1)$, & 100 & 0.388 & 0.339 & 0.019 & 0.689 & 0.999 & 0.057 & \bf 0.841 & 0.740 & 0.324 & 0.822 & 0.710 & 0.217 & 0.802 \\
$-0.5$ & 250 & 0.855 & 0.835 & 0.023 & 0.976 & 1.000 & 0.161 & \bf 0.997 & 0.990 & 0.704 & 0.996 & 0.989 & 0.638 & 0.996 \\
 & 500 & 0.995 & 0.995 & 0.022 & 1.000 & 1.000 & 0.383 & \bf 1.000 & \bf 1.000 & 0.950 & \bf 1.000 & \bf 1.000 & 0.936 & \bf 1.000 \\
 & 1000 & 1.000 & 1.000 & 0.027 & 1.000 & 1.000 & 0.699 & \bf 1.000 & \bf 1.000 & \bf 1.000 & \bf 1.000 & \bf 1.000 & \bf 1.000 & \bf 1.000 \\
 \midrule
AR$(1)$, & 100 & 0.206 & 0.172 & 0.021 & 0.430 & 0.105 & 0.050 & \bf 0.549 & 0.419 & 0.221 & 0.504 & 0.388 & 0.138 & 0.477 \\
$-0.5$, & 250 & 0.518 & 0.488 & 0.027 & 0.802 & 0.172 & 0.122 & \bf 0.902 & 0.817 & 0.512 & 0.889 & 0.806 & 0.446 & 0.885 \\
AOs & 500 & 0.871 & 0.862 & 0.022 & 0.980 & 0.276 & 0.273 & \bf 0.996 & 0.986 & 0.803 & 0.994 & 0.986 & 0.769 & 0.993 \\
 & 1000 & 0.996 & 0.996 & 0.025 & 1.000 & 0.477 & 0.533 & \bf 1.000 & \bf 1.000 & 0.983 & \bf 1.000 & \bf 1.000 & 0.980 & \bf 1.000 \\
 \midrule
QMA$(1)$ & 100 & \it 0.157 & \it 0.156 & \it 0.168 & \it 0.099 & \it 0.139 & 0.088 & 0.085 & \bf 0.223 & 0.088 & 0.085 & 0.222 & 0.114 & 0.101 \\
 & 250 & \it 0.330 & \it 0.335 & \it 0.369 & \it 0.150 & \it 0.159 & 0.149 & 0.132 & \bf 0.502 & 0.157 & 0.132 & 0.500 & 0.178 & 0.151 \\
 & 500 & \it 0.587 & \it 0.595 & \it 0.590 & \it 0.234 & \it 0.161 & 0.274 & 0.197 & \bf 0.822 & 0.252 & 0.210 & 0.821 & 0.280 & 0.232 \\
 & 1000 & \it 0.888 & \it 0.891 & \it 0.872 & \it 0.431 & \it 0.161 & 0.456 & 0.332 & \bf 0.989 & 0.461 & 0.379 & \bf 0.989 & 0.478 & 0.404 \\
 \midrule
TEAR$(1)$ & 100 & \it 0.246 & \it 0.244 & \it 0.275 & \it 0.096 & \it 0.258 & 0.077 & 0.093 & \bf 0.172 & 0.084 & 0.086 & 0.171 & 0.106 & 0.096 \\
 & 250 & \it 0.534 & \it 0.539 & \it 0.611 & \it 0.137 & \it 0.539 & 0.115 & 0.160 & \bf 0.375 & 0.145 & 0.139 & \bf 0.375 & 0.165 & 0.151 \\
 & 500 & \it 0.824 & \it 0.829 & \it 0.868 & \it 0.219 & \it 0.792 & 0.194 & 0.263 & 0.667 & 0.234 & 0.234 & \bf 0.668 & 0.257 & 0.253 \\
 & 1000 & \it 0.984 & \it 0.985 & \it 0.992 & \it 0.378 & \it 0.969 & 0.324 & 0.461 & \bf 0.938 & 0.417 & 0.426 & 0.937 & 0.434 & 0.440 \\
 \midrule
ARCH$(1)$ & 100 & 0.061 & 0.060 & 0.064 & 0.082 & 0.274 & 0.110 & 0.077 & 0.071 & 0.096 & 0.080 & 0.078 & \bf 0.118 & 0.100 \\
 & 250 & 0.076 & 0.077 & 0.081 & 0.108 & 0.356 & \bf 0.193 & 0.101 & 0.092 & 0.153 & 0.116 & 0.102 & 0.172 & 0.136 \\
 & 500 & 0.094 & 0.096 & 0.067 & 0.141 & 0.422 & \bf 0.361 & 0.135 & 0.143 & 0.255 & 0.192 & 0.159 & 0.276 & 0.217 \\
 & 1000 & 0.153 & 0.156 & 0.077 & 0.251 & 0.493 & \bf 0.594 & 0.198 & 0.264 & 0.459 & 0.357 & 0.285 & 0.476 & 0.382 \\
 \bottomrule
 \end{tabular}}
\end{sidewaystable}

\end{document}